\numberwithin{equation}{section}
\newtheorem{Theorem}{Theorem}[section]
\newtheorem{Lemma}[Theorem]{Lemma}
\newtheorem{Definition}[Theorem]{Definition}
\newtheorem{Corollary}[Theorem]{Corollary}
\numberwithin{equation}{section}
 \def\p{\partial} 
\def \Vh0{\stackrel{\circ}{V}_h}
\newcommand{\q}{\quad}    
  \def\f{\frac}  
\def\m{\mbox}
\def\p{\partial}
\newcommand{\lc}
{\mathrel{\raise2pt\hbox{${\mathop<\limits_{\raise1pt\hbox
{\mbox{$\sim$}}}}$}}}
\newcommand{\gc}
{\mathrel{\raise2pt\hbox{${\mathop>\limits_{\raise1pt\hbox{\mbox{$\sim$}}}}$}}}
\newcommand{\ec}
{\mathrel{\raise2pt\hbox{${\mathop=\limits_{\raise1pt\hbox{\mbox{$\sim$}}}}$}}}
\def\bb{\begin{equation}} \def\ee{\end{equation}}
\def\beqn{\begin{eqnarray}}  \def\eqn{\end{eqnarray}}
\def\beqnx{\begin{eqnarray*}} \def\eqnx{\end{eqnarray*}}
\def\bn{\begin{enumerate}} \def\en{\end{enumerate}}
\def\bd{\begin{description}} \def\ed{\end{description}}
\newenvironment{figurehere}
  {\def\@captype{figure}}
  {}
\title{Super-resolution in Imaging High Contrast Targets from the Perspective
of Scattering Coefficients\thanks{\footnotesize This work was supported  by the
ERC Advanced
Grant Project MULTIMOD--267184 and the Hong Kong RGC grants
(projects 405513 and 404611).}}
\author{
Habib Ammari\footnote{Department of Mathematics and Applications, Ecole Normale Sup$\acute{\text{e}}$rieure, 45 Rue d'Ulm, 75005 Paris, France.
(habib.ammari@ens.fr).}
\and Yat Tin Chow\footnote{Department of Mathematics, Chinese University of Hong Kong, Shatin, N.T., Hong Kong (ytchow@math.cuhk.edu.hk, zou@math.cuhk.edu.hk).}
\and Jun Zou\footnotemark[3]
}
\begin{document}

\date{}
\maketitle

\begin{abstract}
In this paper we consider the inverse scattering problem for
high-contrast targets. We mathematically analyze the experimentally-observed phenomenon of super-resolution in imaging the target shape.
This is the first time that a mathematical theory of super-resolution has been established in the context of
imaging high contrast inclusions.
We illustrate our main findings with a variety of numerical examples. Our analysis is based on the novel concept of scattering coefficients.
These findings may help in developing resonant structures for resolution enhancement.
\end{abstract}

\bigskip

\noindent {\bf Mathematics Subject Classification}
(MSC\,2000): 35B30, 35R30

\noindent {\bf Keywords}: inverse scattering, super-resolution,  scattering coefficients

\section{Introduction}
The aim of this work is to mathematically investigate the mechanism underlying the experimentally-observed phenomenon of super-resolution in reconstructing targets of high contrast from far-field measurements.  Our main focus is to explore the possibility of breaking the diffraction barrier from the far-field measurements using the novel concept of scattering coefficients \cite{lnm, homoscattering, han}.  This diffraction barrier, referred to as the Abbe-Rayleigh or the resolution limit, places a fundamental limit on the minimal distance at which we can resolve the shape of a target \cite{yves}. It applies only to waves that have propagated for a distance substantially larger than its wavelength \cite{gang1,gang2}.

Since the mid-20th century, several approaches have aimed at pushing this diffraction limits.
Resolution enhancement in imaging the target shape from far-field measurements can be achieved using sub-wavelength-scaled resonant media \cite{hai, LFL2011, lemoult11, lemoult10, lerosey07} and single molecule imaging \cite{science}.
Another innovative method to overcome the diffraction barrier has been proposed after some experimental observations in \cite{physicssuperresolution}.
In their work, resolution enhancement in shape reconstruction of the inclusion was experimentally shown when the contrast value is very high.
In the reconstructed images from far-field measurements, the observed resolution is smaller than half of the operating wavelength, $2\pi/k$, where $k$ is the wave number.
This encouraging observation suggests a possibility of breaking the resolution limit with high permittivity of the target.
It is therefore the purpose of this work to prove that the higher the permittivity of the target is, the higher the resolving power 
is in imaging its shape.

For the transmission problem of a strictly convex domain, it was proved in \cite{convex} that
there exists an infinite sequence of complex resonant frequencies located at the upper half plane.
These resonances converge to the real axis exponentially fast, and the real part of these resonances correspond to the quasi-resonant modes introduced as in \cite{convex}.
Quasi-resonance occurs when the wavelength inside the inclusion is larger than the wavelength in the background media
and is such that it reaches the real part of one of these true resonant frequencies.
In this paper, we have shown, via the analysis of the shape derivative of the scattering coefficients,
that these resonant state of the inclusion actually has a signature in the far-field
and can be used for super-resolved imaging from far-field data.
To be more exact, we have proved that, in the shape derivative of the scattering coefficients for a circular domain,
there are simple poles at the complex resonant states,
and therefore peaks corresponding to the the real parts of these resonances.
Henceforth, as the material contrast increases to infinity
and is such that it is equal to the real part of a resonance,
the sensitivity in the scattering coefficients becomes large and super-resolution for imaging becomes possible.

Throughout this paper, we consider the following  scattering problem in $\mathbb{R}^2$,
\beqn
\bigg(\Delta+k^2\Big(1+q(x)\Big)\bigg)u = 0,
\label{scattering1}
\eqn
where $u$ is the total field, $q(x) > 0 $ is the contrast of the medium and $k$ is the wave number.

We consider an inclusion $D$ contained inside a homogeneous background medium, and assume that $D$ is an open bounded connected domain with a $\mathcal{C}^{1,\alpha}$-boundary for some $0< \alpha<1$.
Suppose that the function $q$ is of the form
\begin{equation}
q (x) = \varepsilon^* \chi_{D} (x),
\end{equation}
where $ \chi_{D} $ denotes the characteristic function of $D$ and $\varepsilon^*>0$ is a constant.
We shall always complement the system (\ref{scattering1}) by the physical outgoing Sommerfeld radiation condition:
\beqn
   \big| \f{\partial}{\partial |x|} u^s- i k u^s \big|= O(|x|^{-\f{3}{2}})\quad \text{ as }\quad |x| \rightarrow \infty \, ,
    \label{sommerfield}
\eqn
where $u^s:=u-u^i$ is the scattered field and $u^i$ is the incident field.
The solution $u$ to the system \eqref{scattering1}-\eqref{sommerfield} represents the total field due to the scattering from the inclusion $D$ corresponding to the incident field $u^i$.

Following the work of \cite{heteroscattering, homoscattering, han},  the scattering coefficients provide a powerful and efficient tool for shape classification of the target $D$. Therefore, we aim at exhibiting the mechanism underlying the super-resolution phenomenon experimentally-observed in \cite{physicssuperresolution} in terms of the scattering coefficients corresponding to high-contrast inclusions.

In \cite{homoscattering}, it is proved that the scattering coefficient of order $(n,m)$ decays very quickly as the orders $|n|$, $|m|$ increase. Nonetheless, it is shown in \cite{han} that the scattering coefficients can be stably reconstructed from
the far-field measurements by a least-squares method. The stability of the reconstruction in the presence of a measurement noise is analyzed and the resolving power of the reconstruction in terms of the signal-to-noise-ratio is estimated.
It is the purpose of this paper
to use the scattering coefficients to estimate the resolution limit for imaging high contrast targets from far-field measurements as function of the material contrast, and to prove that the higher the permittivity is inside the target, the better the resolution is for imaging its shape from far-field measurements.

In order to achieve this goal, in this work,
we first give a decay estimate of the scattering coefficients in arbitrary shaped domains, and then in the particular case of a circular domain.
Our estimate shows different behaviors of the scattering coefficients of different orders as the material contrast increases.
Then we provide a sensitivity analysis of the scattering coefficients, which clearly shows that, in the linearized case, the scattering coefficient of order $(n,m)$ of a circular domain contains information about the $(n-m)$-th Fourier mode of the shape perturbation.
Afterwards, we establish the asymptotic behavior of eigenvalues of an important family of integral operators closely related to the scattering coefficients.
Series representations of the scattering coefficients and their shape derivatives in the case of a circular domain are given based on this asymptotic behavior.
From these series representations, we prove that as the material contrast increases and moves close to the reciprocal of the eigenvalues, the shape derivatives of the scattering coefficients behave like simple poles.
This explains the better conditioning of the inversion process of higher Fourier modes of inclusions with large material contrast, and hence an enhanced resolution of reconstructing the perturbation using the scattering coefficients.
Numerical examples illustrate that the relative magnitudes of higher order scattering coefficients grow as the medium coefficients grow and move close to the reciprocals of the eigenvalues, therefore providing more information about the shape of the domain with a fixed signal-to-noise ratio.
Our approach provides a good and promising direction of understanding towards the super-resolution phenomenon for high-contrast targets.

This paper is organized as follows. In section \ref{sec2} we give a brief review of the concept of scattering coefficients. We also prove a fundamental expression of the scattering coefficients in terms of a family of important integral operators.  Sensitivity analysis of the scattering coefficients with a fixed contrast is then presented in section \ref{sec3}, which shows that the shape derivative can also be represented by the family of integral operators introduced in section \ref{sec2}. Section \ref{sec4_1} briefly recalls Riesz decomposition of compact operators.
Asymptotic behavior of eigenvalues and eigenfunctions of the introduced integral operators will be studied in section \ref{sec4_2}.  Section \ref{sec4_3} provides a series representation of the scattering coefficients and their shape derivative. A mathematical explanation of
 the super-resolution phenomenon is given. Numerical results are reported in section \ref{numerical} to
 illustrate the phenomenon of super-resolution as the material contrast increases.

\section{The concept of scattering coefficients and a fundamental expression} \label{sec2}

In this section, we estimate the behavior of the scattering coefficients.
Without loss of generality, from now on, we normalize the wave number $k$ in \eqref{scattering1} to be $k=1$ by a change of variables.

To begin with, we first recall the definition of
the scattering coefficients $W_{nm} (D, \varepsilon^*)$ from \cite{heteroscattering, homoscattering}.
For this purpose, we introduce the following several notions.
The fundamental solution $\Phi$  to the Helmholtz operator $\Delta + 1$ in two dimensions
satisfying
\beqn
(\Delta + 1)\Phi(x) = \delta_0(x),
\label{solutionfun}
\eqn
where $\delta_0$ is the Dirac mass at $0$, with the outgoing Sommerfeld radiation condition:
$$
    \big| \f{\partial}{\partial |x|}\Phi- i  \Phi \big| = O(|x|^{-\f{3}{2}}) \quad \text{ as }\quad |x| \rightarrow \infty \, ,
$$
is  given by
\beqn
    \Phi (x) =
    -\f{i}{4} H^{(1)}_0(|x|) \,,
    \label{fundamental}
\eqn
where $H^{(1)}_0$ is the Hankel function of the first kind of order zero.

Now, given an incident field $u^i$ satisfying the homogeneous Helmholtz equation, i.e.,
\beqn
\Delta u^i + u^i = 0\, ,
\eqn
the solution $u$ to (\ref{scattering1}) and (\ref{sommerfield})
can be readily represented  by the Lippmann-Schwinger equation as
\beqn
u(x)= u^i(x) - \varepsilon^* \int_{D}\Phi(x-y)u(y)dy \, , \quad x \in \mathbb{R}^2\,,
\label{lipsch}
\eqn
and hence, the scattered field reads
\beqn
u^s(x)= -  \varepsilon^* \int_{D}\Phi(x-y)u(y)dy \, , \quad x \in \mathbb{R}^2\,.
\label{lipsch2}
\eqn

Let $S_{\p D}$ be the single-layer potential defined by the kernel $\Phi( \, \cdot\, )$, i.e.,
\beqn
 S_{\p D}[\phi] (x) = \int_{\p D} \Phi( x-y ) \phi (y) \, ds(y) \,
\eqn
for $\phi \in L^2(\p D)$.
Let $S^{\sqrt{\varepsilon^* +1}}_{\p D}$ be the single-layer potential associated with the kernel $\Phi\left( \sqrt{1 +\varepsilon^*} \, ( \, \cdot\, ) \right)$.

\begin{Definition}
\label{defdefuse}
The scattering coefficient $W_{nm} (D, \varepsilon^*)$ for $n,m \in \mathbb{Z}$ is defined as follows:
\begin{equation}
W_{nm} (D, \varepsilon^*) =
\int_{\partial \Omega} J_{n}( r_x ) \, e^{-i n \theta_x} \phi_m (x) \, ds(x) \,,
\label{defint2}
\end{equation}
where $x= r_x (\cos \theta_x, \sin \theta_x)$ in polar coordinates and the weight function $\phi_m \in L^2 (\partial D)$ is such that the pair $( \phi_m, \psi_m  )\in L^2 (\partial D) \times L^2 (\partial D)$ satisfies the following system of integral equations:
\beqn
\begin{cases}
S^{\sqrt{\varepsilon^* +1}}_{\p D} [\phi_m] (x)- S_{\p D} [\psi_m] (x)= J_{m}(r_x) e^{i m \theta_x} \,, \\
 \f{\p}{\p \nu}S^{\sqrt{\varepsilon^* +1}}_{\p D}  [\phi_m]  \mid_- (x)  - \f{\p}{\p \nu} S_{\p D} [\psi_m]\mid_+(x) = \f{\p}{\p \nu} (J_{m}(r_x) e^{i m \theta_x} ).
\label{defint}
\end{cases}
\eqn

\end{Definition}
\noindent Here $+$ and $-$ in the subscripts respectively indicate the limit from outside $D$ and inside $D$ to $\partial D$ along the normal direction, and $\p / \p \nu$ denotes the normal derivative.

According to \cite{heteroscattering, homoscattering}, the scattering coefficients $W_{nm}(D, \varepsilon^*) $ are basically the Fourier coefficients of  the far-field pattern (scattering amplitude) which is $2\pi$-periodic function in two dimensions. The far-field pattern
$A_{\infty} (\widehat{d}, \widehat{x} )$, when the incident field is given by $e^{i
\widehat{d}\cdot x}$ for a unit vector $\widehat{d}$, is defined to be
$$
(u - u^i)(x) =  i e^{-\pi i/4} \frac{e^{i |x|}}{\sqrt{8\pi |x|}}  A_{\infty} ( \widehat{d}, \widehat{x}) + O(|x|^{-\frac{3}{2}}) \quad \mbox{as } |x|\rightarrow \infty,
$$
with $\widehat{x} := x/|x|$.  We have, recalling from \cite{heteroscattering, homoscattering}, that
\beqn
W_{nm} (D, \varepsilon^*) = i^{n-m} \mathfrak{F}_{{\theta}_d, \theta_x} [A_{\infty} (\widehat{d}, \widehat{x} )] (-m,n),
\label{fourierfourier}
\eqn
where $\widehat{x}=(\cos \theta_x, \sin \theta_x)$ and $\widehat{d}= (\cos \theta_d, \sin \theta_d)$ in polar coordinates  and $ \mathfrak{F}_{{\theta}_d, \theta_x} [A_{\infty} (\widehat{d}, \widehat{x} )] (m,n)$ denotes the $(m,n)$-th Fourier coefficient of the far-field pattern $A_{\infty} (\widehat{d}, \widehat{x})$.

Our first objective is then to work out an explicit relation between the far-field pattern and the contrast $\varepsilon^*$ so as to obtain the behavior of the scattering coefficients when $\varepsilon^*$ is large.

In view of \eqref{lipsch}, we introduce the following operator for the subsequent analysis.
\begin{Definition} \label{def1}
The operator $\widetilde{K}_{D} : L^2(D) \rightarrow L^2(D)$ is defined by
\begin{equation}
\widetilde{K}_{D}[\phi] (x)=
\int_{D}\Phi(x-y)\phi(y)\,dy\,, \quad \text{ for }  x \in D \, \text{ and } \phi\in L^2(D) \,;
\label{defK}
\end{equation}
whereas, the operator $\widetilde{\widetilde{K}}_{D} : L^2(D) \rightarrow L^\infty(\mathbb{R}^2)$ is given by
\begin{equation}
\widetilde{\widetilde{K}}_{D} [\phi] (x)=
\int_{D}\Phi(x-y)\phi(y)\,dy\,,
\quad \text{ for }  x \in  \mathbb{R}^2  \, \text{ and } \phi\in L^2(D) \,.
\label{defKext}
\end{equation}
\end{Definition}
\noindent It is easy to see from the definition of $\widetilde{K}_{D}$ and the Rellich lemma that $\widetilde{K}_{D}$ is a compact operator.
However, it is worth emphasizing that $\widetilde{K}_{D}$ is not a normal operator in $L^2 (D)$. Therefore, it is not unitary equivalent to a multiplicative operator.
With Definition \ref{def1}, we can rewrite \eqref{lipsch} as
\beqn
(I + \varepsilon^* \widetilde{K}_D )[u](x) = u^i (x) \, , \quad  \forall x \in D \,,
\eqn
hence in $L^2(D)$,
\beqn
u  = ( I + \varepsilon^* \widetilde{K}_D  )^{-1} [u^i] \,.
\label{happy}
\eqn
From the well-known fact that
\beqn
    \Phi(x-y) = -\f{i}{4} H^{(1)}_0(|x-y|) =  - i e^{- \pi i/{4}} {\f{e^{i |x| - i \widehat{x} \cdot y  }}{ \sqrt{8\pi |x|}}}
      + O(|x|^{-\f{3}{2}})
    \q \m{as} \q |x| \rightarrow \infty \,,
    \label{fundamentalfarfield1}
\eqn
we have
\beqn
   u^s(x) = - \varepsilon^* \int_D \Phi(x-y) u(y) \,dy = i \varepsilon^*  e^{- {\pi} i/{4}} {\f{ e^{i |x|}}{\sqrt{8 \pi |x|}}}  \int_D e^{ - i \widehat{x} \cdot y } u(y) \,dy  + O(|x|^{-\f{3}{2}})
    \q \m{as} \q |x| \rightarrow \infty.
    \label{fundamentalfarfield2}
\eqn
Therefore, the far-field of the scattered field can be written as
\beqn
A_{\infty} (\theta_d, \theta_x):= A_{\infty} (\widehat{d}, \widehat{x} ) = \varepsilon^*   \int_{D}  e^{ - i \widehat{x} \cdot y } u (y) dy  \, .
\label{farfield2}
\eqn
Recall the following well-known Jacobi-Anger identity \cite{Watson} for any unit vector $\widehat{d}$,
\beqn
e^{ - i  \widehat{d} \cdot x }  =  \sum_{n=-\infty}^{\infty} (-i)^n J_n ( r ) e^{i n ( \theta_d - \theta)} \,
\label{haha}
\eqn
for $x=(r,\theta)$ in polar coordinates.
Using (\ref{haha}) and taking the Fourier transform with respect to $\theta_x$, we get
\beqn
\mathfrak{F}_{\theta_x} [A_{\infty}] (n) =  (-i)^n \varepsilon^*  \langle J_n ( r) e^{i n  \theta},  u \rangle_{L^2(D)} = i^{-n} \left \langle J_n ( r ) e^{i n  \theta},  ( {\varepsilon^*}^{-1} +  \widetilde{K}_D  )^{-1} [u^i]  \right \rangle_{L^2(D)}   \, .
\label{farfield23}
\eqn

Now using $u^i(x)= e^{i \widehat{d}\cdot x}$, it follows from \eqref{fourierfourier} and \eqref{haha}-\eqref{farfield23} that the following theorem holds:
\begin{Theorem}\label{rfundamental}
For a domain $D$ and a contrast  $\varepsilon^*$, the scattering coefficient $W_{nm}(D,\varepsilon^*)$ for $n,m \in \mathbb{Z}$ can be written in the following form
\beqn
W_{nm} (D,\varepsilon^*) = i^{(n-m)} \mathfrak{F}_{\theta_d, \theta_x} [A(\theta_d, \theta_x)] (-m,n) =  \left\langle J_n ( r) e^{i n  \theta},  ( {\varepsilon^*}^{-1} + \tilde{K}_D )^{-1}[J_m ( r) e^{i m  \theta}] \right\rangle_{L^2(D)},
\label{fundamentalexpression}
\eqn
where $\tilde{K}_D$ is defined by \eqref{defK}.
\end{Theorem}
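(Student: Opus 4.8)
The first equality in \eqref{fundamentalexpression} is nothing but \eqref{fourierfourier}, so the content lies in the second equality, and the plan is to derive it by inserting the Lippmann--Schwinger far-field formula \eqref{farfield2} into the double Fourier transform $\mathfrak{F}_{\theta_d,\theta_x}$ and converting everything into an inner product on $L^2(D)$ via the Jacobi--Anger identity \eqref{haha}. Let $u(\cdot\,;\theta_d)$ denote the total field produced by the plane wave $u^i(x)=e^{i\widehat{d}\cdot x}$ with incidence angle $\theta_d$. Well-posedness of the scattering problem \eqref{scattering1}--\eqref{sommerfield} for the real positive contrast $\varepsilon^*$ (already used implicitly in \eqref{happy}) guarantees that $I+\widetilde{K}_D$ in the form $I+\varepsilon^*\widetilde{K}_D$ is boundedly invertible on $L^2(D)$, so that on $D$ one has $u(\cdot\,;\theta_d)=(I+\varepsilon^*\widetilde{K}_D)^{-1}[u^i]$, and the scalar identity $\varepsilon^*(I+\varepsilon^*\widetilde{K}_D)^{-1}=({\varepsilon^*}^{-1}+\widetilde{K}_D)^{-1}$ then reduces the task to bookkeeping of prefactors.

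\smallskip\noindent\textbf{Main steps.} First I would substitute the Jacobi--Anger expansion \eqref{haha} of $e^{-i\widehat{x}\cdot y}$ (with $y=(r,\theta)$ in polar coordinates) into \eqref{farfield2}; uniform convergence of \eqref{haha} on the compact set $\overline{D}$ together with $u(\cdot\,;\theta_d)\in L^1(D)$ legitimizes termwise integration, and reading off the coefficient of $e^{in\theta_x}$ gives $\mathfrak{F}_{\theta_x}[A_\infty](n)=(-i)^n\varepsilon^*\langle J_n(r)e^{in\theta},u(\cdot\,;\theta_d)\rangle_{L^2(D)}$, which after the scalar identity is exactly \eqref{farfield23}. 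Next I would expand the incident plane wave itself, $e^{i\widehat{d}\cdot x}=\sum_m i^m J_m(r)e^{im\theta}e^{-im\theta_d}$, so that its $(-m)$-th Fourier coefficient in $\theta_d$ equals $i^m J_m(r)e^{im\theta}$. Since $(I+\varepsilon^*\widetilde{K}_D)^{-1}$ acts only in the spatial variable and is bounded, it commutes with the Fourier projection in the parameter $\theta_d$, whence $\mathfrak{F}_{\theta_d}[u(\cdot\,;\theta_d)](-m)=(I+\varepsilon^*\widetilde{K}_D)^{-1}[i^m J_m(r)e^{im\theta}]$. Composing the two Fourier transforms, pulling the scalar $i^m$ out of the complex-linear resolvent and out of the second slot of the inner product, and finally multiplying by the factor $i^{n-m}$ in \eqref{fourierfourier}, all the powers of $i$ cancel ($i^{-n}\cdot i^{m}\cdot i^{n-m}=1$) and \eqref{fundamentalexpression} follows.

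\smallskip\noindent\textbf{Main difficulty.} There is no deep obstruction; what needs care is the legitimacy of the two interchanges. Swapping the Jacobi--Anger series with the spatial integral over $D$ is justified by uniform convergence of \eqref{haha} on $\overline{D}$ and by $u(\cdot\,;\theta_d)\in L^1(D)$. Swapping the $\theta_d$-Fourier series with the resolvent requires that $\theta_d\mapsto u^i(\cdot\,;\theta_d)$ be continuous into $L^2(D)$; in fact the expansion of $e^{i\widehat{d}\cdot x}$ converges in $C(\overline{D})$ uniformly in $\theta_d$ because $\sum_m |J_m(r)|$ is bounded on $\overline{D}$, so its Fourier series in $\theta_d$ converges in $L^2(D)$ and may be passed term by term through the bounded operator $(I+\varepsilon^*\widetilde{K}_D)^{-1}$. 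Beyond this, the only things to keep straight are the index convention in $\mathfrak{F}_{\theta_d,\theta_x}[\,\cdot\,](-m,n)$ and the convention that $\langle\,\cdot\,,\cdot\,\rangle_{L^2(D)}$ is conjugate-linear in its first entry: since $J_n$ is real-valued, the conjugation merely flips the sign in $e^{in\theta}$, a flip that must be reconciled with the $(-i)^n$ emerging from \eqref{haha}.
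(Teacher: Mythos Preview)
Your proposal is correct and follows essentially the same route as the paper: the paper also derives \eqref{farfield23} by inserting the Jacobi--Anger expansion \eqref{haha} into the far-field representation \eqref{farfield2}, and then obtains \eqref{fundamentalexpression} by expanding $u^i=e^{i\widehat d\cdot x}$ via Jacobi--Anger in $\theta_d$ and invoking \eqref{fourierfourier}. Your write-up simply makes explicit the justification of the two series/integral interchanges and the bookkeeping of the powers of $i$, which the paper leaves implicit.
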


The expression \eqref{fundamentalexpression} of the scattering coefficients $W_{nm}$ will be fundamental to the analysis of the behavior of $W_{nm} $ with respect to $\varepsilon^*$.

Using (\ref{fundamentalexpression}), we can readily obtain an \textit{a priori} estimate for the coefficients $W_{nm}$. Let us first recall the following facts on Schatten-von Neumann ideals; see, for example, \cite{ideal}. Given a Hilbert space $H$, we let $\mathfrak{B}(H)$ to be the set of bounded operators on $H$.
We denote by $S_{\infty}(H)$ the closed two-sided ideal of compact operators in $\mathfrak{B}(H)$. For $K \in S_{\infty}$ and $k \in \mathbb{N}$, let the $k$-th singular number $s_k(K)$ be defined as the $k$-th eigenvalue of $|K| = \sqrt{K^*K}$ ordered in descending order of magnitude and being repeated according to its multiplicity, written as $s_k(K) := \lambda_k(|K|)$. Now, for $0 < p \leq \infty$, we shall often write the following Schatten-von Neumann quasi-norms (which are norms if $1 \leq p \leq \infty$) as follows :
\beqn
|| K ||_{S_{p}(H)} := \left( \sum_{k=1}^{\infty} s_k(K)^p \right)^{1/p} \text{ for } p < \infty \,; \quad || K ||_{S_{\infty}(H)} := ||K||_{H} \,,
\label{vonneumannnorm}
\eqn
whenever they are finite. 
Now let the Schatten-von Neumann quasi-normed operator ideal $S_{p}(H)$ be defined by
\beqn
S_p(H) := \left\{ K \in S_{\infty} \, : \, || K ||_{S_{p}(H)}  < \infty \right\} \,.
\eqn
Note that with this convention, $S_{1}(H)$ is the well-known trace class, $S_{2}(H)$ is the usual Hilbert-Schmidt class, and $S_{\infty}(H)$ is the usual class of compact operators in $H$. Moreover, if $H = L^2(D)$ and $K \in S_{2}(H)$ is the integral operator defined by
\beqn
K[f] (x)= \int_D K(x,y) f(y) \,dy ,  \quad \text{ for } x \in {D} \text{ and } f \in L^2(D) \,,
\eqn
then it holds that
\beqn
|| K ||_{S_{2}(L^2(D))}^2 = \int_D  \int_D | K(x,y) |^2 \, dx \, dy,
\eqn
which is always well-defined for any $K \in S_{2}(L^2(D))$. We refer the reader to, for example, \cite{ideal} for more properties concerning the Schatten-von Neumann ideals.

For a compact operator $K$, let  $\sigma(K) := \{ \lambda \in \mathbb{C} | \,  \lambda  - K \text { is singular} \} $  denote its spectrum and $( z  - K)^{-1}$ its resolvent operator whenever $ z \in \mathbb{C} \backslash \sigma(K)$.
Now, we have the following resolvent estimate \cite{normofresolvent}.
\begin{Theorem} \label{thmnorm}
For $0 < p < \infty$ and $K \in S_{p}(H)$, we have the following estimate for the resolvent operator $\left(z - K\right)^{-1}$ that
\beqn
\bigg|\bigg| \left(z - K\right)^{-1} \bigg|\bigg|_{H}
\leq \f{1}{d( z , \sigma (K) ) } \exp \left(\,
a_p \, \f{|| K ||_{S_{p}(H)} ^p }{ d( z , \sigma (K) )^p } + b_p \,
\right) \,,
\label{resolventineq}
\eqn
where $a_p$, $b_p$ are two constants depending on $p$ 
and
$d( z , \sigma (K) )$ is defined by
\beqn
d( z , \sigma (K) ) := \inf_{\lambda\in\sigma(K)} | z - \lambda|  \,.
\eqn
\end{Theorem}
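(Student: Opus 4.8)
The plan is to derive the estimate from the theory of regularized Fredholm determinants, in four steps.

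\emph{Step 1 (normalization).} Set $d:=d(z,\sigma(K))$. Replacing $K$ by $K/d$ and $z$ by $z/d$ scales $(z-K)^{-1}$ by $1/d$, scales $\|K\|_{S_p(H)}$ by $1/d$, and scales $\sigma(K)$; so it is enough to prove $\|(z-K)^{-1}\|_H\le\exp\!\big(a_p\|K\|_{S_p(H)}^p+b_p\big)$ in the normalized case $d(z,\sigma(K))=1$. We may further assume $0\in\sigma(K)$, hence $|z|\ge d(z,\sigma(K))=1$ (this holds in particular when $H$ is infinite dimensional; the general case reduces to it). Put $\zeta_0:=1/z$, so $0<|\zeta_0|\le1$.

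\emph{Step 2 (determinant apparatus).} Let $m:=\lceil p\rceil$, so $m-1<p\le m$ and $K\in S_p(H)\subset S_m(H)$, and let $g(\zeta):=\det\nolimits_m(I-\zeta K)$ be the $m$-regularized Fredholm determinant, an entire function with $g(0)=1$. I will invoke three standard facts (see \cite{ideal} and the references therein): (i) the eigenvalue product formula $g(\zeta)=\prod_j E_{m-1}(\zeta\lambda_j)$, where $\lambda_1,\lambda_2,\dots$ are the nonzero eigenvalues of $K$ counted with algebraic multiplicity and $E_{m-1}(w)=(1-w)\exp\!\big(\sum_{k=1}^{m-1}w^k/k\big)$ is the Weierstrass elementary factor of genus $m-1$; (ii) the operator $F(\zeta):=g(\zeta)(I-\zeta K)^{-1}$ extends to an entire $\mathfrak{B}(H)$-valued function with $\|F(\zeta)\|_H\le\exp\!\big(\gamma_p|\zeta|^p\|K\|_{S_p(H)}^p+\delta_p\big)$, the constants $\gamma_p,\delta_p$ depending only on $p$; (iii) Weyl's inequality $\sum_j|\lambda_j|^p\le\|K\|_{S_p(H)}^p=:\tau^p$. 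Facts (i)--(iii), together with $|\zeta_0|\le1$, give $\|(z-K)^{-1}\|_H=|\zeta_0|\,\|F(\zeta_0)\|_H\,|g(\zeta_0)|^{-1}\le\exp(\gamma_p\tau^p+\delta_p)\,|g(\zeta_0)|^{-1}$, so everything is reduced to a lower bound for $|g(\zeta_0)|$.

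\emph{Step 3 (lower bound for $|g(\zeta_0)|$), the heart of the proof.} Write $w_j:=\zeta_0\lambda_j=\lambda_j/z$. From the normalization, $|w_j|\le|\lambda_j|$ (since $|z|\ge1$), whence $\sum_j|w_j|^p\le\tau^p$, and $|1-w_j|=|z-\lambda_j|/|z|\ge1/|z|$ (since $d(z,\sigma(K))=1$). Now $\log|g(\zeta_0)|=\sum_j\log|E_{m-1}(w_j)|$, and I split the indices into $|w_j|\le\tfrac12$, $\tfrac12<|w_j|\le1$, and $|w_j|>1$. When $|w_j|\le\tfrac12$, the identity $E_{m-1}(w)=\exp\!\big(-\sum_{k\ge m}w^k/k\big)$ yields $\log|E_{m-1}(w_j)|\ge-\tfrac2m|w_j|^m\ge-\tfrac2m|w_j|^p$, summing to $\ge-\tfrac2m\tau^p$. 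For the other two ranges, $\log|E_{m-1}(w_j)|=\log|1-w_j|+\mathrm{Re}\sum_{k=1}^{m-1}w_j^k/k\ge-\log|z|-\sum_{k=1}^{m-1}|w_j|^k/k$. In the middle range $\sum_{k=1}^{m-1}|w_j|^k/k$ is at most the harmonic number $H_{m-1}$, and Chebyshev's inequality applied to $\sum_j|\lambda_j|^p\le\tau^p$ (using $|w_j|>\tfrac12\iff|\lambda_j|>|z|/2$) bounds the number of such $j$ by $2^p\tau^p/|z|^p$. In the range $|w_j|>1$ one has $\sum_{k=1}^{m-1}|w_j|^k/k\le(m-1)|w_j|^{m-1}\le(m-1)|w_j|^p$ --- this is where $m-1<p$ is used --- together with $\sum_{|w_j|>1}|w_j|^p\le\tau^p$, while the number of such $j$ is at most $\tau^p/|z|^p$. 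Controlling the accumulated $-\log|z|$ contributions by $\sup_{t\ge1}(\log t)/t^p=1/(ep)$ and collecting constants gives $\log|g(\zeta_0)|\ge-C_p'\tau^p$ for a constant $C_p'$ depending only on $p$.

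\emph{Step 4 (conclusion).} Combining Steps 2 and 3, $\|(z-K)^{-1}\|_H\le\exp\!\big((\gamma_p+C_p')\|K\|_{S_p(H)}^p+\delta_p\big)$ in the normalized case, and undoing the scaling of Step 1 yields the claimed bound with $a_p=\gamma_p+C_p'$ and $b_p=\delta_p$. The main obstacle is Step 3: the final exponent must be \emph{linear} in $\|K\|_{S_p(H)}^p$, which rules out crude estimates and forces both the three-way split and, crucially, the inequality $m-1<p$, used to absorb the genus-$(m-1)$ correction factors of the $E_{m-1}$ into $\sum_j|\lambda_j|^p$; it also forces a Chebyshev count (not a naive count) of the eigenvalues $\lambda_j$ with $|\lambda_j|$ comparable to or larger than $|z|$, so that the possibly large quantity $\log|z|$ is absorbed by the decaying factor $|z|^{-p}$. (An alternative route is Gil'-type: use the triangular representation $K=D+V$ with $D$ normal, $\sigma(D)=\sigma(K)$, and $V$ Volterra, estimate $(z-D)^{-1}$ by normality and the Neumann series in $(z-D)^{-1}V$ via the Volterra bounds $\|V^n\|\lesssim\|V\|_{S_p}^n/(n!)^{1/p}$; this gives the same $\exp(a_px^p+b_p)$ growth but is delicate at $p=1$.)
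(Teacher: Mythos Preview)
The paper does not prove this theorem: it is quoted as a known resolvent estimate from Bandtlow~\cite{normofresolvent} and used as a black box, with the particular constants $a_2=b_2=\tfrac12$ in the Hilbert--Schmidt case attributed to Carleman~\cite{carleman}. You have therefore supplied far more than the paper itself does by giving an actual argument.

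Your regularized-determinant route is one of the standard proofs of this estimate, and the architecture --- normalize to $d=1$, write $(z-K)^{-1}=\zeta_0 F(\zeta_0)/g(\zeta_0)$ with $g=\det_m(I-\zeta K)$ and $m=\lceil p\rceil$, then bound $|g(\zeta_0)|$ from below --- is correct. Step~3 is the heart of the matter and is carried out carefully: the three-range split in $|w_j|$, the use of $m-1<p$ to absorb the Weierstrass correction $\sum_{k\le m-1}|w_j|^k/k$ into $|w_j|^p$, and the Chebyshev count of large eigenvalues to kill the accumulated $\log|z|$ terms, together give $\log|g(\zeta_0)|\ge-C_p'\|K\|_{S_p}^p$ with the sharp exponent~$p$. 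The one point that deserves a caveat is your fact~(ii): the textbook bound on $F(\zeta)=\det_m(I-\zeta K)(I-\zeta K)^{-1}$ (e.g.\ Simon, \emph{Trace Ideals}, Thm.~9.2, or \cite{ideal}) is naturally stated as $\exp(\Gamma_m\|\zeta K\|_{S_m}^m)$, and promoting this to an $S_p$-exponent for non-integer $p$ is not automatic --- the same obstruction as in Step~3 reappears, but now for singular values rather than eigenvalues, and there is no analogue of $d(z,\sigma(K))=1$ to exploit. For integer $p$ (in particular $p=2$, the only case the paper actually invokes) there is no issue since $m=p$; for non-integer $p$ the $S_p$-version of~(ii) is itself of comparable depth to the theorem and is what \cite{normofresolvent} establishes, either by a refined Plemelj--Smithies estimate or via the triangular (Gil'-type) decomposition you sketch as an alternative at the end.
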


Now we can apply Theorem \ref{thmnorm} to get an estimate for $W_{nm}( D, \varepsilon^*)$. In fact, with the logarithmic type singularity of the function $H^{(1)}_0$, we readily obtain that
\beqn
|| \tilde{K}_D ||_{S_{2}(L^2(D))}^2 = \int_D  \int_D | H^{(1)}_0(| x-y|) |^2 \, dx \, dy  < C \left(1+R\right)^4 \left( 1+ \log R \right)^2 < \infty
\eqn
whenever $D \subset B(0,R)$, and hence $\tilde{K}_D \in S_{2}$. Therefore, using the Cauchy-Schwartz inequality and applying \eqref{resolventineq} for $H = L^2(D)$ to \eqref{fundamentalexpression}, together with
the following well-known asymptotic expression of $J_m$ for large $m$ \cite[pp.~365-366 ]{handbook},
\beqn
J_m(z)  \bigg \slash \f{1}{\sqrt{2 \pi m}} \left(\frac{e z}{2 m}\right)^m \rightarrow 1 \quad \text{ as } m \rightarrow \infty \,,
\label{asymptotics}
\eqn
we readily obtain the following inequality (using that $a_2 = 1/2, b_2 =1/2 $  if $p=2$ \cite{carleman}):
\beqn
|W_{nm} (D,\varepsilon^*)| &=&  \left| \left\langle J_n ( r) e^{i n  \theta},  ( {\varepsilon^*}^{-1} + \tilde{K}_D )^{-1}[J_m ( r) e^{i m  \theta}] \right\rangle_{L^2(D)} \right| \notag\\
&\leq&
 \left|\left|  ( {\varepsilon^*}^{-1} + \tilde{K}_D )^{-1} \right|\right|_{L^2(D)}  \left|\left| J_n ( r) e^{i  n \theta} \right|\right|_{L^2(D)} \, \left|\left| J_m ( r) e^{i m   \theta} \right|\right|_{L^2(D)}
\notag \\
&\leq&
\f{1}{d( - {\varepsilon^*}^{-1} , \sigma (\tilde{K}_D) ) } \exp \left( \,
 \, \f{|| \tilde{K}_D ||_{S_{2}(L^2(D))} ^2 }{ d( - {\varepsilon^*}^{-1}  , \sigma (\tilde{K}_D) )^2 } + \f{1}{2}
\, \right)
\left|\left| J_n ( r) e^{i  n \theta} \right|\right|_{L^2(D)} \, \left|\left| J_m ( r) e^{i m   \theta} \right|\right|_{L^2(D)}
\notag \\
& \leq &
\f{1}{d( - {\varepsilon^*}^{-1} , \sigma (\tilde{K}_D) ) } \exp \left( \,
 \, \f{C_{1,R} }{ d( - {\varepsilon^*}^{-1}  , \sigma (\tilde{K}_D) )^2 } + \f{1}{2}
\,\right) \frac{C_{2,R}^{|m|+|n|}}{{|m|}^{|m|} {|n|}^{|n|}} ,\notag
\eqn
where $C_{i,R} \, ( i = 1,2 )$ are some constants, which depend only on the radius $R$ such that $D \subset B(0,R)$.
We summarize the above result in the following theorem.
\begin{Theorem}\label{boundgeneral}
For a given domain $D$ and a contrast  $\varepsilon^*$, we have the following estimate for the scattering coefficient $W_{nm}(D,\varepsilon^*)$, for $n,m \in \mathbb{Z}$,
\beqn
|W_{nm}(D,\varepsilon^*)|
\leq
\f{1}{d( - {\varepsilon^*}^{-1} , \sigma (\tilde{K}_D) ) } \exp \left( \,
 \, \f{ C_{1,R} }{ d( - {\varepsilon^*}^{-1}  , \sigma (\tilde{K}_D) )^2 } + \f{1}{2}
\,\right) \frac{C_{2,R}^{|m|+|n|}}{{|m|}^{|m|} {|n|}^{|n|}} \,.
\label{estimate}
\eqn
\end{Theorem}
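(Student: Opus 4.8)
The plan is to bound the right-hand side of the fundamental expression \eqref{fundamentalexpression} directly, combining a Cauchy--Schwarz estimate, the resolvent bound of Theorem \ref{thmnorm}, and the large-order asymptotics \eqref{asymptotics} of Bessel functions. First I would apply Cauchy--Schwarz in $L^2(D)$ to \eqref{fundamentalexpression} to get
\[
|W_{nm}(D,\varepsilon^*)| \leq \left\| ( {\varepsilon^*}^{-1} + \tilde{K}_D )^{-1} \right\|_{L^2(D)} \; \left\| J_n(r)\, e^{in\theta} \right\|_{L^2(D)} \; \left\| J_m(r)\, e^{im\theta} \right\|_{L^2(D)} ,
\]
so the proof splits into bounding the resolvent norm and the two Bessel $L^2$-norms separately.

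For the resolvent factor I would first check that $\tilde{K}_D \in S_2(L^2(D))$. Since $\Phi(x-y) = -\frac{i}{4} H^{(1)}_0(|x-y|)$ has only a logarithmic singularity on the diagonal, the Hilbert--Schmidt norm $\| \tilde{K}_D \|_{S_2(L^2(D))}^2 = \frac{1}{16}\int_D \int_D |H^{(1)}_0(|x-y|)|^2 \, dx\, dy$ is finite, and local integrability of $(\log |x-y|)^2$ together with $D \subset B(0,R)$ gives $\| \tilde{K}_D \|_{S_2(L^2(D))}^2 \leq C(1+R)^4 (1+\log R)^2 =: C_{1,R}$, a constant depending only on $R$. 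Recalling from \eqref{happy} that $-{\varepsilon^*}^{-1} \notin \sigma(\tilde{K}_D)$, Theorem \ref{thmnorm} with $p = 2$, $H = L^2(D)$, $K = \tilde{K}_D$, $z = -{\varepsilon^*}^{-1}$, and the known values $a_2 = b_2 = \frac12$ then yields
\[
\left\| ( {\varepsilon^*}^{-1} + \tilde{K}_D )^{-1} \right\|_{L^2(D)} \leq \frac{1}{d(-{\varepsilon^*}^{-1}, \sigma(\tilde{K}_D))} \exp\!\left( \frac{C_{1,R}}{d(-{\varepsilon^*}^{-1}, \sigma(\tilde{K}_D))^2} + \frac12 \right) ,
\]
after absorbing the factor $a_2 = \frac12$ into $C_{1,R}$.

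For the Bessel factors I would bound $\| J_m(r)\, e^{im\theta} \|_{L^2(D)}^2 \leq 2\pi \int_0^R |J_m(r)|^2 \, r \, dr$ using $D \subset B(0,R)$ and $|e^{im\theta}| = 1$. Since $|J_{-m}(r)| = |J_m(r)|$ for integer $m$, it suffices to treat $m \geq 0$. Inserting \eqref{asymptotics}, for $m$ large one has $|J_m(r)| \leq \frac{2}{\sqrt{2\pi m}} \big( \frac{eR}{2m} \big)^{m}$ uniformly in $0 \leq r \leq R$, so the integral is bounded by $C_{2,R}^{m}/m^{m}$ with $C_{2,R}$ depending only on $R$; the finitely many remaining small orders, for which \eqref{asymptotics} does not yet apply, are handled with $|J_m| \leq 1$ and absorbed into $C_{2,R}$. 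Multiplying the three estimates then gives exactly \eqref{estimate}.

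The main difficulty here is entirely bookkeeping: making every constant genuinely depend only on $R$. One must handle the logarithmic diagonal singularity in the Hilbert--Schmidt integral carefully, control $\int_0^R |J_m(r)|^2 r \, dr$ uniformly in $m$ (including the small orders not covered by \eqref{asymptotics}), and keep track of how the growth $(eR/2)^{m}$ competes with the decay $m^{-m}$. None of this is deep, but producing clean $R$-dependent constants $C_{1,R}$ and $C_{2,R}$ requires attention.
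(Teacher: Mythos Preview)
Your proposal is correct and follows essentially the same route as the paper: Cauchy--Schwarz applied to \eqref{fundamentalexpression}, the Hilbert--Schmidt membership of $\tilde{K}_D$ via the logarithmic singularity of $H^{(1)}_0$, the resolvent bound of Theorem~\ref{thmnorm} with $p=2$ and $a_2=b_2=\tfrac12$, and the large-order asymptotics \eqref{asymptotics} for the Bessel factors. The only minor slip is that you state ``the integral is bounded by $C_{2,R}^{m}/m^{m}$'' when that integral is the \emph{square} of the $L^2$-norm; but this is harmless, since after taking the square root and renaming the constant you recover exactly the form $C_{2,R}^{|m|+|n|}/(|m|^{|m|}|n|^{|n|})$.
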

From Theorem \ref{boundgeneral}, we foresee that the magnitude of $W_{nm}$ may grow as $\varepsilon^*$ increases, and becomes a very large value as ${\varepsilon^*}^{-1}$ is close to the spectrum of the operator $\tilde{K}_D$.

\subsection{The case of a circular domain}

Now, we consider the operator $\tilde{K}_D$ for a circular domain, i.e., when $D=B(0,R)$.
In this case, the operator $\tilde{K}_D$ becomes more explicit.
Actually, from the Graf's formula \cite{Watson}, we have for $|x| \neq |y|$ that
\beqn
H_0^{(1)}(|x-y|) = \sum_{m=-\infty}^{\infty} \chi_{\{|x| < |y|\} } J_m (|x|)e^{-i m \theta_{x}} H_m^{(1)} (|y|)e^{i m \theta_{y}} + \chi_{\{|x| > |y|\} } H_m^{(1)} ( |x|)e^{-i m \theta_{x}} J_m ( |y|)e^{i m \theta_{y}} \,. \nonumber
\eqn
Therefore, for all $f \in L^2(D)$, the operator $\widetilde{K}_D$ can be written as
\beqn
\widetilde{K} _D [f ](y)= -\f{i}{4} \sum_{m=-\infty}^{\infty} \bigg[ \langle  J_m (r)e^{i m \theta} , f\rangle_{D\bigcap B(0,|y|)} H^{(1)}_m (|y|)e^{i m \theta_{y}}  \nonumber\\
+ \langle \overline{H^{(1)}_m (r)} e^{i m \theta}, f\rangle_{D\backslash \overline{B(0,|y|)}  } J_m (|y|)e^{i m \theta_{y}} \bigg] \,. \notag \nonumber
\eqn
The above expression of $\tilde{K}_D$ will be helpful to investigate the behavior of $\tilde{K}_D$ and $W_{nm}$. Before we continue our discussion on the operator $\tilde{K}_D$, we shall first define some operators.
\begin{Definition}
Given an integer $m \in \mathbb{Z}$, the operators $\widetilde{K}^{(i)}_m : L^2((0,R), r \, dr ) \rightarrow L^2((0,R), r \, dr )$ for $i=1,2$ are defined as
\begin{equation}
\widetilde{K}^{(i)}_m[\phi] (h) =
- \f{i}{4} \bigg(\int_0^h r J_m (r) \phi(r) dr \bigg) H^{(i)}_m (h) - \f{i}{4} \bigg( \int_h^R r H^{(i)}_m (r) \phi (r) dr \bigg) J_m (h)
\label{defKm}
\end{equation}
for $h \in (0,R)$ and $\phi\in L^2((0,R), r \, dr )$,
and their extensions $\widetilde{\widetilde{K}}^{(i)}_m : L^2((0,R), r \, dr ) \rightarrow L^\infty((0,+\infty) )$ for $i=1,2$  as
\begin{equation}
\widetilde{\widetilde{K}}^{(i)}_m [\phi] (h) =
- \f{i}{4} \bigg(\int_0^h r J_m (r) \phi(r) dr \bigg) H^{(i)}_m (h) - \f{i}{4} \bigg(\int_h^R r H^{(i)}_m (r) \phi (r) dr \bigg) J_m (h)
\label{defKmext}
\end{equation}
for $h \in (0,+\infty)$ and $\phi\in L^2((0,R), r \, dr )$.
\end{Definition}

\noindent With this notion, we can readily see that if $f \in L^2(D)$ has the form $f = \phi(r) e^{i m \theta}$, then we have in polar coordinates by the orthogonality of $\{ e^{i m \theta} \}_{m \in \mathbb{Z}}$ on $L^2(\mathbb{S}^1)$ that
\beqn
\widetilde{K}_D [ f ](h,\theta) &=&
-\f{i}{4} \bigg(\int_0^h r J_m (r) \phi(r) dr \bigg) H^{(1)}_m (h) e^{i m \theta}
-\f{i}{4} \bigg(\int_h^R r H^{(1)}_m (r) \phi (r) dr\bigg)  J_m (h) e^{i m \theta} \notag\\
& = & \widetilde{K}^{(1)}_m [\phi] (h) e^{i m \theta} \,,
\label{sub}
\eqn
and
$
\widetilde{K}^*_D [f ](h,\theta) = \widetilde{K}^{(2)}_m [\phi] (h) e^{i m \theta} \,.
$
Furthermore, we can directly see that $\sigma( \tilde{K}^{(2)}_m )
= \overline{\sigma( \tilde{K}^{(1)}_m )} $.
Moreover, using the following relations for all $m \in \mathbb{Z}$,
\beqn
J_{-m}(z) = (-1)^m J_m (z) \quad \text{ and } \quad H^{(1)}_{-m}(z) = (-1)^m H^{(1)}_{m}(z),
\label{symmetry}
\eqn
we immediately infer the properties for the integral operators:
\begin{equation}
\widetilde{K}^{(i)}_{-m}  = \widetilde{K}^{(i)}_m   \quad \text{ and } \quad \widetilde{\widetilde{K}}^{(i)}_{-m} = \widetilde{\widetilde{K}}^{(i)}_m \, .
\label{symmetryeq}
\end{equation}

Substituting \eqref{sub} into Theorem \ref{rfundamental}, we obtain the following simplified expressions of the scattering coefficients when $D = B(0,R)$.
\begin{Theorem}
\label{fundamentalhaha}
For a domain $D = B(0,R)$ for some $R>0$ and a contrast value $\varepsilon^*$, the scattering coefficient $W_{nm}(D,\varepsilon^*), \, n,m \in \mathbb{Z}$, can be written in the following form
\beqn
W_{nm} (D,\varepsilon^*) =  \delta_{nm} \left\langle  J_n, \left({\varepsilon^*}^{-1} +  \tilde{K}^{(1)}_m\right)^{-1} [J_m] \right\rangle_{L^2((0,R), r \, dr ) },
\label{useful}
\eqn
where $\delta_{nm}$ is the Kronecker symbol.
\end{Theorem}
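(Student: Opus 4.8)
The plan is to exploit the rotational invariance of the disc $B(0,R)$ to reduce the fundamental expression of Theorem~\ref{rfundamental} to the one--dimensional radial operators $\widetilde{K}^{(1)}_m$ of \eqref{defKm}. Decompose $L^2(B(0,R))$ as the Hilbert direct sum $\bigoplus_{m\in\mathbb{Z}}\mathcal{H}_m$, where $\mathcal{H}_m:=\{\,\phi(r)\,e^{i m \theta}:\phi\in L^2((0,R),r\,dr)\,\}$; this is just the Fourier expansion in the angular variable, the subspaces $\mathcal{H}_m$ being pairwise orthogonal since the exponentials $e^{i m \theta}$ are orthogonal on the circle, and $\phi\mapsto\phi(r)e^{i m \theta}$ being, up to a constant, a unitary isomorphism of $L^2((0,R),r\,dr)$ onto $\mathcal{H}_m$.

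First I would record, directly from the Graf-formula computation that yields \eqref{sub}, that $\widetilde{K}_D$ leaves each $\mathcal{H}_m$ invariant and acts there, through the above isomorphism, exactly as $\widetilde{K}^{(1)}_m$; in other words $\widetilde{K}_D$ is block diagonal for this decomposition. Since the scattering problem \eqref{scattering1}--\eqref{sommerfield} with $q=\varepsilon^*\chi_D\ge 0$ is uniquely solvable, $I+\varepsilon^*\widetilde{K}_D$ --- equivalently ${\varepsilon^*}^{-1}+\widetilde{K}_D$ --- is invertible on $L^2(B(0,R))$, a fact already presupposed in Theorem~\ref{rfundamental}. An invertible operator that maps \emph{every} summand of an orthogonal direct sum into itself automatically has an inverse with the same property: if $y\in\mathcal{H}_m$ and $({\varepsilon^*}^{-1}+\widetilde{K}_D)^{-1}y=\sum_k x_k$ with $x_k\in\mathcal{H}_k$, then applying ${\varepsilon^*}^{-1}+\widetilde{K}_D$ and comparing the $\mathcal{H}_k$--components forces $x_k=0$ for $k\neq m$ by injectivity. (Since $\widetilde{K}_D$ is not normal, this block reduction has to be argued this elementary way, not by a spectral decomposition.) Hence
\[
({\varepsilon^*}^{-1}+\widetilde{K}_D)^{-1}\big[J_m(r)e^{i m \theta}\big]
= \big[\big({\varepsilon^*}^{-1}+\widetilde{K}^{(1)}_m\big)^{-1}J_m\big](r)\,e^{i m \theta}\ \in\ \mathcal{H}_m .
\]

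Finally I would insert this identity into the formula $W_{nm}(D,\varepsilon^*)=\big\langle J_n(r)e^{i n \theta},\,({\varepsilon^*}^{-1}+\widetilde{K}_D)^{-1}[J_m(r)e^{i m \theta}]\big\rangle_{L^2(D)}$ of Theorem~\ref{rfundamental}. The first argument lies in $\mathcal{H}_n$ and the second in $\mathcal{H}_m$, so orthogonality of distinct angular modes kills the pairing unless $n=m$, which produces the Kronecker factor $\delta_{nm}$; when $n=m$ the angular integration contributes only a constant, leaving precisely the radial inner product $\big\langle J_n,\,\big({\varepsilon^*}^{-1}+\widetilde{K}^{(1)}_m\big)^{-1}[J_m]\big\rangle_{L^2((0,R),r\,dr)}$, which is \eqref{useful}.

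There is no serious obstacle here --- the statement is a symmetry reduction and the work is essentially bookkeeping. The only two points that need a little care are (i) the transfer of the $\mathcal{H}_m$--invariance from $\widetilde{K}_D$ to its resolvent, carried out above by the injectivity argument rather than by diagonalizing a (non-normal) operator, and (ii) tracking the angular ($2\pi$) normalization so that the radial inner product in \eqref{useful} is consistent with the $L^2(D)$--convention fixed in the derivation of Theorem~\ref{rfundamental}.
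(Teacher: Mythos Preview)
Your proposal is correct and follows essentially the same route as the paper: the paper's entire proof is the one line ``Substituting \eqref{sub} into Theorem~\ref{rfundamental}'', i.e., use the Graf-formula identity $\widetilde{K}_D[\phi(r)e^{im\theta}]=\widetilde{K}^{(1)}_m[\phi](r)\,e^{im\theta}$ together with the orthogonality of $\{e^{im\theta}\}$, exactly as you do. Your write-up is simply more explicit about why the block structure of $\widetilde{K}_D$ passes to the resolvent and about the $2\pi$ normalization, both of which the paper leaves tacit.
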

As a consequence of Theorem \ref{fundamentalhaha}, we easily see that $W_{nm}=0$ for $n \neq m$.
Moreover, we readily have the following \textit{a priori} estimate for the coefficients $W_{nm}$ by the same arguments as those in Theorem \ref{boundgeneral}.
In order to obtain the desired estimate,
we consider the asymptotic expression of $Y_m$ as $m \rightarrow \infty$ \cite[pp.~365-366 ]{handbook}:
\beqn
Y_m(z) \bigg \slash \sqrt{\f{2}{ \pi m} }  \left(\frac{e z}{2 m}\right)^{-m}  \rightarrow 1 \, .
\label{asymptotics2}
\eqn
Together with (\ref{asymptotics}) and the logarithmic type singularity of $Y_0$, we have from the definitions of $\widetilde{K}^{(i)}_m$ for $i=1,2$ in \eqref{defKm} that
\beqn
||\widetilde{K}^{(i)}_m)||_{S_{2}(L^2((0,R), r \, dr))}^2  \leq C_m \left(1+R \right)^4 \left( 1+ \log R \right)^2 < \infty \,.
\eqn
Consequently, following the same arguments as the ones for \eqref{estimate},
we arrive at the estimate:
\beqn
|W_{nm} (D,\varepsilon^*)|
&=&  \delta_{nm} \left| \left\langle  J_n, \left({\varepsilon^*}^{-1} +  \tilde{K}^{(1)}_m\right)^{-1}[J_m] \right\rangle_{L^2((0,R), r \, dr ) } \right|\notag \\
&\leq &
 \delta_{nm}  \f{1}{d\left( - {\varepsilon^*}^{-1} , \sigma (\tilde{K}^{(1)}_m) \right) } \exp \left( \,
 \, \f{C_{m} C_{1,R} }{ d\left( - {\varepsilon^*}^{-1} , \sigma (\tilde{K}^{(1)}_m) \right)^2 } + \f{1}{2}
\,\right)\frac{C_{2,R}^{|m|+|n|}}{{|m|}^{|m|} {|n|}^{|n|}} , \notag
\eqn
where $C_m$ is a constant depending only on $m$ and $C_{i,R}, i = 1,2,$ are constants only depending on the radius $R$ such that $D \subset B(0,R)$.
\begin{Theorem}\label{boundball}
For a circular domain $D = B(0,R)$ and a contrast  $\varepsilon^*$, we have the following estimate for the scattering coefficient $W_{nm}(D,\varepsilon^*)$, for $n,m \in \mathbb{Z}$,
\beqn
|W_{nm} (D,\varepsilon^*)| \leq   \delta_{nm}
\f{1}{d\left( - {\varepsilon^*}^{-1} , \sigma (\tilde{K}^{(1)}_m) \right) } \exp \left( \,
 \, \f{C_{m} C_{1,R} }{ d\left( - {\varepsilon^*}^{-1} , \sigma (\tilde{K}^{(1)}_m) \right)^2 } + \f{1}{2}
\,\right) \frac{C_{2,R}^{|m|+|n|}}{{|m|}^{|m|} {|n|}^{|n|}} \, .
\label{boundspecific}
\eqn
\end{Theorem}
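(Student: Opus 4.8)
The plan is to transcribe the argument used for Theorem~\ref{boundgeneral}, replacing the operator $\widetilde{K}_D$ on $L^2(D)$ by the one-dimensional operators $\widetilde{K}^{(1)}_m$ on $L^2((0,R),r\,dr)$, which is possible because Theorem~\ref{fundamentalhaha} diagonalizes the scattering coefficients over the Fourier modes. Since \eqref{useful} already gives $W_{nm}=0$ for $n\neq m$, the Kronecker factor $\delta_{nm}$ in \eqref{boundspecific} is automatic, and it remains only to bound the diagonal terms
\[
W_{mm}(D,\varepsilon^*)=\left\langle J_m,\ \big({\varepsilon^*}^{-1}+\widetilde{K}^{(1)}_m\big)^{-1}[J_m]\right\rangle_{L^2((0,R),r\,dr)}.
\]
Applying the Cauchy--Schwarz inequality on $L^2((0,R),r\,dr)$ reduces this to two tasks: (i) an operator-norm bound for the resolvent $\big({\varepsilon^*}^{-1}+\widetilde{K}^{(1)}_m\big)^{-1}$, and (ii) an estimate for $\|J_m\|_{L^2((0,R),r\,dr)}$ (and, keeping a general $n$, for $\|J_n\|$ as well, so that \eqref{symmetry} converts $m,n$ to $|m|,|n|$).

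For task (ii) I would substitute the large-order asymptotics \eqref{asymptotics}: for $r\in(0,R)$ and $|m|$ large one has $|J_m(r)|\lesssim |m|^{-1/2}(eR/2|m|)^{|m|}$, hence $\|J_m\|_{L^2((0,R),r\,dr)}^2\lesssim R^2|m|^{-1}(eR/2|m|)^{2|m|}$, which after multiplication with the analogous bound in $n$ yields the factor $C_{2,R}^{|m|+|n|}/(|m|^{|m|}|n|^{|n|})$ (restricting to $|m|,|n|\geq1$ and absorbing the low modes into the constant). Task (i) is the real work. Reading off the kernel of $\widetilde{K}^{(1)}_m$ from \eqref{defKm}, namely $k_m(h,r)=-\tfrac{i}{4}\big(\chi_{\{r<h\}}J_m(r)H^{(1)}_m(h)+\chi_{\{r>h\}}H^{(1)}_m(r)J_m(h)\big)$, and using that the Hilbert--Schmidt norm on a weighted $L^2$ space equals the double integral of $|k_m|^2$ against the product measure, I would compute
\[
\big\|\widetilde{K}^{(1)}_m\big\|_{S_2(L^2((0,R),r\,dr))}^2=\frac{1}{8}\int_0^R\!\left(\int_0^h |J_m(r)|^2\,r\,dr\right)\big|H^{(1)}_m(h)\big|^2\,h\,dh,
\]
the two symmetric pieces of the kernel contributing equally after Fubini. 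Here one inserts the asymptotics \eqref{asymptotics} and \eqref{asymptotics2} together with the logarithmic singularity of $Y_0$; the key observation is that the blow-up $|H^{(1)}_m(h)|^2\sim |Y_m(h)|^2\sim C_m h^{-2|m|}$ as $h\to0^+$ (which by itself would make $\int_0^R|H^{(1)}_m|^2 h\,dh$ diverge for $|m|\geq1$) is exactly compensated by $\int_0^h|J_m(r)|^2 r\,dr\sim c_m h^{2|m|+2}$, leaving an integrand of order $h^3$ near the origin; a routine bookkeeping of the remaining (smooth, compactly supported) contributions then gives $\big\|\widetilde{K}^{(1)}_m\big\|_{S_2}^2\leq C_m(1+R)^4(1+\log R)^2$ with $C_m$ depending only on $m$. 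In particular $\widetilde{K}^{(1)}_m\in S_2$, though (like $\widetilde{K}_D$) it is not normal, so one genuinely needs the Carleman-type bound of Theorem~\ref{thmnorm} rather than a spectral decomposition.

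With this in hand I would invoke Theorem~\ref{thmnorm} with $p=2$, $a_2=b_2=\tfrac12$, applied to $-\widetilde{K}^{(1)}_m$ at the point ${\varepsilon^*}^{-1}$ (legitimate since, by the block-diagonal structure \eqref{sub}, $\sigma(\widetilde{K}^{(1)}_m)\subseteq\sigma(\widetilde{K}_D)$, and $-{\varepsilon^*}^{-1}\notin\sigma(\widetilde{K}_D)$ whenever the scattering problem \eqref{scattering1}--\eqref{sommerfield} is well posed, cf.\ \eqref{happy}), obtaining
\[
\big\|\big({\varepsilon^*}^{-1}+\widetilde{K}^{(1)}_m\big)^{-1}\big\|\leq \frac{1}{d\big(-{\varepsilon^*}^{-1},\sigma(\widetilde{K}^{(1)}_m)\big)}\exp\!\left(\frac{\big\|\widetilde{K}^{(1)}_m\big\|_{S_2}^2}{d\big(-{\varepsilon^*}^{-1},\sigma(\widetilde{K}^{(1)}_m)\big)^2}+\frac12\right).
\]
Plugging the Hilbert--Schmidt bound from task (i) and the Bessel-norm bounds from task (ii) into the Cauchy--Schwarz estimate then gives exactly \eqref{boundspecific}. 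The main obstacle, as anticipated, is the Hilbert--Schmidt estimate of $\widetilde{K}^{(1)}_m$: one must make the cancellation between the singular factor $H^{(1)}_m$ and the cutoff integral $\int_0^h|J_m|^2 r\,dr$ quantitative and uniform enough to keep the $R$-dependence of the stated form, while allowing the constant $C_m$ to carry the full $m$-dependence (which enters both through the factorial prefactors in \eqref{asymptotics}--\eqref{asymptotics2} and through the location $r\approx|m|$ of the Bessel turning point). Everything else is a direct repetition of the steps already performed for Theorem~\ref{boundgeneral}, with the bonus that the per-mode denominator $d(-{\varepsilon^*}^{-1},\sigma(\widetilde{K}^{(1)}_m))$ is in general larger than $d(-{\varepsilon^*}^{-1},\sigma(\widetilde{K}_D))$, so \eqref{boundspecific} is sharper than what one would get by specializing Theorem~\ref{boundgeneral}.
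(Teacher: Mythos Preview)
Your proposal is correct and follows essentially the same route as the paper: start from the diagonal representation \eqref{useful}, apply Cauchy--Schwarz, bound the resolvent via Theorem~\ref{thmnorm} with $p=2$ after checking $\widetilde{K}^{(1)}_m\in S_2$, and control $\|J_m\|_{L^2((0,R),r\,dr)}$ by the large-order asymptotics \eqref{asymptotics}. Your treatment of the Hilbert--Schmidt bound is in fact more detailed than the paper's (which merely asserts $\|\widetilde{K}^{(1)}_m\|_{S_2}^2\le C_m(1+R)^4(1+\log R)^2$ by invoking \eqref{asymptotics}, \eqref{asymptotics2} and the logarithmic singularity of $Y_0$); your explicit cancellation argument between $\int_0^h|J_m|^2r\,dr$ and $|H^{(1)}_m(h)|^2$ near $h=0$ is exactly what underlies that assertion.
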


In the next section we perform a  sensitivity analysis of the scattering coefficients in order to obtain a quantitative description of what piece of information is provided by the scattering coefficients of different orders.

\section{Sensitivity analysis of the scattering coefficients for a given contrast} \label{sec3}

In this section, for a given contrast $\varepsilon^*$, we calculate the shape derivative  $\mathcal{D} \,W_{nm}(D, \varepsilon^* )[h]$ of the scattering coefficient $W_{nm}(D, \varepsilon^* )$ along the variational direction $h \in \mathcal{C}^1(\p D)$ when $\p D$ is of class $\mathcal{C}^2$.  From the shape derivative, we will clearly understand what piece of information is provided by the scattering coefficients of different orders, and how the knowledge of scattering coefficient of order $(n,m)$ is related to the resolution of the reconstructed shapes.

Before going into the sensitivity analysis, we will consider the inclusion of the operators and spectra between operators for the subsequent analysis.
To do so, we define the following inclusion maps.
\begin{Definition}
For a given domain $D$, suppose that the bounded linear operator $\tilde{K}_D \in  \mathfrak{B} \left( L^2(D)  \right)$ is defined as in \eqref{defK}.  Consider any domain $\widehat{D}$ such that $D \subset \widehat{D}$, we shall often write $\iota(\tilde{K}_D) \in \mathfrak{B} \left( L^2(\widehat{D})  \right)$ as the following operator:
\beqn
  \iota(\tilde{K}_D) \left[ f \right]  (x)  =
 \tilde{\tilde{K}}_{D} \left[ \chi_{D} f \right] (x)
\quad \text{ for any } f \in L^2( \widehat{D}  ) \, ,
\label{defKiota}
\eqn
where 
$\chi_{D}$ is the characteristic  function of $D$. Likewise, for a given radius $R>0$, assume the bounded linear operators $\tilde{K}^{(i)}_m \in  \mathfrak{B} \left( L^2((0,R), r \, dr )  \right)$  $ ( m \in \mathbb{Z}, \, i = 1,2 ) $ , which are defined in \eqref{defKm}. Then we write  
\beqn
  \iota(\tilde{K}^{(i)}_m) \left[ f \right]  (x)  =
 \tilde{\tilde{K}}^{(i)}_m \left[ \chi_{(0,R)} f \right] (x)
\quad \text{ for any } f \in L^2((0,\widehat{R}), r \, dr )  \, .
\label{defKmiota}
\eqn
\end{Definition}
Then the operators $\iota(\tilde{K}_D) $ and $\iota(\tilde{K}^{(i)}_m), \, i = 1,2,$ are compact on $L^2((0,\widehat{R}), r \, dr ) $. Moreover, we have the following relations between the spectra of $\tilde{K}_D$ and $\iota(\tilde{K}_D)$, as well as between $\tilde{K}^{(i)}_m$ and $\iota(\tilde{K}^{(i)}_m)$ for $m \in \mathbb{Z},\, i = 1,2$.
\begin{Lemma}
\label{lemmaeigen}
Let $\tilde{K}_D$ and $\iota(\tilde{K}_D)$ be defined as in \eqref{defK} and \eqref{defKiota}, respectively. Then,
the following simple relationship between the spectra of $\tilde{K}_D$ and $\iota(\tilde{K}_D)$ holds:
\beqn
\sigma(\iota(\tilde{K}_D)) = \sigma(\tilde{K}_D) \bigcup \{0\} \,.
\eqn
Likewise, for $m \in \mathbb{Z},\, i = 1,2$, we have
\beqn
\sigma(\iota(\tilde{K}^{(i)}_m)) = \sigma(\tilde{K}^{(i)}_m) \bigcup \{0\} \,.
\eqn
\end{Lemma}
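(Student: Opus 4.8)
The plan is to exploit the fact that $\iota(\tilde{K}_D)$ is nothing but $\tilde{K}_D$ ``padded'' by the zero operator onto the larger space $L^2(\widehat{D})$, so that the two operators share the same spectrum away from $0$, while $0$ lies in the spectrum of each for trivial reasons. Concretely, I would factor $\iota(\tilde{K}_D) = A\,C$, where $C : L^2(\widehat{D}) \to L^2(D)$ is the restriction $f \mapsto \chi_D f$ and $A := \widetilde{\widetilde{K}}_D : L^2(D) \to L^2(\widehat{D})$ is the extension operator of Definition \ref{def1}; the latter is bounded since $\widetilde{\widetilde{K}}_D$ maps into $L^\infty$ and $\widehat{D}$ is bounded. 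Comparing the integral formulas \eqref{defK} and \eqref{defKext} shows that $C\,A = \tilde{K}_D$ on $L^2(D)$, while by definition $A\,C = \iota(\tilde{K}_D)$ on $L^2(\widehat{D})$.

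Next I would invoke the standard fact that for bounded operators between Hilbert spaces the products $AC$ and $CA$ have the same spectrum away from $0$: if $\lambda \neq 0$ and $\lambda - CA$ is boundedly invertible on $L^2(D)$, then a direct computation shows that $\lambda^{-1}\big(I + A(\lambda - CA)^{-1}C\big)$ is a two-sided inverse of $\lambda - AC$ on $L^2(\widehat{D})$, and symmetrically. This gives $\sigma(\iota(\tilde{K}_D)) \setminus \{0\} = \sigma(\tilde{K}_D) \setminus \{0\}$. Since $\iota(\tilde{K}_D)$ is compact on the infinite-dimensional space $L^2(\widehat{D})$, it is not invertible, so $0 \in \sigma(\iota(\tilde{K}_D))$ (and likewise $0 \in \sigma(\tilde{K}_D)$, so the extra $\{0\}$ is in fact redundant). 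Combining these facts yields $\sigma(\iota(\tilde{K}_D)) = \sigma(\tilde{K}_D) \cup \{0\}$.

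The circular-domain identity $\sigma(\iota(\tilde{K}^{(i)}_m)) = \sigma(\tilde{K}^{(i)}_m) \cup \{0\}$ is proved in exactly the same way, replacing $L^2(D)$ by $L^2((0,R), r\,dr)$, $L^2(\widehat{D})$ by $L^2((0,\widehat{R}), r\,dr)$, multiplication by $\chi_D$ by multiplication by $\chi_{(0,R)}$, and $\widetilde{\widetilde{K}}_D$ by $\widetilde{\widetilde{K}}^{(i)}_m$ from \eqref{defKmext}; one has $\widetilde{\widetilde{K}}^{(i)}_m \circ \chi_{(0,R)} = \iota(\tilde{K}^{(i)}_m)$ and $\chi_{(0,R)} \circ \widetilde{\widetilde{K}}^{(i)}_m = \tilde{K}^{(i)}_m$. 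I do not expect a genuine obstacle here; the only points needing a line of verification are the boundedness of the extension operator (using that $\widehat{D}$ is bounded) and the resolvent identity above, both routine. An equivalent alternative, should one wish to avoid the $AC$/$CA$ lemma, is to split $L^2(\widehat{D}) = L^2(D) \oplus L^2(\widehat{D} \setminus D)$, observe that in this decomposition $\iota(\tilde{K}_D)$ is block lower-triangular with diagonal blocks $\tilde{K}_D$ and $0$, and read the spectrum off from the Schur complement of $\lambda - \iota(\tilde{K}_D)$.
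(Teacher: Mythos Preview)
Your proof is correct and takes a genuinely different route from the paper. The paper argues directly at the level of eigenvectors: given an eigenpair $(\lambda,e_\lambda)$ of $\tilde{K}_D$ with $\lambda\neq 0$, it extends it to an eigenfunction of $\iota(\tilde{K}_D)$ via $\widetilde{e_\lambda}:=\lambda^{-1}\widetilde{\widetilde{K}}_D[e_\lambda]$ (and by zero when $\lambda=0$), observes that anything supported in $\widehat{D}\setminus D$ is a null vector of $\iota(\tilde{K}_D)$, and for the reverse inclusion simply restricts an eigenfunction of $\iota(\tilde{K}_D)$ to $D$. Your approach instead factors $\iota(\tilde{K}_D)=AC$ with $CA=\tilde{K}_D$ and invokes the general identity $\sigma(AC)\setminus\{0\}=\sigma(CA)\setminus\{0\}$ via the explicit resolvent formula. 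This is cleaner and more structural: it avoids the $\lambda=0$ case split, does not rely on the Fredholm alternative to reduce spectrum to point spectrum, and would go through verbatim in a Banach-space setting. The paper's hands-on construction, on the other hand, has the minor advantage of exhibiting the eigenfunctions of the extended operator explicitly, which is in the spirit of the later sections even if not reused there. Your block-triangular alternative is also valid and essentially equivalent to the paper's argument once one writes out the Schur complement.
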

\begin{proof}
For a given $\lambda$, suppose that the pair $(\lambda, e_\lambda)$ is an eigenpair of $\tilde{K}_D$ over $L^2(D)$. If $\lambda \neq 0$, we denote by $\widetilde{e_\lambda} \in L^2(\widehat{D}) $ the following function
\beqn
 \widetilde{e_\lambda}:= \f{1}{\lambda} \widetilde{\widetilde{K_D}} [ e_\lambda ] \,.
\eqn
If $\lambda$ = 0, we write $\widetilde{e_\lambda} \in L^2(\widehat{D})$ as the extension by zero of the function $e_\lambda$ outside the domain $D$, i.e.,
\beqn
 \widetilde{e_\lambda} (x):=
\begin{cases}
e_\lambda (x) &\text{ if } x \in $D$\,, \\
0 & \text{otherwise} \,.
\end{cases}
\eqn
Then we readily check from the definition of $\iota(\tilde{K}_D)$ that $ \iota(\tilde{K}_D) [\widetilde{e_\lambda}] = \lambda  \widetilde{e_\lambda} $
and hence the pair $(\lambda, \widetilde{e_\lambda})$ is an eigenpair of $\iota(\tilde{K}_D)$ over $L^2(\widehat{D})$.
As any function $f \in L^2(\widehat{D} \backslash D)$ is a zero eigenfunction of $\iota(\tilde{K}_D)$, hence we know
$ \sigma(\tilde{K}_D) \bigcup \{0\} \subset \sigma(\iota(\tilde{K}_D))$.

Conversely, if a pair $(\lambda, \widetilde{e_\lambda})$ is an eigenpair of $\iota(\tilde{K}_D)$ over $L^2(\widehat{D})$, then, by writing $e_\lambda := \widetilde{e_\lambda} \mid_{D} $, it is easy to see form the definition of $\tilde{K}_D$ that $(\lambda, e_\lambda)$ is an eigenpair of $\tilde{K}_D$. Hence, $\sigma(\iota(\tilde{K}_D)) \subset \sigma(\tilde{K}_D)$.
The proof of $\sigma(\iota(\tilde{K}^{(i)}_m)) = \sigma(\tilde{K}^{(i)}_m) \bigcup \{0\} $ is the same.
\end{proof}
Lemma \ref{lemmaeigen} and the Fredholm alternative yield that ${\varepsilon^*}^{-1} + \iota(\tilde{K}_D)$ is invertible over $L^2(\widehat{D})$ if and only if ${\varepsilon^*}^{-1} +\tilde{K}_D$ is invertible over $L^2(D)$.
Moreover, from the definition, we can show as in section \ref{sec2} that $\iota(\tilde{K}_D) \in S_2(L^2(\widehat{D}))$ and then apply \eqref{resolventineq} to obtain the following resolvent estimate for ${\varepsilon^*}^{-1} + \iota(\tilde{K}_D)$ that
\beqn
\bigg|\bigg| \left( {\varepsilon^*}^{-1} + \iota(\tilde{K}_D) \right)^{-1} \bigg|\bigg|_{L^2(\widehat{D})} &\leq&
\f{1}{d\left( - {\varepsilon^*}^{-1} , \sigma(\iota(\tilde{K}_D)) \right) } \exp \left( \,
 \, \f{  C_{1,R} }{ d\left( - {\varepsilon^*}^{-1} ,  \sigma(\iota(\tilde{K}_D)) \right)^2 } + \f{1}{2}
\,\right) \notag\\
& = &\f{1}{d\left( - {\varepsilon^*}^{-1} , \sigma (\tilde{K}_D) \right) } \exp \left( \,
 \, \f{ C_{1,R} }{ d\left( - {\varepsilon^*}^{-1} , \sigma (\tilde{K}_D) \right)^2 } + \f{1}{2}
\,\right).
\label{ineqKiota}
\eqn
Here the last equality comes from Lemma \ref{lemmaeigen} and the fact that $\sigma(\widetilde{K}_{D})$ must have zero as its accumulation point, since $L^2(D)$ is infinite dimensional.  The above argument also applies to the operators $\iota(\tilde{K}^{(i)}_m)$ for $m\in \mathbb{Z}, i=1,2$, where the resolvent estimate reads
\beqn
\bigg|\bigg| \left( {\varepsilon^*}^{-1} +\iota(\tilde{K}^{(i)}_m)) \right)^{-1} \bigg|\bigg|_{L^2((0,\widehat{R}), r \, dr)} \leq
\f{1}{d\left( - {\varepsilon^*}^{-1} , \sigma (\tilde{K}^{(1)}_m) \right) } \exp \left( \,
 \, \f{C_{m} C_{1,R} }{ d\left( - {\varepsilon^*}^{-1} , \sigma (\tilde{K}^{(1)}_m) \right)^2 } + \f{1}{2}
\,\right) \,.
\label{ineqKmiota}
\eqn
Furthermore, we can easily recover the relationship between $\iota(\widetilde{K}_{B(0,R)})$ and $\iota(\tilde{K}^{(i)}_m)$ for any $D$ such that $B(0,R) \subset D$ from their definitions.
In fact, for any $f \in L^2(D)$ in the form $f = \phi(r) e^{i m \theta}$, where $(r,\theta) \in D$, we have in polar coordinates that
\beqn
\iota(\widetilde{K}_{B(0,R)}) [f ](h,\theta) = \iota(\widetilde{K}^{(1)}_m) [\phi] (h) e^{i m \theta} \, , \quad
\iota(\widetilde{K}^*_{B(0,R)})[ f ](h,\theta) = \iota(\widetilde{K}^{(2)}_m) [\phi] (h) e^{i m \theta} \,,
\eqn
where the operators $\iota(\widetilde{K}^{(i)}_m)$ for $m \in \mathbb{Z}, i = 1,2,$ are the extensions to $L^2((0,\widehat{R}_{\theta}), r \, dr)$ with the radii $\widehat{R}_\theta$ being defined as $\widehat{R}_\theta := \sup \{r : (r , \theta) \in D \}$ for different $\theta \in [0,2 \pi]$. Although the extensions $\iota(\widetilde{K}^{(i)}_m)$ are now different for different angles $\theta$, no difficulty will arise in understanding the properties of $\iota(\widetilde{K}_{B(0,R)})$ via estimating $\iota(\widetilde{K}^{(1)}_m)$, since the conclusions of Lemma \ref{lemmaeigen} and \eqref{ineqKmiota} do not depend on the choice of $\widehat{R}$ and thus can be applied to different choices of radii.

From now on, 
we will no longer distinguish between the operators $\tilde{K}_D$ and $\iota(\tilde{K}_D)$
whenever there is no ambiguity,
and by an abuse of notation, we denote both operators by $\tilde{K}_D$,
likewise for the operators  $\tilde{K}^{(i)}_m$ and $\iota(\tilde{K}^{(i)}_m)$ for $m \in \mathbb{Z}, i = 1,2$.

Then we move to our main focus of this subsection, which is to obtain the shape derivative of the scattering coefficients for a domain $D$ along a perturbation $ h\in \mathcal{C}^1(\p D)$.
Now let  $\varepsilon^*$ be given. For any bounded $\mathcal{C}^2$-domain $D$ in $\mathbb{R}^2$,
let $D^\delta$ be a $\delta$-perturbation of $D$ along the variational direction $h \in \mathcal{C}^1(\p D)$, i.e.,
\beqn
\p D^\delta := \bigg \{ \tilde{x} = x + \delta h(x) \nu(x) \, : \, x \in \p D \bigg\} \,,
\eqn
where $\nu(x)$ is the outward unit normal at $\p D$.
For such perturbations of the domain $D$, we  investigate the difference between $W_{nm}(D^{\delta}, \varepsilon^*)$ and $W_{nm}(D, \varepsilon^*)$.
We first estimate the difference $\tilde{K}_{D^\delta} -  \tilde{K}_{D}$, where both operators $\tilde{K}_{D^\delta}$ and $\tilde{K}_{D}$ are regarded as the extended operators on $L^2 \left(D^{\delta} \bigcup D \right)$. Indeed, from the fact that the singularity type of the function $H^{(1)}_0$ is logarithmic, there exists a constant $C_{R}$ depending only on the radius $R$ such that the estimate
\beqn
|| \tilde{K}_{D^\delta} -  \tilde{K}_{D} ||_{L^2 \left(B(0,R) \right) } \leq C_{R} \, \delta
\label{Est} 
\eqn
holds for $\delta$ small enough with $R$ being such that $D \Subset B(0,R)$.
Therefore, we can repeatedly apply the following resolvent equalities
\beqn
\left({\varepsilon^*}^{-1} +  \tilde{K}_{D^\delta}\right)^{-1}  - \left( {\varepsilon^*}^{-1} + \tilde{K}_{D} \right)^{-1} &=
\left({\varepsilon^*}^{-1} +  \tilde{K}_{D^\delta} \right)^{-1}
  ( \tilde{K}_{D} - \tilde{K}_{D^\delta})
 \left( {\varepsilon^*}^{-1} + \tilde{K}_{D}\right)^{-1} \\
&=
\left( {\varepsilon^*}^{-1} + \tilde{K}_{D}\right)^{-1}  ( \tilde{K}_{D} - \tilde{K}_{D^\delta}) \left({\varepsilon^*}^{-1} +  \tilde{K}_{D^\delta}\right)^{-1}
\eqn
to obtain the following expression of the difference of scattering coefficients for any $n,m \in \mathbb{Z}$,
{\
\beqn
& & W_{nm}({D^\delta}, \varepsilon^* ) - W_{nm}(D, \varepsilon^* ) \notag \\
&=&
\left\langle \left({\varepsilon^*}^{-1} +  \tilde{K}^*_{D^\delta }\right)^{-1}[J_n(r) e^{i n \theta}] , J_m (r) e^{i m \theta} \right\rangle_{L^2(D^\delta ) } - \left\langle \left({\varepsilon^*}^{-1} +  \tilde{K}^*_{D}\right)^{-1}[J_n(r) e^{i n \theta}], J_m (r) e^{i m \theta} \right\rangle_{L^2(D ) }  \notag \\
&=& \left\langle  J_n(r) e^{i n \theta},  \left[\left({\varepsilon^*}^{-1} +  \tilde{K}_{D^\delta}\right)^{-1}  - \left( {\varepsilon^*}^{-1} + \tilde{K}_{D}\right)^{-1}\right][J_m (r) e^{i m \theta}] \right\rangle_{L^2(D) } \notag\\
& & + \left\langle \left({\varepsilon^*}^{-1} +  \tilde{K}^*_{D^\delta}\right)^{-1}[ J_n(r) e^{i n \theta}], \text{sgn}(h) \, J_m (r) e^{i m \theta} \right\rangle_{L^2(D \bigcup {D^\delta}  \backslash D \bigcap {D^\delta} )  }
\notag\\
&=&
- \left\langle \left({\varepsilon^*}^{-1} +  \tilde{K}^*_{D}\right)^{-1}[J_n(r) e^{i n \theta}], (\tilde{K}_{D^\delta} -  \tilde{K}_{D} ) \left( {\varepsilon^*}^{-1} + \tilde{K}_{D^\delta }\right)^{-1}[J_m (r) e^{i m \theta}] \right\rangle_{L^2(D) } \notag \\
& & + \left\langle \left({\varepsilon^*}^{-1} +  \tilde{K}^*_{D^\delta}\right)^{-1}[J_n(r) e^{i n \theta}],  \text{sgn}(h) \,  J_m (r) e^{i m \theta} \right\rangle_{L^2(D \bigcup {D^\delta}  \backslash D \bigcap {D^\delta} )  } \notag \\
&=&
- \left\langle \left({\varepsilon^*}^{-1} +  \tilde{K}^*_{D}\right)^{-1}[J_n(r) e^{i n \theta}], (\tilde{K}_{D^\delta} -  \tilde{K}_{D} ) \left( {\varepsilon^*}^{-1} + \tilde{K}_{D}\right)^{-1}[J_m (r) e^{i m \theta}] \right\rangle_{L^2(D) } \notag \\
& & + \left\langle \left({\varepsilon^*}^{-1} +  \tilde{K}^*_{D}\right)^{-1}[J_n(r) e^{i n \theta}], \text{sgn}(h)\,  J_m (r) e^{i m \theta} \right\rangle_{L^2(D \bigcup {D^\delta}  \backslash D \bigcap {D^\delta} )  }
 + O(\delta^2),
\label{devirative}
\eqn
}where the last equality comes from \eqref{Est}. Now for any $L^1$ function $f$, considering the fact that the shape derivative of the integral
\beqn
I [D] = \int_{D} f(x) d x
\eqn
is given by the following boundary integral
\beqn
\mathcal{D} \, I [D ] (h) = \int_{\p D} f(x) h(x) \,ds(x) \,,
\eqn
we have for $x \in D \bigcup {D^\delta}$ and $\phi \in L^2 (D \bigcup {D^\delta} )$ that
\beqn
(\tilde{K}_{D^\delta} -  \tilde{K}_{D} )[ \phi ] (x) &=& -\f{i}{4} \int_{ (D \bigcup {D^\delta} ) \backslash (D \bigcap {D^\delta}) } \text{sgn}(h) \, H^{(1)}_{0}(|x-y|) \phi(y) dy \notag \\
&=& - \delta \f{i}{4} \int_{\p D} H^{(1)}_{0}(|x-y|) h(y) \phi(y) \,ds(y) + O(\delta^2) \,. \,
\eqn
Therefore, by substituting the above expression into \eqref{devirative}, a direct
expansion of the integral together with the Fubini's theorem yields the following expression for the first term in \eqref{devirative}:
\beqn
& & - \left\langle \left({\varepsilon^*}^{-1} +  \tilde{K}^*_{D}\right)^{-1}  [J_n(r) e^{i n \theta}] , (\tilde{K}_{D^\delta} -  \tilde{K}_{D} ) \left( {\varepsilon^*}^{-1} + \tilde{K}_{D}\right)^{-1}[J_m (r) e^{i m \theta}] \right\rangle_{L^2(D ) }  \notag \\
&=&
\delta \f{i}{4} \int_{D}
\int_{\p D} H^{(1)}_{0}(|x-y|) h(y) \left[ \left( {\varepsilon^*}^{-1} + \tilde{K}_{D}\right)^{-1} [J_m (r) e^{i m \theta} ] \right] (y) \, d y \overline{\left[ \left({\varepsilon^*}^{-1} +  \tilde{K}^*_{D}\right)^{-1} [ J_n(r) e^{i n \theta} ]\right] (x)} \, dx \notag \\
& & + O(\delta^2)
\notag\\
&=& - \delta
\int_{\p D}
  h(y)  \left[ \left( {\varepsilon^*}^{-1} + \tilde{K}_{D}\right)^{-1} [J_m (r) e^{i m \theta} ] \right] (y)  \overline{ \left[ \tilde{K}^*_{D} \left({\varepsilon^*}^{-1} +  \tilde{K}^*_{D}\right)^{-1}  [J_n(r) e^{i n \theta}] \right] (y)}  \, d y + O(\delta^2)
\notag\\
&=& - \delta
\left\langle \overline{ \left[ \left( {\varepsilon^*}^{-1} + \tilde{K}_{D}\right)^{-1} [ J_m (r) e^{i m \theta} ]\right] }  \left[ \tilde{K}^*_{D} \left({\varepsilon^*}^{-1} +  \tilde{K}^*_{D}\right)^{-1}[J_n(r) e^{i n \theta}]\right]  , h \, \right\rangle_{L^2(\p D)}  + O(\delta^2).
\label{devirative21}
\eqn
Likewise, for the second term in \eqref{devirative}, we derive that
\beqn
& & \left\langle \left({\varepsilon^*}^{-1} +  \tilde{K}^*_{D}\right)^{-1}[ J_n(r) e^{i n \theta}], \text{sgn}(h) \, J_m (r) e^{i m \theta} \right\rangle_{L^2(D \bigcup {D^\delta}  \backslash D \bigcap {D^\delta} )  } \notag \\
&=&
\delta \int_{\p D}
  h(y)  \left[ \left( {\varepsilon^*}^{-1} + \tilde{K}_{D}\right)^{-1} [J_m (r) e^{i m \theta} ] \right] (y)  \overline{ \left[ J_n(r) e^{i n \theta} \right] (y) } \, d y + O(\delta^2) \notag\\
&=&
  \delta \left\langle \overline{ \left[ \left( {\varepsilon^*}^{-1} + \tilde{K}_{D}\right)^{-1} [J_m (r) e^{i m \theta} ] \right] }  \left[ J_n(r) e^{i n \theta} \right]  , h \, \right\rangle_{L^2(\p D)} + O(\delta^2).
\label{devirative22}
\eqn
Therefore, combining the above two estimates shows that
\beqn
& & W_{nm}( {D^\delta}, \varepsilon^*) - W_{nm}( D, \varepsilon^*) \notag \\
&=&
 \delta {\varepsilon^*}^{-1} \left\langle \overline{ \left[ \left( {\varepsilon^*}^{-1} + \tilde{K}_{D}\right)^{-1}[J_m (r) e^{i m \theta} ] \right] }  \left[ \left({\varepsilon^*}^{-1} +  \tilde{K}^*_{D}\right)^{-1}[J_n(r) e^{i n \theta}] \right]  , h \, \right\rangle_{L^2(\p D)} \notag \\
 && + O(\delta^2) \,.
\label{devirativetotal}
\eqn
Hence, if we define the following $L^2(\partial D)$-duality gradient function $\nabla W_{nm} (D, \varepsilon^*)$ of the form of
\beqn
\nabla W_{nm} ( D, \varepsilon^*) :=   {\varepsilon^*}^{-1} \overline{ \left[ \left( {\varepsilon^*}^{-1} + \tilde{K}_{D}\right)^{-1}[J_m (r) e^{i m \theta}] \right] } \left[ \left({\varepsilon^*}^{-1} +  \tilde{K}^*_{D}\right)^{-1} [J_n(r) e^{i n \theta}] \right] \,,
\label{gradient}
\eqn
then the shape derivative of the scattering coefficient $W_{nm}(D, \varepsilon^*)$ along the variational direction $h$ is given by
\beqn
\mathcal{D} \,W_{nm}(\varepsilon^* , D)  [h]  = \left\langle  \nabla W_{nm} (\varepsilon^*, D) , h \, \right\rangle_{L^2(\p D)}.
\eqn
In particular, for the case where $D$ is a circular domain $D = B(0,R)$, we have from the decomposition of the operator $\tilde{K}_D$ the following simple expression of
$\nabla W_{nm} ( D, \varepsilon^*)$:
\beqn
\nabla W_{nm} ( B(0,R), \varepsilon^*) &=& {\varepsilon^*}^{-1}  \overline{ \left[ \left( {\varepsilon^*}^{-1} + \tilde{K}_{B(0,R)}\right)^{-1}[J_m (r) e^{i m \theta}] \right] }  \left[ \left({\varepsilon^*}^{-1} +  \tilde{K}^*_{B(0,R)}\right)^{-1} [J_n(r) e^{i n \theta}] \right] \,, \notag\\
&=& {\varepsilon^*}^{-1}  \left[ \left( {\varepsilon^*}^{-1} + \tilde{K}^{(2)}_m\right)^{-1} [J_m] \right]  (R)   \left[ \left({\varepsilon^*}^{-1} + \tilde{K}^{(2)}_n \right)^{-1} [J_n]\right]  (R)  \, e^{i (n-m) \theta}  \,. \notag
\label{gradientcircle}
\eqn
Consequently,
\beqn
\mathcal{D} \,W_{nm}( B(0,R), \varepsilon^*) [h]  &=& {\varepsilon^*}^{-1}  \left[ \left( {\varepsilon^*}^{-1} + \tilde{K}^{(1)}_m\right)^{-1}[J_m] \right]  (R)   \left[ \left({\varepsilon^*}^{-1} + \tilde{K}^{(1)}_n\right)^{-1}[J_n] \right]  (R) \left\langle    \, e^{i (n-m) \theta}  , h \, \right\rangle_{L^2(\p D)}  \notag \\
&=&  {\varepsilon^*}^{-1}  \left[ \left( {\varepsilon^*}^{-1} + \tilde{K}^{(1)}_m\right)^{-1}[J_m] \right]  (R)   \left[ \left({\varepsilon^*}^{-1} + \tilde{K}^{(1)}_n\right)^{-1}[J_n] \right]  (R)\mathfrak{F}_\theta \left[ h \right](n-m)  \,,
\label{divcircle}
\eqn
where $\mathfrak{F}_\theta \left[h\right](n-m)$ is the $(n-m)$-th Fourier coefficient of the function $h$ on $L^2(\mathbb{S}^1)$. This gives the following  key result on the shape derivative of $W_{nm}( D, \varepsilon^* )$ .
\begin{Theorem}
\label{vari}
Suppose that $\varepsilon^* >0$ is given.
For any $\mathcal{C}^2$-domain $D$ and $n,m \in \mathbb{Z}$, the shape derivative of the scattering coefficient $W_{nm}( D, \varepsilon^*)$ along the variational direction $h \in L^2(\p D)$ is given by
\beqn
\mathcal{D} \,W_{nm}( D, \varepsilon^*) [h]  = \left\langle  \nabla W_{nm} ( D, \varepsilon^*) , h \, \right\rangle_{L^2(\p D)},
\eqn
where $\nabla W_{nm}$ is defined by
\beqn
\nabla W_{nm} ( D, \varepsilon^*) = {\varepsilon^*}^{-1} \overline{ \left[ \left( {\varepsilon^*}^{-1} + \tilde{K}_{D}\right)^{-1} [J_m (r) e^{i m \theta}] \right] } \left[  \left({\varepsilon^*}^{-1} +  \tilde{K}^*_{D}\right)^{-1}  [J_n(r) e^{i n \theta}] \right] \,.
\label{gradients}
\eqn
In particular, if the domain $D$ is a circular domain $D=B(0,R)$, then for any $D^\delta$ as a $\delta$-perturbation of $D$ along the variational direction $h \in \mathcal{C}^1(\p D)$, we have
\beqn
W_{nm}( D^\delta, \varepsilon^*) -W_{nm}( D, \varepsilon^*) = \delta \, C(\varepsilon^*,n,m) \mathfrak{F}_\theta \left[ h \right](n-m) + O(\delta^2),
\label{divcircle2}
\eqn
with
\beqn
C(\varepsilon^*,n,m) :=  {\varepsilon^*}^{-1}
\left[ \left( {\varepsilon^*}^{-1} + \tilde{K}^{(1)}_m\right)^{-1} [J_m] \right]  (R)   \left[ \left({\varepsilon^*}^{-1} + \tilde{K}^{(1)}_n\right)^{-1}  [J_n] \right]  (R)\,.
\label{coefficient}
\eqn
\end{Theorem}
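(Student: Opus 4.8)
The statement has essentially been established in the computation leading to \eqref{devirativetotal} and \eqref{divcircle}; the proof consists of organising that computation and making the $O(\delta^2)$ remainders rigorous. I would start from the representation $W_{nm}(D,\varepsilon^*)=\langle({\varepsilon^*}^{-1}+\tilde K_D^*)^{-1}[J_n(r)e^{in\theta}],J_m(r)e^{im\theta}\rangle_{L^2(D)}$ coming from Theorem~\ref{rfundamental}, viewing $\tilde K_D$ and $\tilde K_{D^\delta}$ as operators on the common Hilbert space $L^2(D\cup D^\delta)$. Then $W_{nm}(D^\delta,\varepsilon^*)-W_{nm}(D,\varepsilon^*)$ splits into an \emph{operator part} (the change $\tilde K_D\mapsto\tilde K_{D^\delta}$ inside the resolvent) and a \emph{domain part} (replacing the integration set $D$ by $D^\delta$ in the pairing). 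The crucial analytic input is \eqref{Est}, $\|\tilde K_{D^\delta}-\tilde K_D\|_{L^2(B(0,R))}\le C_R\delta$, which follows from the purely logarithmic --- hence locally square-integrable --- singularity of $H_0^{(1)}$; combined with the resolvent bound \eqref{ineqKiota}, valid because well-posedness of the forward problem yields $-{\varepsilon^*}^{-1}\notin\sigma(\tilde K_D)$, this shows $({\varepsilon^*}^{-1}+\tilde K_{D^\delta})^{-1}$ is bounded uniformly for small $\delta$ and differs from $({\varepsilon^*}^{-1}+\tilde K_D)^{-1}$ by $O(\delta)$, which legitimises every $O(\delta^2)$ term below.

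I would then extract the two first-order contributions. For the operator part, the resolvent identity writes it as $(\tilde K_D-\tilde K_{D^\delta})$ sandwiched between two resolvents; since $(\tilde K_{D^\delta}-\tilde K_D)[\phi](x)=-\frac{i}{4}\int_{(D\cup D^\delta)\setminus(D\cap D^\delta)}\text{sgn}(h)\,H_0^{(1)}(|x-y|)\phi(y)\,dy$, the Hadamard formula $\mathcal D\big(\int_D f\big)(h)=\int_{\partial D}fh\,ds$ converts this thin-shell integral into $-\delta\frac{i}{4}\int_{\partial D}H_0^{(1)}(|x-y|)h(y)\phi(y)\,ds(y)+O(\delta^2)$. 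Substituting, applying Fubini, and recognising the remaining $x$-integral as (the conjugate of) the action of $\tilde K_D^*$ on $({\varepsilon^*}^{-1}+\tilde K_D^*)^{-1}[J_n e^{in\theta}]$ collapses the double integral to the boundary pairing \eqref{devirative21}; the domain part is treated by the same Hadamard formula applied directly to the thin shell, giving \eqref{devirative22}. Adding the two and using $I-\tilde K_D^*({\varepsilon^*}^{-1}+\tilde K_D^*)^{-1}={\varepsilon^*}^{-1}({\varepsilon^*}^{-1}+\tilde K_D^*)^{-1}$ produces the prefactor ${\varepsilon^*}^{-1}$ and the expression \eqref{devirativetotal}, whence $\mathcal D\,W_{nm}(D,\varepsilon^*)[h]=\langle\nabla W_{nm},h\rangle_{L^2(\partial D)}$ with $\nabla W_{nm}$ as in \eqref{gradients}. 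This is first obtained for $h\in\mathcal C^1(\partial D)$ and extends to $h\in L^2(\partial D)$ by continuity, since $\nabla W_{nm}\in L^2(\partial D)$.

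For the disk $D=B(0,R)$ I would specialise \eqref{gradients} via the block decomposition from \eqref{sub} and the lines following it, $\tilde K_D[\phi(r)e^{ik\theta}]=(\tilde K_k^{(1)}[\phi])(r)e^{ik\theta}$ and $\tilde K_D^*[\phi(r)e^{ik\theta}]=(\tilde K_k^{(2)}[\phi])(r)e^{ik\theta}$, so that $({\varepsilon^*}^{-1}+\tilde K_D)^{-1}[J_m e^{im\theta}]$ and $({\varepsilon^*}^{-1}+\tilde K_D^*)^{-1}[J_n e^{in\theta}]$ become explicit radial functions times $e^{im\theta}$ and $e^{in\theta}$. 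Since $\partial D=\{r=R\}$, $\nabla W_{nm}$ reduces to the product of these radial functions evaluated at $r=R$ times $\overline{e^{im\theta}}\,e^{in\theta}=e^{i(n-m)\theta}$; routine conjugation bookkeeping, using the relations between $\tilde K_k^{(1)}$ and $\tilde K_k^{(2)}$ (in particular $\sigma(\tilde K_k^{(2)})=\overline{\sigma(\tilde K_k^{(1)})}$), reality of $J_k$, and the conjugation in the $L^2(\partial D)$-pairing, rewrites the scalar factor as $C(\varepsilon^*,n,m)$ of \eqref{coefficient}. Pairing against $h$ and using $\langle e^{i(n-m)\theta},h\rangle_{L^2(\mathbb S^1)}=\mathfrak F_\theta[h](n-m)$ then yields \eqref{divcircle2}.

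The main obstacle is not the bookkeeping but the rigorous differentiability at $\delta=0$: one must show $\delta\mapsto W_{nm}(D^\delta,\varepsilon^*)$ is genuinely differentiable, not merely formally expandable. This rests on two uniform-in-$\delta$ estimates --- the resolvent bound above, and the claim that the thin-shell integrals over $(D\cup D^\delta)\setminus(D\cap D^\delta)$ equal $\delta\int_{\partial D}(\cdots)\,ds+O(\delta^2)$ --- the latter using $h\in\mathcal C^1(\partial D)$ and $\partial D\in\mathcal C^2$, so that $\tilde x=x+\delta h(x)\nu(x)$ is a diffeomorphism of $\partial D$ onto $\partial D^\delta$ for small $\delta$ with Jacobian $1+O(\delta)$, together with the local $L^2$-integrability of $H_0^{(1)}$ to control the singular kernels uniformly on the shrinking shell.
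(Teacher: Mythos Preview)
Your proposal is correct and follows essentially the same approach as the paper: the same splitting into an operator part and a domain (thin-shell) part, the same use of the resolvent identity together with \eqref{Est} and \eqref{ineqKiota}, the same Hadamard formula to pass from shell integrals to boundary integrals, and the same block decomposition \eqref{sub} for the disk case. Your additional remarks on the rigorous justification of differentiability (uniform resolvent bounds, the $\mathcal{C}^2$ boundary/$\mathcal{C}^1$ perturbation hypotheses ensuring a good diffeomorphism with Jacobian $1+O(\delta)$) and the density extension to $h\in L^2(\partial D)$ go slightly beyond what the paper makes explicit, but are in the same spirit.
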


From the above theorem, we obtain in the linearized case that the scattering coefficient $W_{nm}$ gives us precise information about the $(m-n)$-th Fourier mode of the perturbation $h$.

Therefore, the magnitude of the coefficients $W_{nm}$ and  $C(\varepsilon^*,n,m)$ shall be responsible for the resolution in imaging $D^\delta$. Note that the function $C(\varepsilon^*,n,m)$ depends now on the spectra of both $\tilde{K}^{(1)}_m$ and $\tilde{K}^{(1)}_n$. The change and growth of the coefficients $W_{nm}$ and  $C(\varepsilon^*,n,m)$ with respect to $\varepsilon^*$ will be the main focus of the next section.

\section{Asymptotic behaviors of eigenvalues over a circular domain and the phenomenon of super-resolution}

In the previous section, we have obtained a relationship between the coefficients $W_{nm}$ of a perturbed circular domain $D^\delta$ and the Fourier coefficients of the perturbation $h$. In this section, we investigate the decay of the eigenvalues of $\tilde{K}^{(1)}_m$ and analyze the behavior  with respect to $\varepsilon^*$ of $W_{nm}$ and $C(\varepsilon^*,n,m)$ for different values of $n$ and
$m$. For this purpose, we introduce the following Riesz decomposition.

\subsection{Riesz decomposition of the operators} \label{sec4_1}

To continue our analysis on the operators $\widetilde{K}_D$ and $\tilde{K}^{(1)}_m$, we first recall the following classical spectral theorem for compact operators in a Hilbert space  \cite{spectral}.
\begin{Theorem} \label{thm1}
Let $K$ be a compact operator on a Hilbert space $H$ and $\sigma(K)$ its spectrum.
 Then the following results hold:
\begin{enumerate}
\item
 $\lambda \in \sigma(K)$ if and only if $\lambda$ is an eigenvalue (Fredholm alternative).
\item
 For all $\lambda \in \sigma(K)$, there exists a smallest $m_{\lambda} $ such that $Ker (\lambda  - K)^{m_{\lambda}} = Ker (\lambda  - K)^{m_{\lambda}+1}$. Denoting the space $Ker (\lambda  - K)^{m_{\lambda}}$ by $E_\lambda$, we have $dim(E_\lambda) < \infty$.
\item
 $\sigma(K)$ is countable and $0$ is the only accumulation point of $\sigma(K)$ for $dim(H) = \infty$.
\item
 The map $z \mapsto ( z - K)^{-1}$ admits poles at $ z \in \sigma(K)$.
\end{enumerate}
\end{Theorem}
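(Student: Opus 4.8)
The plan is to prove the four assertions by the classical Riesz--Schauder theory, whose single workhorse is F.~Riesz's lemma: in a normed space $X$, if $Y \subsetneq X$ is a closed proper subspace, then for every $\theta \in (0,1)$ there is a unit vector $x_\theta$ with $\mathrm{dist}(x_\theta,Y) \geq \theta$. I would combine this repeatedly with the sequential definition of compactness --- every bounded sequence has a subsequence whose image under $K$ converges --- to convert any \emph{strictly} monotone chain of closed subspaces into a bounded sequence whose $K$-images stay a fixed positive distance apart, a contradiction. All four parts come out of this one mechanism.

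For (1) and (2), fix $\lambda \neq 0$ (the equivalence in (1) is the usual one for nonzero spectral points; recall $0 \in \sigma(K)$ by part (3) but need not be an eigenvalue). First I would note $\ker(\lambda - K)$ is finite-dimensional, since $K$ acts on it as $\lambda\,\mathrm{Id}$ so its unit ball is compact; the same applies to $N_k := \ker((\lambda - K)^k)$ after expanding $(\lambda - K)^k = \lambda^k - K_k$ with $K_k$ compact. Next, $\mathrm{ran}(\lambda - K)$ is closed: if $(\lambda - K)x_n \to y$, replace $x_n$ by its component transverse to the finite-dimensional kernel, extract a convergent subsequence of $Kx_n$, and conclude $x_n$ converges, so $y \in \mathrm{ran}(\lambda - K)$. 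Then injectivity of $\lambda - K$ forces surjectivity: otherwise the closed subspaces $X_k := \mathrm{ran}((\lambda - K)^k)$ strictly decrease, Riesz's lemma supplies unit $x_k \in X_k$ with $\mathrm{dist}(x_k, X_{k+1}) \geq \tfrac{1}{2}$, and $\|Kx_k - Kx_\ell\| \geq |\lambda|/2$ for $k < \ell$ contradicts compactness; the converse implication is the dual argument, or the same device applied to $N_k$. This is the Fredholm alternative, and the same Riesz-lemma argument on the increasing chain $N_k$ shows it must stabilize, yielding the smallest $m_\lambda$ with $N_{m_\lambda} = N_{m_\lambda+1}$, a short induction giving $N_k = N_{m_\lambda} =: E_\lambda$ for all $k \geq m_\lambda$, and $\dim E_\lambda < \infty$.

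For (3), the key claim is that for each $r>0$ only finitely many eigenvalues have modulus $\geq r$. If not, pick distinct eigenvalues $\lambda_j$ with $|\lambda_j| \geq r$ and eigenvectors $e_j$; distinctness makes $\{e_j\}$ linearly independent, so $F_j := \mathrm{span}(e_1,\dots,e_j)$ strictly increases, is $K$-invariant, and satisfies $(\lambda_j - K)F_j \subseteq F_{j-1}$. Choosing unit $y_j \in F_j$ with $\mathrm{dist}(y_j, F_{j-1}) \geq \tfrac{1}{2}$, for $j > \ell$ both $(\lambda_j - K)y_j$ and $Ky_\ell$ lie in $F_{j-1}$, so $\|Ky_j - Ky_\ell\| \geq |\lambda_j|\,\mathrm{dist}(y_j, F_{j-1}) \geq r/2$, again contradicting compactness. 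Hence $\sigma(K)\setminus\{0\} = \bigcup_{n\geq 1}\{\lambda : |\lambda| \geq 1/n\}$ is countable with no nonzero accumulation point, so $0$ is its only possible accumulation point; and when $\dim H = \infty$, $K$ cannot be invertible (else $\mathrm{Id} = K^{-1}K$ would be compact), so $0 \in \sigma(K)$ and, when $\sigma(K)$ is infinite, is its accumulation point. For (4), isolatedness of each nonzero $\lambda \in \sigma(K)$ lets me form the Riesz projection $P_\lambda = \frac{1}{2\pi i}\oint_{|z-\lambda|=\varepsilon}(z-K)^{-1}\,dz$; the holomorphic functional calculus gives a reduction $H = \mathrm{ran}\,P_\lambda \oplus \ker P_\lambda$ with $\mathrm{ran}\,P_\lambda = E_\lambda$ finite-dimensional, $\sigma(K|_{E_\lambda}) = \{\lambda\}$, $(\lambda - K)^{m_\lambda}|_{E_\lambda} = 0$, and $\lambda - K|_{\ker P_\lambda}$ invertible near $\lambda$. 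The $\ker P_\lambda$-part of $(z-K)^{-1}$ is then analytic at $\lambda$, while on $E_\lambda$ the resolvent is a rational matrix function whose Jordan structure exhibits a pole of order exactly $m_\lambda$; this proves (4) at every nonzero spectral point (at $z=0$, when $0\in\sigma(K)$, the singularity need not be a pole, which is immaterial for the later use of this theorem).

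The hard part will not be conceptual --- the theorem is classical --- but bookkeeping: the closedness of $\mathrm{ran}(\lambda - K)$ and the injective-implies-surjective step both hinge on choosing the right representative modulo a finite-dimensional subspace before invoking sequential compactness, and keeping the quantitative separation bounds (the $\geq |\lambda|/2$ and $\geq r/2$ estimates) honest requires care about exactly which subspace Riesz's lemma is applied to at each stage.
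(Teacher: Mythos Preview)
Your proof is correct and follows the standard Riesz--Schauder route. Note, however, that the paper does not actually prove this theorem: it is merely \emph{recalled} as a classical result with a citation to Conway's textbook \cite{spectral}, and then immediately applied to decompose $\widetilde{K}_D$ into Jordan blocks. So there is no ``paper's own proof'' to compare against; you have supplied the textbook argument that the authors chose to omit.

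Your treatment is in fact slightly more careful than the paper's statement. You correctly flag that assertion (1) as written (``$\lambda\in\sigma(K)$ iff $\lambda$ is an eigenvalue'') holds only for $\lambda\neq 0$, since $0$ may lie in $\sigma(K)$ without being an eigenvalue when $K$ has dense but non-closed range; and you likewise note that the pole structure in (4) is guaranteed only at nonzero spectral points. Both caveats are appropriate and do not affect the paper's subsequent use of the theorem, which only invokes the Jordan decomposition on the nonzero eigenvalues of $\widetilde{K}_D$ and $\widetilde{K}^{(1)}_m$.
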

\noindent Applying the above theorem to $\widetilde{K}_D$, which is compact but not normal, we can decompose $$L^2(D) = \bigoplus_{\lambda \in \sigma(\widetilde{K}_D)} E_{\lambda}$$ with $E_\lambda = \bigoplus_{1\leq i \leq N_\lambda} E_\lambda^i$ for some $N_\lambda$ such that the operator $\widetilde{K}_D$ can be written as
\beqn
\widetilde{K}_D = \sum_{\lambda \in \sigma(\widetilde{K}_D)} \sum_{1 \leq i \leq N_\lambda}  \widetilde{K}_{i,\lambda},
\label{hahadecomp}
\eqn
where the operators $\widetilde{K}_{i,\lambda}: E_\lambda^i \rightarrow E_\lambda^i$ admit the action of the following Jordan block under a choice of basis ${\bf e}^i_\lambda$ in $E^i_{\lambda}$:
\beqn
J^i_{\lambda} :=
\begin{pmatrix}
\lambda& 1 &\dots &\dots & 0\\
0 & \lambda & 1 & \dots &0 \\
\vdots & \vdots & \ddots & \vdots & \vdots \\
0 & \dots & \dots & \lambda & 1 \\
0 & \dots & \dots & \dots & \lambda
\end{pmatrix} \, ,
\eqn
as matrices of size smaller than or equal to $m_\lambda$. For the sake of simplicity, for a given $n \in \mathbb{N}$ and a given Riesz basis $\bf{w}$, i.e., a complete frame in $L^2(D)$, supposing that ${\bf v} $ is a finite subset of $\bf{w}$, we shall often write, for any $\phi \in L^2(D)$, $\left(\phi \right)_{{\bf v},L^2(D)} \in \mathbb{C}^n$ as the coefficients of $\phi$ in front of the vectors in ${\bf v}$ when expressed in the Riesz basis $\bf{w}$, i.e., if
\beqn
\phi = \sum_{w_i \in \bf{w}} b_i w_i  \,,
\eqn
for coefficients $b_i \in \mathbb{C}$ and ${\bf v} = (w_{k_1}, w_{k_2}, \ldots, w_{k_n})$, then $\left(\phi \right)_{{\bf v},L^2(D)} = (b_{k_1}, b_{k_2}, \ldots, b_{k_n}) $.
Also, for any $a = (a_1,\ldots,a_n) \in \mathbb{C}^n$, and any given finite frame ${\bf v} = (v_1 , v_2 , \ldots, v_n )$ in $L^2(D)$, we write
\beqn
{\bf v}^T a := \sum_{i=1}^n a_i v_i \, ,
\label{transpose}
\eqn
and, for any $\phi \in L^2(D)$, the $L^2$ inner product of ${\bf v}$ and $\phi$ as
\beqn
\langle {\bf v}, \phi \rangle_{L^2(D)} :=\left(\langle v_1, \phi \rangle_{L^2(D)}, \langle v_2, \phi \rangle_{L^2(D)}, \ldots, \langle v_n, \phi \rangle_{L^2(D)} \right)  \in \mathbb{C}^n \,.
\eqn
With these notations, we can write \eqref{hahadecomp} in terms of the frame $\bigcup_{\lambda \in \sigma({\tilde{K}_D})} \bigcup_{1 \leq i \leq N_\lambda} {\bf e}_\lambda$ as follows:
\beqn
\widetilde{K}_D = \sum_{\lambda \in \sigma(\widetilde{K}_D)} \sum_{1 \leq i \leq N_\lambda} \left( {\bf e}^i_\lambda \right)^T J^{i}_\lambda  (  \, \cdot\, )_{{\bf e}^i_\lambda,L^2(D)},
\label{decomposition}
\eqn
where the superscript $T$ denotes the transpose as described in \eqref{transpose}.
Therefore, substituting the above expression of $\widetilde{K}_D$ into \eqref{fundamentalexpression}, we have
\beqn
W_{nm} (D,\varepsilon^*) &=&   \left\langle J_n ( r) e^{i n  \theta},  ( {\varepsilon^*}^{-1} + \tilde{K}_D )^{-1}[J_m ( r) e^{i m  \theta}] \right\rangle_{L^2(D)} \notag \\
&=&
\sum_{\lambda \in \sigma(\widetilde{K}_D)} \sum_{1 \leq i \leq N_\lambda} \left[  \left\langle J_n ( r) e^{i n  \theta}, {\bf e}^i_\lambda  \right\rangle_{L^2(D)}   \right]^T [ J^{i}_{ {\varepsilon^*}^{-1} + \lambda }]^{-1} \left[  J_m ( r ) e^{i m  \theta} \right]_{{\bf e}^i_\lambda,L^2(D)}.
\label{farfield222}
\eqn
The above expression gives a general decomposition of the scattering coefficient $W_{nm} (D,\varepsilon^*)$.

Next, we consider the special domain $D = B(0,R)$. Then from Theorem \ref{fundamentalhaha} we shall focus on the operators $\widetilde{K}^{(1)}_m$ for $m \in \mathbb{Z}$.
Similarly to the previous argument, we can see that the operators $\widetilde{K}^{(1)}_m$ are compact on $ L^2((0,R), r \, dr ) $.  Then by Theorem \ref{thm1} there exists a complete basis $\bigcup_{\lambda} \bigcup_{ 0 \leq i \leq N_\lambda^m} {\bf e}^i_{m,\lambda}$ over $L^2((0,R), r \, dr )$ with each ${\bf e}^i_{m,\lambda} $ spanning the subspace
$E^i_{m,\lambda}$
such that $\widetilde{K}_m$ admits the action of a Jordan block, denoted by $J^i_{m,\lambda}$, with respect to the basis when acting on the invariant subspace $E^i_{m,\lambda}$.
Moreover, adopting the same notations as previously introduced, we can write
\beqn
\left({\varepsilon^*}^{-1} +  \tilde{K}^{(1)}_m\right)^{-1} =  \sum_{\lambda \in \sigma(\widetilde{K}^{(1)}_m)}
\sum_{1 \leq i \leq N^m_\lambda}    \left({\bf e}^i_{m,\lambda} \right)^T  [J^{i}_{m,{\varepsilon^*}^{-1} + \lambda}]^{-1}  (  \, \cdot\, )_{{\bf e}^i_{m,\lambda}, L^2((0,R), r \, dr ) },
\label{sumlong}
\eqn
and a similar expansion holds for $\tilde{K}^{(2)}_m$.
Now, using the orthogonality of $\{ e^{i m \theta} \}_{m \in \mathbb{Z}}$ on $L^2(\mathbb{S}^1)$, for a given contrast $\varepsilon^*$ such that $-{\varepsilon^*}^{-1} $ is not an eigenvalue of $ \tilde{K}^{(1)}_m$, we have that
\beqn
& &W_{nm} (D,\varepsilon^*) \notag \\
&=&   \delta_{nm} \left\langle  J_n, \left({\varepsilon^*}^{-1} +  \tilde{K}^{(1)}_m\right)^{-1} [J_m] \right\rangle_{L^2((0,R), r \, dr ) } \notag \\
&=&
 \delta_{nm}
\sum_{\lambda \in \sigma(\widetilde{K}^{(1)}_m)} \sum_{1 \leq i \leq N^m_\lambda} [ \langle J_n ( r),  {\bf e}^i_{m,\lambda}  \rangle_{L^2((0,R), r dr)} ]^T  [J^{i}_{m,{\varepsilon^*}^{-1} + \lambda}]^{-1}
 (  \, J_m(r) \, )_{{\bf e}^i_{m,\lambda}, L^2((0,R), r \, dr ) }.
\label{sumsubpart}
\eqn

Finally, the following remarks are in order.  For $D = B(0,R)$, the action of $\tilde{K}_D$ on each of the subspace $E^i_{m,\lambda} e^{i m \theta}$ of $L^2(D)$  is invariant and admits the same Jordan block representation as $\widetilde{K}^{(1)}_m$ acting on $E^i_{m,\lambda}$ of $L^2((0,R),r dr)$.  Hence,  the decomposition
$$
L^2(D) = \bigoplus_{m\in\mathbb{Z}} \bigoplus_{\lambda \in \sigma\left(\widetilde{K}^{(1)}_m\right)} \bigoplus_{1\leq i \leq N^m_{\lambda} } E^i_{m,\lambda} e^{i m \theta}
$$
coincides with the original Jordan block decomposition of $\tilde{K}_D$,
$$
L^2(D) = \bigoplus_{\lambda \in \sigma(\widetilde{K})} \bigoplus_{1\leq i \leq N_{\lambda} } E^i_{\lambda} \,.
$$
Therefore, we readily get $ \bigcup_{m\in \mathbb{Z}} \sigma(\widetilde{K}^{(1)}_m) = \sigma(\tilde{K}_D) $, and the sum \eqref{sumsubpart} constitutes a part of the sum \eqref{farfield222} with all the other terms in \eqref{farfield222} being zero.
In the next section, we will focus on the decay of the eigenvalues of $\tilde{K}_m$ and the asymptotic expansion for the eigenvalues and eigenfunctions of the operators. This will allow us to better understand the behavior of $W_{nm}$ and $C(\varepsilon^*,n,m)$.

\subsection{Asymptotics of the eigenvalues and eigenfunctions of $\widetilde{K}^{(i)}_m$ } \label{sec4_2}

Intuitively we can expect that the eigenvalues of $\widetilde{K}^{(1)}_m$ are distributed closer to $0$ as $|m|$ increases for the following reason.
Considering \eqref{symmetryeq} together with the asymptotic expressions \eqref{asymptotics} and \eqref{asymptotics2} of $J_m$ and $Y_m$ as $m \rightarrow \infty$, we have the following bound for the operator norm of $\tilde{K}_m$ for $m \in \mathbb{Z}$:
\beqn
|| \widetilde{K}^{(1)}_m ||_{L^2((0,R) r dr)} \leq  \f{C'_R}{m^2}
\eqn
for some constant $C'_R$ depending on $R$. Then we obtain the estimate for the spectral radius of $\tilde{K}_m$ from the Gelfand theorem:
\beqn
 \sup_{\lambda\in\sigma( \widetilde{K}^{(1)}_m )} | \lambda|  = \lim_{n \rightarrow \infty} \bigg|\bigg| \left( \widetilde{K}^{(1)}_m\right)^n\bigg|\bigg|^{\f{1}{n}} \leq \f{C'_R}{m^2}. \label{normest}
\eqn
This implies that the spectrum $\sigma(\widetilde{K}^{(1)}_m)$ actually lies inside
$ \sigma(\tilde{K}) \bigcap B(0,\f{C'_R}{m^2})$.

However, the above argument is a bit heuristic, and we intend to obtain a formal asymptotic expansion of the eigenvalues for the operators $\tilde{K}_m$.
For this purpose, we first restrict ourselves to the discussion of the operators for $m \in \mathbb{N}$, and consider the equation $\tilde{K}^{(1)}_m f = \lambda f$ with $\lambda \neq 0$.
Since we have
\beqn
\left( \frac{1}{r} \partial_r r \partial_r + 1 - \frac{m^2}{r^2} \right) \left( \tilde{K}^{(1)}_m f \right) e^{i m \theta} = (\Delta +1) \tilde{K}^{(1)}  \left( f  e^{i m \theta} \right) = f  e^{i m \theta},
\eqn
we obtain for $m \neq 0$ the following equivalence
\beqn
\tilde{K}^{(1)}_m f =  \lambda f \quad \Leftrightarrow
\begin{cases}
\left( \frac{1}{r} \partial_r r \partial_r + 1 - \frac{1}{\lambda}  - \frac{m^2}{r^2} \right) f &= 0 \,,\\
\quad \quad \quad \quad f(0) & = 0 \,, \\
\quad \quad \quad \quad f(R) &=  - \f{i}{4}\int_0^R r J_m (r) f(r) dr H^{(i)}_m (R) \,.
\end{cases}
\label{eigenvalue}
\eqn
Enumerating the eigenvalues $\lambda$ of $\tilde{K}^{(1)}_m$ as $\lambda_{m,l}$ in descending order of their magnitudes, and writing $e^i_{m,l}$ as the unique eigenfunction in the Jordan basis ${\bf e}^i_{m,\lambda_{m,l}}$ for each $i$, we are bound to have the following form for the eigenpair of the operator for all $i$,
\beqn
(\lambda_{m,l}, e^i_{m,l} ) =  \left( \lambda_{m,l}, J_m \left( \sqrt{1 - \frac{1}{ \lambda_{m,l} }} r  \right) \right)  \,.
\label{eigeigeig}
\eqn
The above statement implies that the geometric multiplicities of all the eigenvalues of $\tilde{K}^{(1)}_m$ should be $N_{\lambda} =1$ (while the algebraic multiplicities are still unknown for the time being). For the sake of simplicity, we denote the frame ${\bf e}^1_{m,\lambda_{m,l}}$ by ${\bf e}_{m,l}$, and also the eigenfunction $e^1_{m,l}$ by $ e_{m,l}$.
Substituting \eqref{eigeigeig} into \eqref{eigenvalue},
together with the following well-known property of Lommel's integrals \cite{handbook} that for all $n \in \mathbb{N}$ and for all $a, b >0$ with $a \neq b$:
\beqn
 \int_{0}^R [J_n(a r )]^2 r dr &=& \frac{R^2}{2} [ J_{n}(a R) ^2 - J_{n-1}(a R) J_{n+1}(a R)  ] \,, \label{lommel1} \\
  \int_{0}^R J_n(a r ) J_n(b r ) r dr &=& \frac{R}{a^2 -b^2} [ b J_n(a R) J_{n-1}(b R) - a J_{n-1}(a R) J_{n}(b R)  ] \,,
  \label{lommel2}
\eqn
we get the following equation for $\lambda_{m,l}$:
\beqn
 & &J_m \left( \sqrt{1 - \frac{1}{ \lambda_{m,l} }} R  \right)  \notag \\
&=&  - \f{i}{4}\int_0^R r J_m (r)  J_m \left( \sqrt{1 - \frac{1}{ \lambda_{m,l} }} r  \right)  dr H^{(i)}_m (R) \notag \\
 &  = &
  \f{i}{4} R \lambda_{m,l} H^{(i)}_m (R) \bigg[ \sqrt{1 - \frac{1}{ \lambda_{m,l} }} J_m(R) J_{m-1}\left(\sqrt{1 - \frac{1}{ \lambda_{m,l} }} R \right) \notag \\
  && -  J_{m-1}( R) J_{m}\left(\sqrt{1 - \frac{1}{ \lambda_{m,l} }} R\right) \bigg]   \,.
  \label{root}
\eqn
Now since $\lambda_{m,l} \rightarrow 0 $ as $l \rightarrow \infty$, from the following well-known asymptotic of $J_n$ \cite{handbook} for all $n$:
\beqn
J_n \left( z  \right) = \sqrt{\frac{2}{ \pi z}} \cos \left( z - \frac{2 n + 1}{ 4} \pi \right) + O( {z}^{-3/2}) \,,
\label{order0}
\eqn
we obtain the following estimate for $m,n,l \in \mathbb{N}$:
\beqn
J_n \left( \sqrt{1 - \frac{1}{ \lambda_{m,l} }} R  \right) = \sqrt{\frac{2}{ \pi R \sqrt{1 - \frac{1}{ \lambda_{m,l} }}}} \cos \left( \sqrt{1 - \frac{1}{ \lambda_{m,l} }} R - \frac{2 n + 1}{ 4} \pi \right) + O( {| \lambda_{m,l}|}^{3/4}) \,.
\label{order}
\eqn
Hence, substituting this expression into  \eqref{root}, we shall directly infer that the eigenvalues $\lambda_{m,l}$ satisfy the following bound:
\beqn
J_m \left( \sqrt{1 - \frac{1}{ \lambda_{m,l} }} R  \right)
=  O( {|\lambda_{m,l}|}^{3/4} )  \,,
\label{importantbound}
\eqn
which has a decay order higher than the one in \eqref{order}. With this observation, we shall expect that the terms $\sqrt{1 - \frac{1}{ \lambda_{m,l} }} R$ should be close to the $l$-th zeros of the Bessel functions of $J_m$ as $l$ grows, which is indeed the case following the argument below.

For the sake of exposition, we shall often denote by $a_{m,l}$ the zeros of the $m$-th Bessel function of the first kind, i.e., $J_m(a_{m,l}) = 0$, arranged in ascending order.
Then it follows from \eqref{order0}, the inverse function theorem and the Taylor expansion that
\beqn
\left| a_{m,l} -  \frac{2 m  + 4 l - 1}{ 4} \pi   \right| < C {\left( m  + 2 l  \right) }^{-1/2} \rightarrow 0 \quad \text{ as } l \rightarrow \infty\,.
\eqn
Then, again from \eqref{order0}, we have
\beqn
 J_m' \left( a_{m,l} \right) - (-1)^{l} \sqrt{\frac{2}{ \pi a_{m,l}}}     =  O( { a_{m,l}}^{-3/2}) \,,
\eqn
which, combined with \eqref{importantbound}, leads to
\beqn
R \sqrt{1 - \frac{1}{ \lambda_{m,l} }} -  a_{m,l}   =  O( { a_{m,l}}^{-1/2} ) \,. 
\label{estdone}
\eqn
This gives us the following estimate for $\lambda_{m,l}$:
\beqn
R \sqrt{1 - \frac{1}{ \lambda_{m,l} }}   \bigg \slash \left( \frac{ (m + 2l) \pi }{2} - \frac{ \pi }{4} \right) \rightarrow 1 \quad \text{ as } l \rightarrow \infty\,.
\eqn
Therefore, we obtain the following decay rate of the eigenvalues,
\beqn
\lambda_{m,l}  \bigg \slash \left( \frac{4 R^2}{ \pi^2}\frac{ 1 }{(m + 2l)^2} \right) \rightarrow - 1 \quad \text{ as } l \rightarrow \infty\,.
\label{decayeig}
\eqn
Moreover, using \eqref{estdone} and the fact that $J_m$ is holomorphic, we have the following uniform estimate for the eigenfunctions:
\beqn
\left|\left| J_m \left( \sqrt{1 - \frac{1}{ \lambda_{m,l} }} r  \right) - J_m \left( \frac{a_{m,l}}{R} r  \right) \right|\right|_{\mathcal{C}^0 ((0,R))}
\leq C ||  J_m'||_{L^\infty ((0,R)) }  { a_{m,l}}^{-1/2}  < C {\left( m  + 2 l  \right) }^{-1/2} \,.
\eqn
Note that the set $\{ J_m \left( \frac{a_{m,l}}{R} r  \right) \}_{l = 1}^\infty$ forms a complete orthogonal basis in  $L^2((0,R), \,r \,dr)$. Hence, the above estimate actually implies that the eigenfunctions of $\tilde{K}^{(1)}_m $ approach in the sup-norm to an orthogonal basis in $L^2((0,R), \,r \,dr)$ for all $m \in \mathbb{N}$. From \eqref{symmetryeq}, together with the fact that $a_{-m,l} = a_{m,l}$ from \eqref{symmetry}, the above analysis also holds for $\tilde{K}^{(1)}_{-m}$.

The following theorem summarizes the main eigenvalue and eigenfunction estimates for the operator $\tilde{K}^{(1)}_m $.

\begin{Theorem}
For all $m \in \mathbb{Z}\backslash\{0\}$, the eigenpairs of the operator $\tilde{K}^{(1)}_m $ are of the form
\beqn
(\lambda_{m,l}, e_{m,l} ) =  \left( \lambda_{m,l}, J_m \left( \sqrt{1 - \frac{1}{ \lambda_{m,l} }} r  \right) \right)  \quad \text{ for } l \in \mathbb{N} \,,
\eqn
where the eigenvalues $\lambda_{m,l}$ satisfy the following asymptotic behavior
\beqn
\lambda_{m,l}  \bigg \slash \left( \frac{4 R^2}{ \pi^2}\frac{ 1 }{(|m| + 2l)^2} \right) \rightarrow - 1 \quad \text{ as } l \rightarrow \infty\,.
\label{decayeigeigeig}
\eqn
Moreover, the eigenfunctions also have the following uniform estimate:
\beqn
\left|\left| J_m \left( \sqrt{1 - \frac{1}{ \lambda_{m,l} }} r  \right) - J_m \left( \frac{a_{m,l}}{R} r  \right) \right|\right|_{\mathcal{C}^0 ((0,R))}= O ( \left( |m|  + 2 l  \right)^{-1/2} ) \,.
\label{esteigfun}
\eqn
\end{Theorem}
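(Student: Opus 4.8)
The plan is to assemble the computations carried out above into a clean three-part argument, treating $m>0$ first and then transferring to $m<0$ by symmetry. First I would reduce to $m\in\mathbb{N}$: the relations \eqref{symmetryeq} give $\widetilde{K}^{(1)}_{-m}=\widetilde{K}^{(1)}_m$, and \eqref{symmetry} gives $a_{-m,l}=a_{m,l}$, so every assertion for $-m$ follows verbatim from the one for $m$; the case $m=0$ is excluded. I would also note that $\widetilde{K}^{(1)}_m$ is injective (if $\widetilde{K}^{(1)}_m f=0$ then $f e^{im\theta}=(\Delta+1)\widetilde{K}^{(1)}(f e^{im\theta})=0$), so $0$ is not an eigenvalue and it suffices to treat $\lambda\neq 0$. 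For such $\lambda$, I would establish the characterization \eqref{eigenvalue}: applying the radial Helmholtz operator $\frac1r\partial_r r\partial_r+1-\frac{m^2}{r^2}$ to $\widetilde{K}^{(1)}_m f=\lambda f$ and using $(\Delta+1)\widetilde{K}^{(1)}(f e^{im\theta})=f e^{im\theta}$ turns the eigenvalue equation into the ODE $\bigl(\frac1r\partial_r r\partial_r+1-\frac1\lambda-\frac{m^2}{r^2}\bigr)f=0$, together with regularity at $r=0$ and the nonlocal relation at $r=R$ coming directly from the definition \eqref{defKm}. Regularity at the origin selects the $J_m$-branch, unique up to scalars, so the eigenfunction must be $J_m(\sqrt{1-1/\lambda}\,r)$ as in \eqref{eigeigeig}, and in particular every nonzero eigenvalue has geometric multiplicity one.

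The second part is the asymptotic analysis of the eigenvalues. Substituting \eqref{eigeigeig} into the boundary relation and evaluating $\int_0^R r J_m(r)J_m(\sqrt{1-1/\lambda}\,r)\,dr$ with the Lommel formula \eqref{lommel2} produces the transcendental equation \eqref{root} for $\lambda_{m,l}$. Since the nonzero eigenvalues of a compact operator can accumulate only at $0$, we have $\lambda_{m,l}\to 0$; feeding the large-argument expansion \eqref{order0}--\eqref{order} into \eqref{root} shows the right-hand side is $O(|\lambda_{m,l}|^{3/4})$, hence $J_m(\sqrt{1-1/\lambda_{m,l}}\,R)=O(|\lambda_{m,l}|^{3/4})$, which is the improved decay \eqref{importantbound} beyond the generic $O(|\lambda_{m,l}|^{1/4})$ in \eqref{order}. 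Because $J_m$ near a zero $a_{m,l}$ has $|J_m'|\sim\sqrt{2/(\pi a_{m,l})}$ (again by \eqref{order0}), this forces $R\sqrt{1-1/\lambda_{m,l}}$ to lie within $O(a_{m,l}^{-1/2})$ of some zero $a_{m,l}$ of $J_m$, i.e.\ \eqref{estdone}; combined with the classical expansion $a_{m,l}=\frac{(2m+4l-1)\pi}{4}+O((m+2l)^{-1/2})$ (from \eqref{order0} and the inverse function theorem), solving \eqref{estdone} for $\lambda_{m,l}$ yields the decay rate \eqref{decayeigeigeig}, exactly \eqref{decayeig} for $m\in\mathbb{N}$. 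Here one must also verify that the labelling is consistent, namely that the $l$-th largest eigenvalue is the one matched to the $l$-th zero $a_{m,l}$ and that no eigenvalue is skipped: this follows because the matched eigenfunctions $J_m(a_{m,l}r/R)$, $l\ge 1$, already form a complete orthogonal system in $L^2((0,R),r\,dr)$, so the family $\{\lambda_{m,l}\}_{l\in\mathbb{N}}$ exhausts $\sigma(\widetilde{K}^{(1)}_m)\setminus\{0\}$.

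Finally, the eigenfunction estimate \eqref{esteigfun} follows from \eqref{estdone} and the holomorphy of $J_m$: writing $J_m(\sqrt{1-1/\lambda_{m,l}}\,r)-J_m(\frac{a_{m,l}}{R}r)$ as the integral of $J_m'$ over the interval between the two arguments and bounding $|J_m'|$ uniformly on a fixed compact set gives a bound $\le C\|J_m'\|_{L^\infty((0,R))}\,\bigl|R\sqrt{1-1/\lambda_{m,l}}-a_{m,l}\bigr|=O((|m|+2l)^{-1/2})$, uniformly in $r\in(0,R)$. The main obstacle I anticipate is not any single estimate but the bookkeeping in the second part: upgrading the heuristic ``$R\sqrt{1-1/\lambda_{m,l}}$ is close to $a_{m,l}$'' into a genuine index-preserving bijection between the eigenvalue sequence and the zero sequence, with control uniform in $m$, so that the indices in \eqref{decayeigeigeig} and \eqref{esteigfun} are unambiguous; this is where one must combine the compactness and completeness input with the full transcendental equation \eqref{root} rather than merely its leading-order asymptotics.
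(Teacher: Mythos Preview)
Your proposal is correct and follows essentially the same route as the paper: reduce to $m\in\mathbb{N}$ by \eqref{symmetryeq}--\eqref{symmetry}, convert the eigenvalue problem into the Bessel ODE \eqref{eigenvalue}, identify the eigenfunctions \eqref{eigeigeig}, derive the transcendental equation \eqref{root} via Lommel's integral, extract the improved bound \eqref{importantbound} from the large-argument asymptotics, localize $R\sqrt{1-1/\lambda_{m,l}}$ near $a_{m,l}$ as in \eqref{estdone}, and read off both \eqref{decayeigeigeig} and \eqref{esteigfun}. If anything, you are more scrupulous than the paper on two points it leaves implicit: the injectivity of $\widetilde{K}^{(1)}_m$ (so $\lambda\neq 0$ is the only relevant case) and the index-preserving bijection between the eigenvalue sequence and the Bessel zeros, which the paper asserts without a completeness argument.
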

This theorem 
is very important for the analysis of the behaviors of $W_{nm}$ and $C(\varepsilon^*,n,m)$.
Figure \ref{eigendistribution} shows the distribution of eigenvalues of $\tilde{K}^{(1)}_m$ for $R=10$ with different values of $m$.
It not only illustrates that  the spectral radius decreases as the value of $m$ increases (which agrees with the estimate \eqref{normest}); but also that, for a fixed number $l \in \mathbb{N}$, the magnitude of the $l$-th eigenvalue of $\tilde{K}^{(1)}_m$ decreases in general monotonically with respect to increment of $m$ (which agrees with \eqref{decayeigeigeig}).
Eigenfunctions of $\tilde{K}^{(1)}_m$ for some values of $m$ are also plotted in Figure \ref{eigenfunction} for a better illustration of the behaviour of eigenfunctions.

\begin{figurehere}
\center

        \hfill{}   \hfill{} \hfill{}   \hfill{}

        \hfill{}\includegraphics[clip,width=0.47\textwidth]{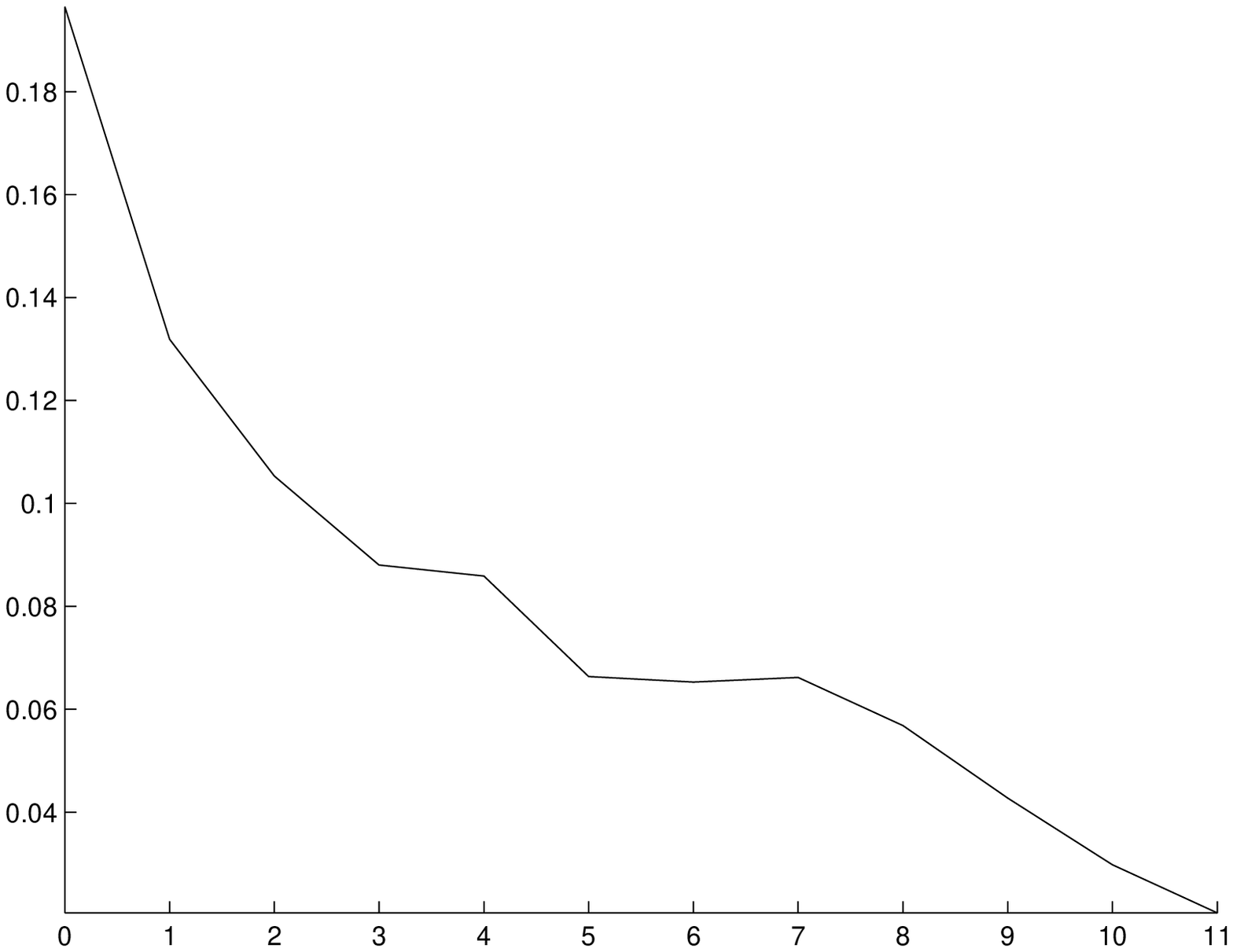}\hfill{}
        \hfill{}\includegraphics[clip,width=0.47\textwidth]{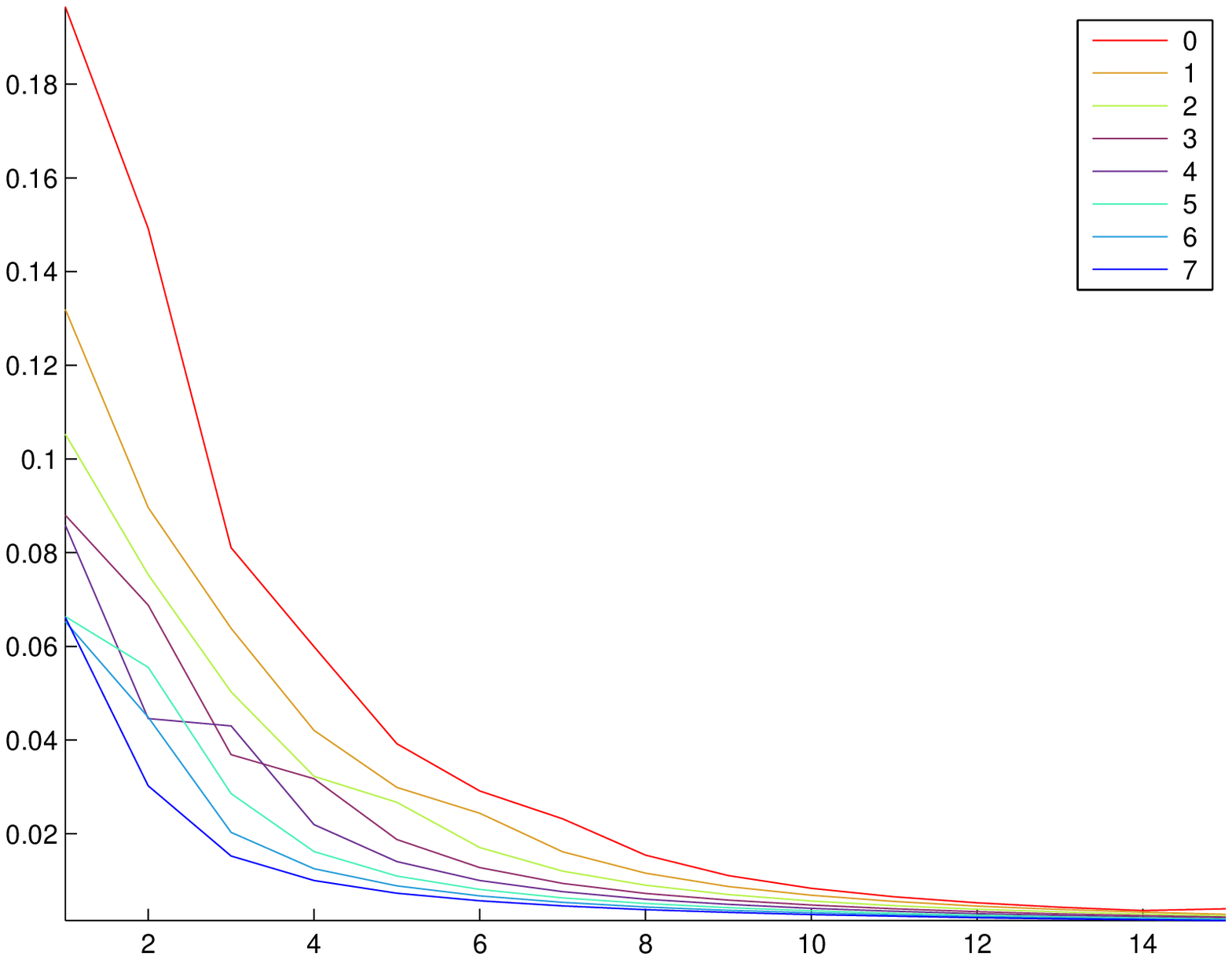}\hfill{}

        \hfill{}(a)\hfill{} \hfill{}(b)\hfill{}

\caption{
(a) Spectral radius of $\tilde{K}^{(1)}_m$ for $m=0,1,\ldots,11$.
(b) Norms of eigenvalues $\lambda_{m,l}, l = 1,2,\ldots,15,$ for operators $\tilde{K}^{(1)}_m, m=0,1,\ldots,7,$ as in the legend.
}\label{eigendistribution}
\end{figurehere}

\begin{figurehere}
\center

        \hfill{}   \hfill{} \hfill{}   \hfill{}

        \hfill{}\includegraphics[clip,width=0.47\textwidth]{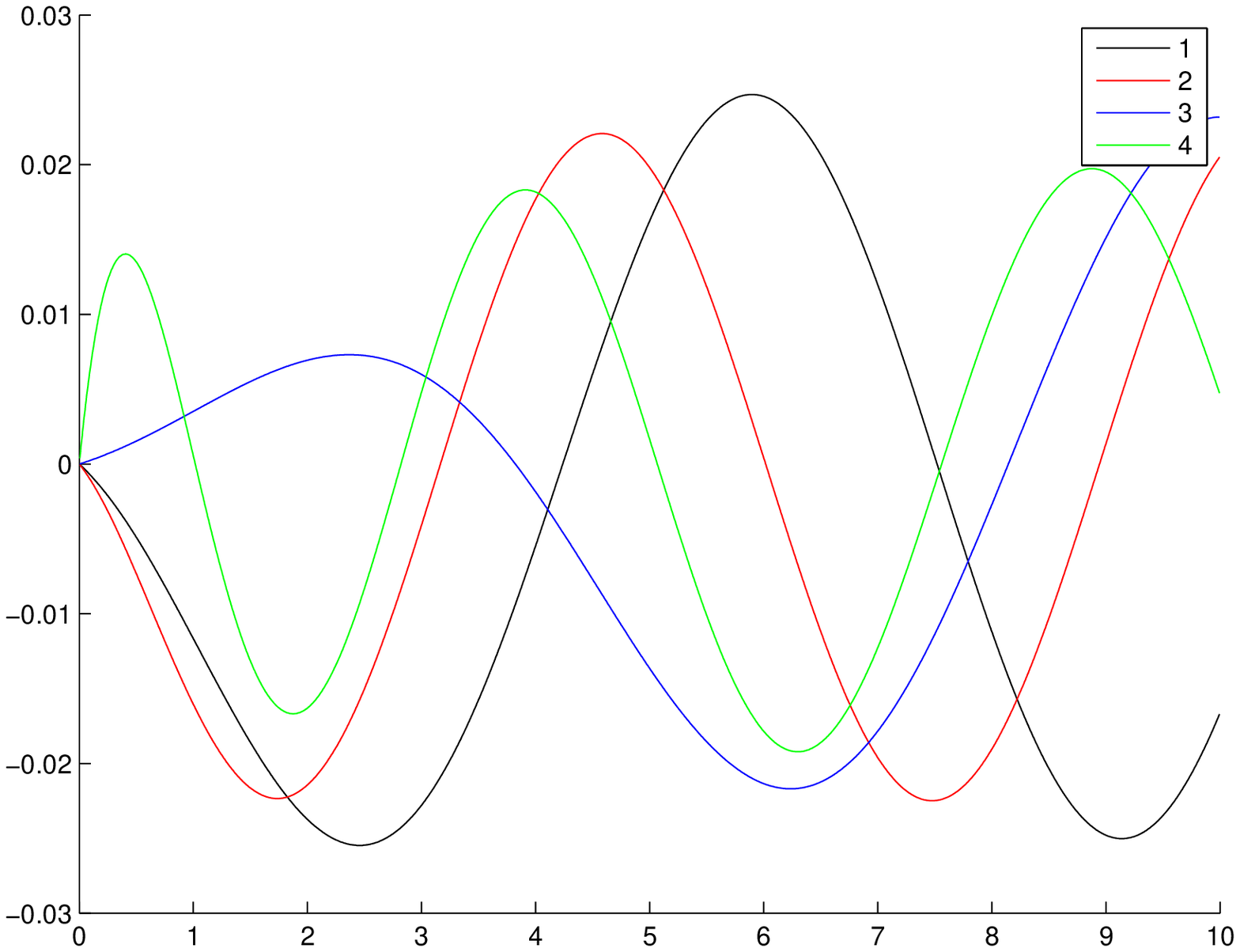}\hfill{}
        \hfill{}\includegraphics[clip,width=0.47\textwidth]{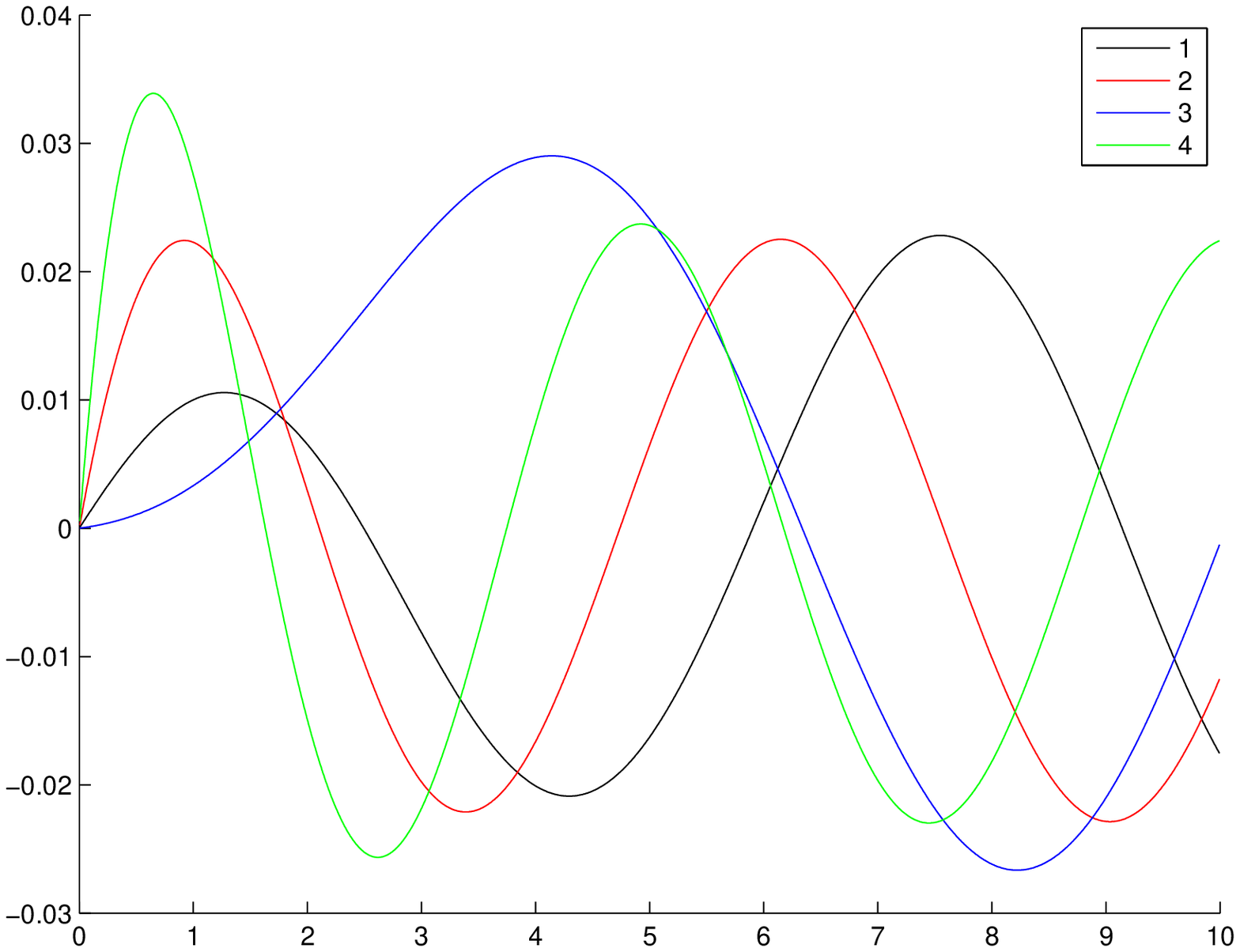}\hfill{}

        \hfill{}(1a)\hfill{} \hfill{}(1b)\hfill{}

        \hfill{}\includegraphics[clip,width=0.47\textwidth]{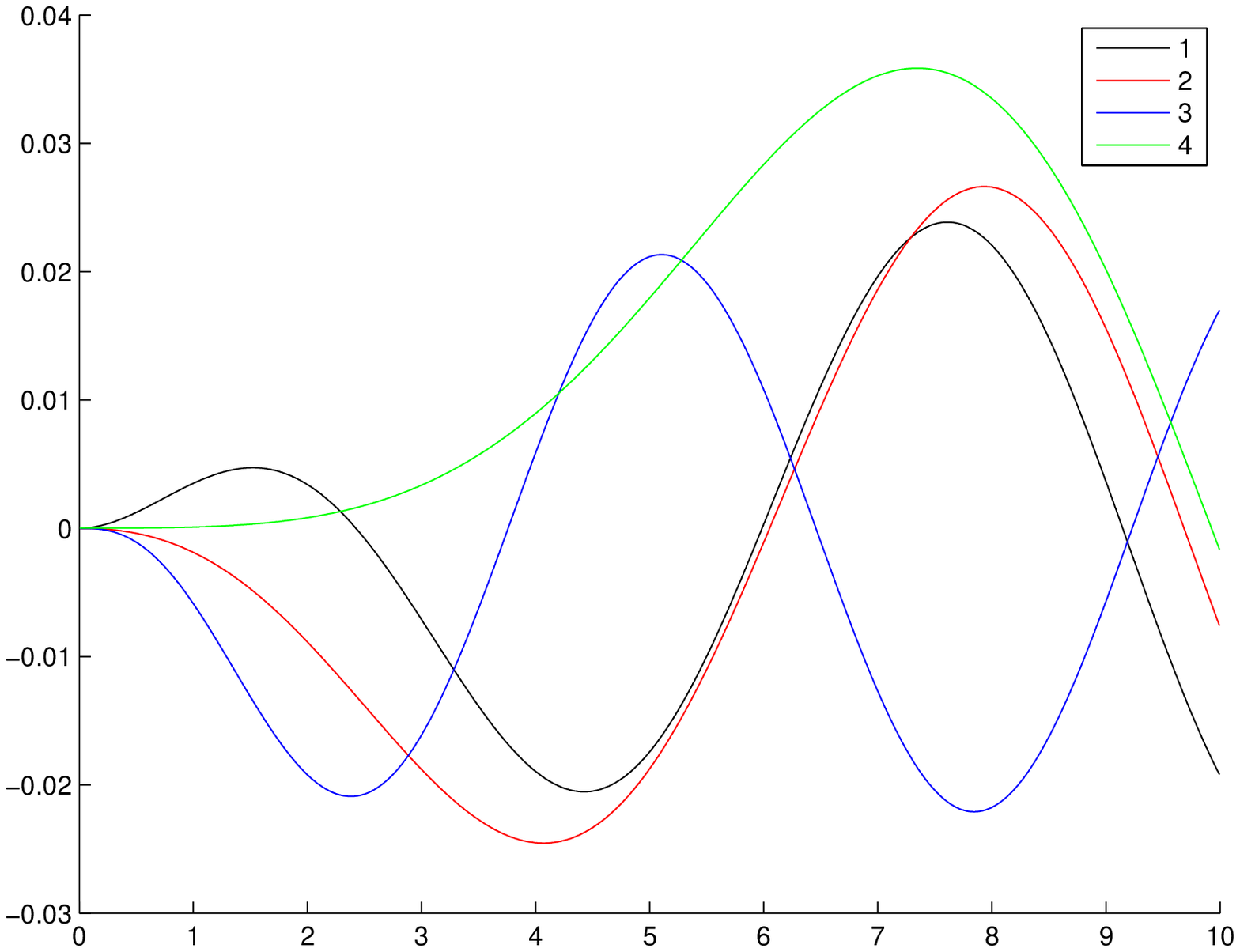}\hfill{}
        \hfill{}\includegraphics[clip,width=0.47\textwidth]{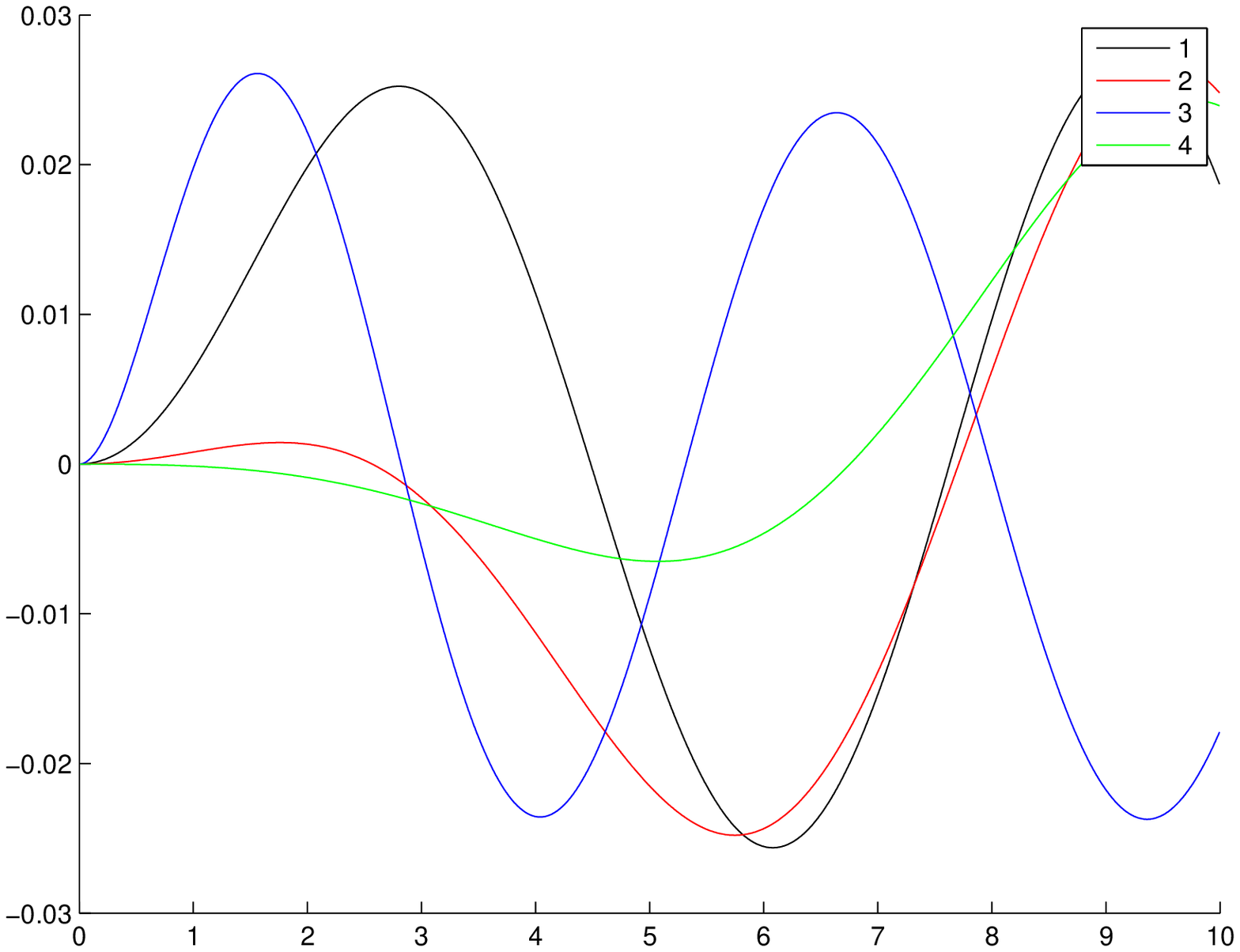}\hfill{}

        \hfill{}(2a)\hfill{} \hfill{}(2b)\hfill{}

        \hfill{}\includegraphics[clip,width=0.47\textwidth]{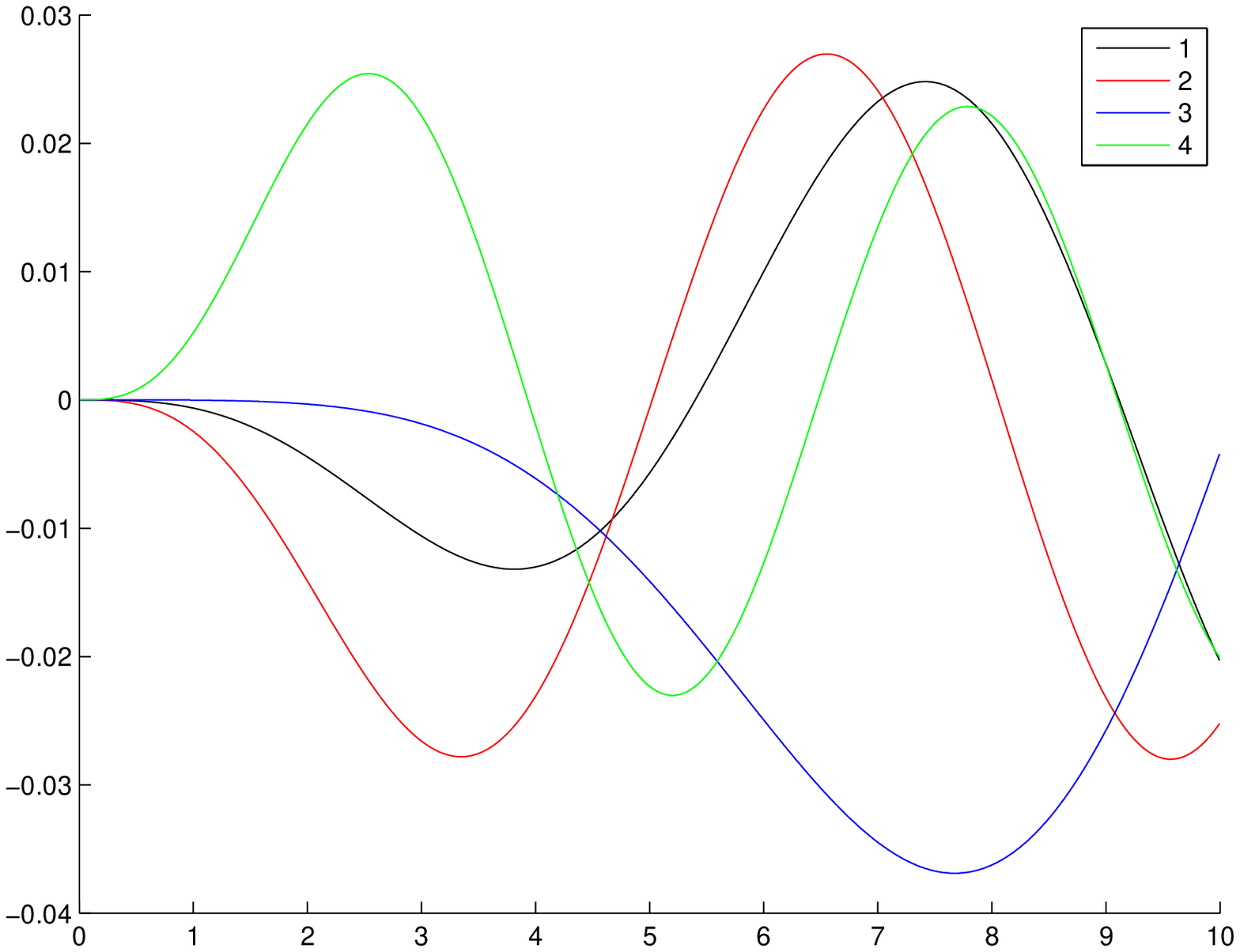}\hfill{}
        \hfill{}\includegraphics[clip,width=0.47\textwidth]{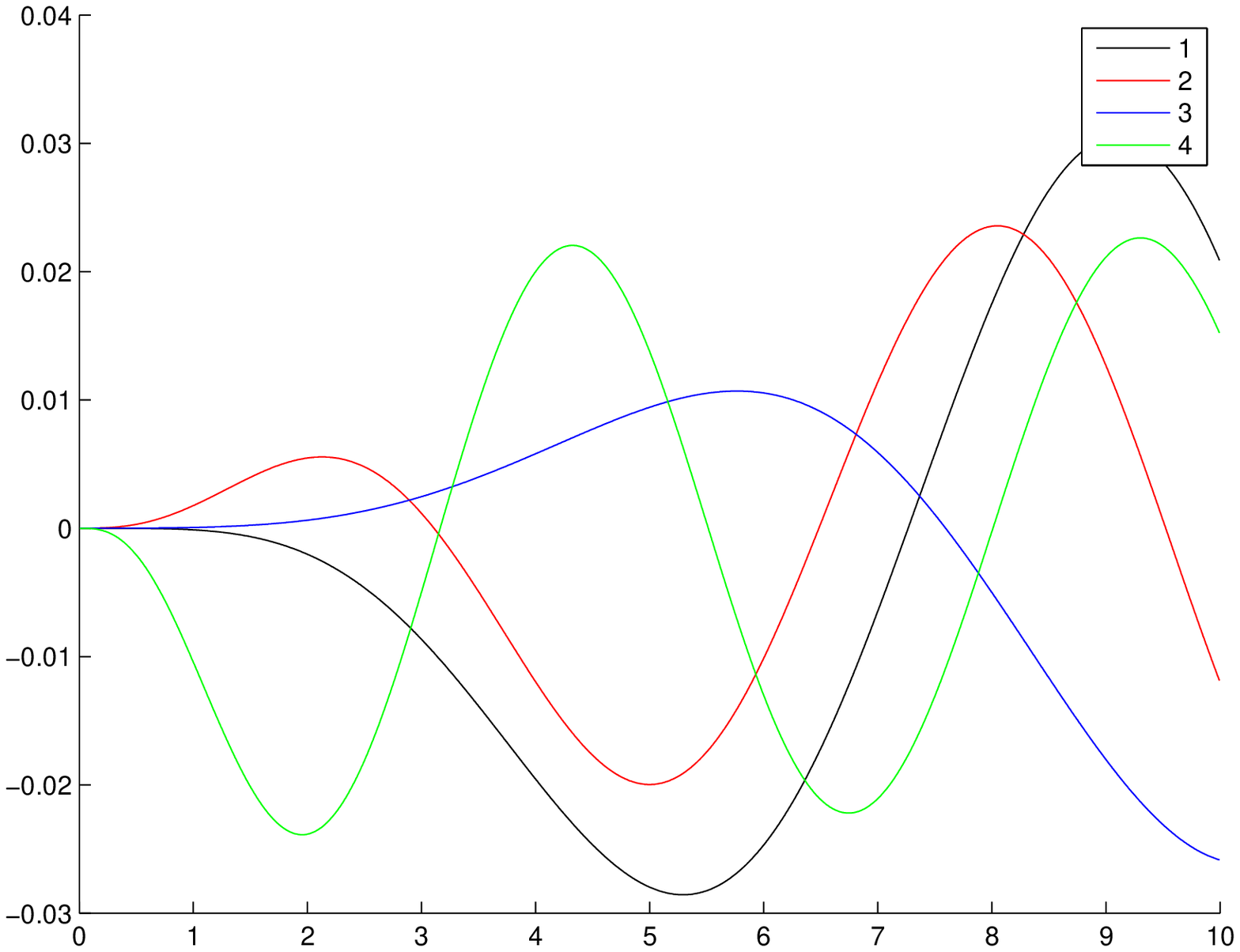}\hfill{}

        \hfill{}(3a)\hfill{} \hfill{}(3b)\hfill{}

\caption{
Real and imaginary parts of the first $4$ eigenfunctions of $\tilde{K}^{(1)}_m$, $m = 1,2,3$.
(1a) Real parts of eigenfunctions of $\tilde{K}^{(1)}_1$; (1b) imaginary parts of eigenfunctions of $\tilde{K}^{(1)}_1$;
(2a) real parts of eigenfunctions of $\tilde{K}^{(1)}_2$, and so forth.
}\label{eigenfunction}
\end{figurehere}

\subsection{Tail behavior of the series representation of $W_{nm}$ and $C(\varepsilon^*,n,m)$ and the super-resolution phenomenon}  \label{sec4_3}

In this subsection,  we deduce very useful information on the behaviors of $W_{nm}$ and $C(\varepsilon^*,n,m)$ from the asymptotic behaviors of eigenpairs of $\tilde{K}^{(1)}_m$ derived in the previous subsection.

\subsubsection{Tail behavior of the series representation of $W_{nm}$}

We first focus on the scattering coefficients $W_{nm} (D,\varepsilon^*)$ when $D = B(0,R)$.
Form \eqref{useful}, it is known that $W_{nm} = 0$ when $n \neq m$, therefore the only interesting case is when $n = m$.
Again, we shall first consider $m\in \mathbb{N}$.  From the analysis in the previous subsection that the geometric multiplicities of all the eigenvalues of $\tilde{K}^{(1)}_m$ are $N_{\lambda}^m =1$ , we already obtain 
from \eqref{sumsubpart} that
\beqn
W_{mm} (D,\varepsilon^*) =
\sum_{ l = 0}^{\infty}  [ \langle J_m ( r), {\bf e}_{m,l}   \rangle_{L^2((0,R), r dr)} ]^T  [J_{m,{\varepsilon^*}^{-1} + \lambda_{m,l}}]^{-1}
 (  \, J_m(r) \, )_{{\bf e}_{m,l}, L^2((0,R), r \, dr ) } .
\notag
\eqn
For the sake of simplicity, from now on we shall often denote
\beqn
\widetilde{\lambda_{m,l}} := \frac{1}{1 - \frac{a_{m,l}^2}{R^2} }\quad \text{ and } \quad
\widetilde{{e}_{m,l}} := J_m \left( \frac{a_{m,l}}{R} r  \right)  \,.
\label{defdef}
\eqn
From \eqref{lommel1} and \eqref{esteigfun}, together with the completeness and orthogonality of $\widetilde{{e}_{m,l}}$ in $L^2((0,R) , \, r \, dr)$ and the Parseval's identity, we readily obtain that, fixing any $m \in \mathbb{N}$ and for any given $\epsilon$, there exists $N(m)$ such that for all $i > N(m) $, we have
\beqn
 \left| \langle  { e}_{m,i}  ,  \widetilde{{e}_{m,j}}  \rangle_{L^2((0,R), r dr)}  - \delta_{ij} \frac{R^2}{2} J_{m+1}^2(a_{m,j})   \right|  < \epsilon_{ij} \,,
 \label{almostorth}
\eqn
where $\sum_{j} \epsilon_{ij}^{2}  < \epsilon^2$ . Therefore, for a large $N_1(m)$, the span of $\{   { e}_{m,l} \}_{l= N_1(m)}^\infty$ has a finite dimensional orthogonal complement. This follows that there exists a large $N_2(m) > N_1(m)$ such that the algebraic multiplicity of $\lambda_{m,l}$ is $1$.
Therefore, we directly obtain
\beqn
W_{mm} (D,\varepsilon^*)  &=& S_{1,m}(\varepsilon^*) + S_{2,m} (\varepsilon^*),
\label{serieswnn}
\eqn
where the sums $S_{i,m}(\varepsilon^*)$, $i = 1,2,$ are defined by
\beqn
S_{1,m}(\varepsilon^*)
&:=&
\sum_{ l = 0}^{N_2(m)}  [ \langle J_m ( r), {\bf e}_{m,l}   \rangle_{L^2((0,R), r dr)} ]^T  [J_{m,{\varepsilon^*}^{-1} + \lambda_{m,l}}]^{-1}
 (  \, J_m(r) \, )_{{\bf e}_{m,l}, L^2((0,R), r \, dr ) }  \\
S_{2,m}  (\varepsilon^*) &:= & \sum_{ l = N_2(m) +1 }^{\infty} \frac{ \alpha_{m,l} }{ {\varepsilon^*}^{-1} + \lambda_{m,l} } \,,
\label{defSm}
\eqn
with the coefficients $\alpha_{m,l}$ being defined, for all $m,l$, as
\beqn
\alpha_{m,l} :=\langle J_m ( r), { e}_{m,l}   \rangle_{L^2((0,R), r dr)} (  \, J_m(r) \, )_{{ e}_{m,l}, L^2((0,R), r \, dr ) }.
\eqn
Note that for any $\varepsilon^* \geq - 2 \, \text{Re} \left( \lambda^{-1}_{m,N_2(m)} \right)$, we have $|S_{1,m}(\varepsilon^*)| < C_m $ for some constant $C_m$.
Therefore, if we want to investigate the behavior of \eqref{serieswnn} for large $\varepsilon^*$, we shall focus on the term $S_{2,m}(\varepsilon^*)$.
For this purpose, we analyse the limiting behavior of $\alpha_{m,l}$ as $l$ increases.
Now, from \eqref{lommel2} and \eqref{esteigfun}, we have the following estimate for the inner product:
\beqn
\langle J_m ( r), { e}_{m,l}   \rangle_{L^2((0,R), r dr)} -   \widetilde{\lambda_{ml}} a_{m,l} J_m(R) J_{m-1}( a_{m,l} )  = O (a_{m,l}^{-1/2} ) \,.
\eqn
From \eqref{order0} we get
\beqn
 J_{m \pm 1} \left( a_{m,l} \right) - (-1)^{l} \sqrt{\frac{2}{ \pi a_{m,l}}}     =  O( { a_{m,l}}^{-3/2}) \,,
\eqn
and hence it follows that
\beqn
\langle J_m ( r), { e}_{m,l}   \rangle_{L^2((0,R), r dr)} \bigg \slash   (-1)^{l} \widetilde{\lambda_{m,l}} a_{m,l}^{1/2} \sqrt{\frac{2}{ \pi }}  J_m(R)   \rightarrow 1 \quad \text{ as } l \rightarrow \infty \,.
\label{limit1}
\eqn
From \eqref{almostorth}, we obtain that the coefficient of $J_m(r)$ of ${e}_{m,l}$ with respect to the Jordan basis approaches to the orthogonal project of $J_m(r)$ on the subspace ${e}_{m,l}$, whence the following holds
\beqn
(  \, J_m(r) \, )_{{ e}_{m,l}, L^2((0,R), r \, dr ) }
\bigg \slash
\frac{\langle J_m ( r), { e}_{m,l}   \rangle_{L^2((0,R), r dr)} }{ \frac{R^2}{2} [ J_{m-1}^2(a_{m,l})  ] } \rightarrow 1 \quad \text{ as } l \rightarrow \infty \,.
\label{limit2}
\eqn
Combining the above several limiting behaviors \eqref{limit1} and \eqref{limit2} yields
\beqn
\alpha_{m,l} \bigg \slash
2 \widetilde{\lambda_{m,l}}^2 a_{m,l}^2
\frac{ J_m^2(R) }{ R^2 }
\rightarrow 1 \quad \text{ as } l \rightarrow \infty \,,
\eqn
which can further be reduced to the following asymptotic behavior by combining  \eqref{estdone},\eqref{decayeig} and \eqref{defdef},
\beqn
\alpha_{m,l}
\bigg \slash
2 \lambda_{m,l} J_m^2(R)
\rightarrow - 1 \quad \text{ as } l \rightarrow \infty \,.
\eqn
From \eqref{symmetry} and \eqref{symmetryeq}, the conclusions also hold for the case with $-m \in \mathbb{N}$.

The above analysis can be summarized in the following theorem.
\begin{Theorem}
\label{important}
Let $D = B(0,R)$ be a circular domain.
For all $m \in \mathbb{Z} \backslash \{0\}$, there exist constants $N(m) \in \mathbb{N}$ and $C_m > 0$ such that, for any given contrast value $\varepsilon^* > -2 \, \text{Re} \left(  \lambda^{-1}_{m,N(m)} \right)$, the scattering coefficient $W_{mm}( D, \varepsilon^* )$ has the following decomposition
\beqn
W_{mm} (D,\varepsilon^*)  &=&  S_{1,m}(\varepsilon^*) + S_{2,m} (\varepsilon^*),
\eqn
where $S_{1,m}(\varepsilon^*)$ has a uniform bound
\beqn
|S_{1,m}(\varepsilon^*)| < C_m  \,,
\eqn
whereas $S_{2,m} (\varepsilon^*) $ is of the form
\beqn
S_{2,m}  (\varepsilon^*) = \sum_{ l = N_2(m) +1 }^{\infty} \frac{ \alpha_{m,l} }{ {\varepsilon^*}^{-1} + \lambda_{m,l} },
\eqn
where the coefficients $\alpha_{m,l}$ have the following limiting behavior
\beqn
\alpha_{m,l}
\bigg \slash
2 \lambda_{m,l} J_m^2(R)
\rightarrow - 1 \quad \text{ as } l \rightarrow \infty \,.
\label{convalpha}
\eqn
\end{Theorem}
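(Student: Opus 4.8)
The plan is to feed the eigenvalue and eigenfunction asymptotics of section \ref{sec4_2} into the spectral series for $W_{mm}$ obtained from the Riesz--Jordan decomposition of section \ref{sec4_1}. We argue for $m\in\mathbb{N}$ first; the case $-m\in\mathbb{N}$ follows verbatim from \eqref{symmetry}, \eqref{symmetryeq} and $a_{-m,l}=a_{m,l}$. By Theorem \ref{fundamentalhaha} and the expansion \eqref{sumsubpart}, using that every eigenvalue $\lambda_{m,l}$ of $\tilde{K}^{(1)}_m$ has geometric multiplicity $1$ (a consequence of \eqref{eigeigeig}), we have, for $-{\varepsilon^*}^{-1}\notin\sigma(\tilde{K}^{(1)}_m)$,
\[
W_{mm}(D,\varepsilon^*) = \sum_{l=0}^{\infty} \big[ \langle J_m(r), {\bf e}_{m,l} \rangle_{L^2((0,R),\,r\,dr)} \big]^T [J_{m,\,{\varepsilon^*}^{-1}+\lambda_{m,l}}]^{-1} \big( J_m(r) \big)_{{\bf e}_{m,l},\,L^2((0,R),\,r\,dr)},
\]
each summand being the contribution of one Jordan block $J_{m,\lambda_{m,l}}$.

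The first and principal step is to show that all but finitely many of these Jordan blocks are $1\times1$, i.e.\ that $\lambda_{m,l}$ is algebraically simple for $l$ large. By \eqref{esteigfun} the eigenfunctions $e_{m,l}$ converge in $\mathcal{C}^0((0,R))$, hence in $L^2((0,R),r\,dr)$, to the complete orthogonal system $\{\widetilde{e_{m,l}}\}_{l\ge1}$ of \eqref{defdef}; quantitatively, \eqref{almostorth} together with \eqref{lommel1} and Parseval's identity shows that for some $N_1(m)$ the family $\{e_{m,l}\}_{l>N_1(m)}$ is a Riesz sequence whose closed span has finite codimension in $L^2((0,R),r\,dr)$. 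Since the generalized eigenspaces of a compact operator attached to distinct eigenvalues are linearly independent, only finitely many $\lambda_{m,l}$ can carry a generalized eigenvector lying outside this span, so the algebraic multiplicities are eventually $1$. Fix $N(m):=N_2(m)>N_1(m)$ beyond which this holds and split the series at $l=N_2(m)$: $W_{mm}=S_{1,m}+S_{2,m}$, where $S_{1,m}$ is the head $0\le l\le N_2(m)$ (collecting every nontrivial block) and, in the tail, each block being scalar, $S_{2,m}(\varepsilon^*)=\sum_{l>N_2(m)}\alpha_{m,l}/({\varepsilon^*}^{-1}+\lambda_{m,l})$ with $\alpha_{m,l}$ as in the statement.

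The uniform bound on $S_{1,m}$ is then only a matter of staying away from the finitely many poles. Each term of $S_{1,m}$ is a rational function of ${\varepsilon^*}^{-1}$ whose poles are confined to ${\varepsilon^*}^{-1}=-\lambda_{m,l}$, $l\le N_2(m)$. Since $\varepsilon^*>0$, the point $-{\varepsilon^*}^{-1}$ lies on the negative real axis, so it is automatically at positive distance from each fixed nonreal $\lambda_{m,l}$; the hypothesis $\varepsilon^*>-2\,\text{Re}(\lambda^{-1}_{m,N(m)})$, combined with the monotone decay \eqref{decayeigeigeig} of $|\lambda_{m,l}|$ in $l$, makes this quantitative, keeping $\inf_{l\le N_2(m)}|{\varepsilon^*}^{-1}+\lambda_{m,l}|$ bounded below. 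Hence $|S_{1,m}(\varepsilon^*)|<C_m$ with $C_m$ independent of $\varepsilon^*$.

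Finally, the asymptotics \eqref{convalpha} of $\alpha_{m,l}$ is the most computational ingredient. Combining the Lommel integral \eqref{lommel2} with \eqref{esteigfun} (and $J_m(a_{m,l})=0$) gives $\langle J_m(r),e_{m,l}\rangle_{L^2((0,R),r\,dr)}=\widetilde{\lambda_{m,l}}\,a_{m,l}\,J_m(R)\,J_{m-1}(a_{m,l})+O(a_{m,l}^{-1/2})$; the large-argument Bessel expansion \eqref{order0}, evaluated at the zero $a_{m,l}$ where the cosine collapses to $(-1)^l$, turns this into $\langle J_m(r),e_{m,l}\rangle\big/\big((-1)^l\widetilde{\lambda_{m,l}}a_{m,l}^{1/2}\sqrt{2/\pi}\,J_m(R)\big)\to1$. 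By \eqref{almostorth} the Jordan coefficient $(J_m(r))_{e_{m,l},L^2}$ is asymptotic to the genuine orthogonal-projection coefficient $\langle J_m(r),e_{m,l}\rangle\big/\big(\tfrac{R^2}{2}J_{m-1}^2(a_{m,l})\big)$, where \eqref{lommel1} supplies $\|\widetilde{e_{m,l}}\|^2=\tfrac{R^2}{2}J_{m-1}^2(a_{m,l})$. Multiplying the two asymptotics, the factors $J_{m-1}(a_{m,l})\sim(-1)^l\sqrt{2/\pi a_{m,l}}$ cancel and one gets $\alpha_{m,l}\big/\big(2\,\widetilde{\lambda_{m,l}}^{\,2}a_{m,l}^2R^{-2}J_m^2(R)\big)\to1$; inserting \eqref{estdone} and the decay rate \eqref{decayeig}, which together give $\widetilde{\lambda_{m,l}}^{\,2}a_{m,l}^2R^{-2}\sim-\lambda_{m,l}$, yields \eqref{convalpha}. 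The main obstacle is bookkeeping: one must keep the normalization of the eigenfunctions in the Jordan basis, the Lommel constants, and the sign $(-1)^l$ coming from the Bessel asymptotics at $a_{m,l}$ consistent throughout the chain of limits. Apart from this, the only genuinely non-routine point is the eventual algebraic simplicity of the eigenvalues established in the second step, since $\tilde{K}^{(1)}_m$ is not normal and a priori could carry persistent Jordan blocks.
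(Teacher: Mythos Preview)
Your proposal is correct and follows essentially the same route as the paper: the Riesz--Jordan series for $W_{mm}$ is split at an index $N_2(m)$ beyond which the eigenvalues are algebraically simple (argued via the near-orthogonality \eqref{almostorth} of the eigenfunctions to the complete system $\widetilde{e_{m,l}}$), the head is bounded by keeping ${\varepsilon^*}^{-1}$ away from the finitely many remaining poles, and the tail coefficients $\alpha_{m,l}$ are computed through the Lommel integral \eqref{lommel2}, the Bessel asymptotics \eqref{order0} at the zeros $a_{m,l}$, and the identification of the Jordan coefficient with the orthogonal-projection coefficient via \eqref{almostorth}. The only cosmetic slip is that the monotonicity of $|\lambda_{m,l}|$ in $l$ you invoke is by the chosen enumeration rather than by \eqref{decayeigeigeig}, but this does not affect the argument.
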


This decomposition of the coefficient $W_{mm}$ gives us a clear picture of the behavior of $W_{mm}$ as $\varepsilon^*$ grows.
When $\varepsilon^*$ increases, ${\varepsilon^*}^{-1}$ passes through the values $ - \text{Re} (\lambda_{m,l}) \sim (|m| + 2l)^{-2} $ for large $l$.
If $\lambda_{m,l} \in \mathbb{R}$, ${\varepsilon^*}^{-1}$ directly passes through the pole. Therefore $W_{mm}$ grows from a finite value rapidly to a directional complex infinity $\infty e^{i\theta}$ for some $\theta$, and then comes back from $ - \infty e^{i\theta}$  to a finite value after ${\varepsilon^*}^{-1}$ passes through it.
Otherwise, if $\lambda_{m,l} \notin \mathbb{R} $, then ${\varepsilon^*}^{-1}$ does not directly hit the pole.
However, since $\lambda_{m,l} \sim - (|m| + 2l)^{-2} $ where $ (|m| + 2l)^{-2}$ are real, $\text{Im} (\lambda_{m,l})$ is very small for large $l$. Hence, as ${\varepsilon^*}^{-1}$ moves close to $- \text{Re} (\lambda_{m,l})$, it comes close to the pole.
Therefore, $W_{mm}$ grows from a comparably small value very rapidly to a complex value of very large modulus, and then drops back to a small value after passing through $- \text{Re} (\lambda_{m,l})$.
The behavior of $W_{mm}$ is consequently very oscillatory as $\varepsilon^*$ grows.
Moreover, from \eqref{convalpha} we have for a fixed pair of $m,l$ that
\beqn
\frac{ \alpha_{m,l} }{ {\varepsilon^*}^{-1} + \lambda_{m,l} } \rightarrow - 2 J_m^2(R)
\eqn
as $\varepsilon^* \rightarrow \infty$, therefore that there is no hope on any convergence behavior of $W_{mm}$ as $\varepsilon^*$ grows to infinity.

Furthermore, from \eqref{decayeigeigeig} that the asymptotic $\lambda_{m,l} \sim - (|m| + 2l)^{-2} $ holds and the limit comparison test,  we have for a fixed $\varepsilon^* > - 2 \, \text{Re} \left( \lambda^{-1}_{m,N(m)} \right)$ that
\beqn
| W_{nm} (D,\varepsilon^*)| &\leq& \delta_{nm} \left( C_m +
\f{C_m'}{d( - {\varepsilon^*}^{-1} , \sigma (\tilde{K}^{(1)}_m) ) }
J_m^2(R) \sum_{l=0}^\infty  | \lambda_{m,l}| \right)
\\ &\leq&  \delta_{nm}  \left( C_m +
\f{C_m' }{d( - {\varepsilon^*}^{-1} , \sigma (\tilde{K}^{(1)}_m) ) }
\frac{R^{|m|+|n|}}{{|m|}^{|m|} {|n|}^{|n|}} \right) \,.
\eqn

\begin{Corollary}
Let $D = B(0,R)$.
For all $m \in \mathbb{Z} \backslash \{0\}$, there exist constants $N(m) \in \mathbb{N}$ and $C_{i,m}$, $i=1,2$ such that, for any given contrast value $\varepsilon^* > - 2 \, \text{Re} \left( \lambda^{-1}_{m,N(m)} \right)$, the scattering coefficient $W_{nm}(D, \varepsilon^* )$ satisfies the following estimate  for all $n \in \mathbb{Z}$,
\beqn
| W_{nm} (D,\varepsilon^*)|
\leq  \delta_{nm} \left( C_{1,m} +
\f{C_{2,m}}{d( - {\varepsilon^*}^{-1} , \sigma (\tilde{K}^{(1)}_m) ) }
\frac{R^{|m|+|n|}}{{|m|}^{|m|} {|n|}^{|n|}} \right) \,.
\eqn
\end{Corollary}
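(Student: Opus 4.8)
The plan is to read the estimate off the decomposition $W_{mm}(D,\varepsilon^*)=S_{1,m}(\varepsilon^*)+S_{2,m}(\varepsilon^*)$ furnished by Theorem~\ref{important}, combined with the eigenvalue decay rate~\eqref{decayeigeigeig} and the large-order Bessel asymptotics~\eqref{asymptotics}. First, I would dispose of the off-diagonal case: by Theorem~\ref{fundamentalhaha} (formula~\eqref{useful}) we already have $W_{nm}(D,\varepsilon^*)=0$ whenever $n\neq m$, which produces the Kronecker factor $\delta_{nm}$ and leaves only the diagonal case $n=m$, where $R^{|m|+|n|}/(|m|^{|m|}|n|^{|n|})=R^{2|m|}/|m|^{2|m|}$.

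Fixing $m\in\mathbb{Z}\backslash\{0\}$ and $\varepsilon^*>-2\,\text{Re}(\lambda_{m,N(m)}^{-1})$, Theorem~\ref{important} gives $|S_{1,m}(\varepsilon^*)|<C_m$ with $C_m$ depending only on $m$, and this uniformly bounded piece will be absorbed into $C_{1,m}$. The substance of the argument is the tail $S_{2,m}(\varepsilon^*)=\sum_{l>N_2(m)}\alpha_{m,l}/({\varepsilon^*}^{-1}+\lambda_{m,l})$. Since each $\lambda_{m,l}$ belongs to $\sigma(\tilde{K}^{(1)}_m)$, every denominator satisfies $|{\varepsilon^*}^{-1}+\lambda_{m,l}|\ge d(-{\varepsilon^*}^{-1},\sigma(\tilde{K}^{(1)}_m))$, so I would pull this factor out to get $|S_{2,m}(\varepsilon^*)|\le \frac{1}{d(-{\varepsilon^*}^{-1},\sigma(\tilde{K}^{(1)}_m))}\sum_{l}|\alpha_{m,l}|$. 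The limiting behavior~\eqref{convalpha}, $\alpha_{m,l}/(2\lambda_{m,l}J_m^2(R))\to-1$, then supplies a constant $C_m'$ (depending only on $m$) with $|\alpha_{m,l}|\le C_m'\,J_m^2(R)\,|\lambda_{m,l}|$ for every $l>N_2(m)$, and hence $|S_{2,m}(\varepsilon^*)|\le \frac{C_m'\,J_m^2(R)}{d(-{\varepsilon^*}^{-1},\sigma(\tilde{K}^{(1)}_m))}\sum_{l}|\lambda_{m,l}|$.

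It then remains to estimate $J_m^2(R)\sum_l|\lambda_{m,l}|$. By~\eqref{decayeigeigeig} one has $|\lambda_{m,l}|\sim\frac{4R^2}{\pi^2}(|m|+2l)^{-2}$, so the limit comparison test against $\sum_l(|m|+2l)^{-2}$ shows $\sum_l|\lambda_{m,l}|<\infty$, with value bounded by a constant depending only on $m$ and $R$; and the large-order expansion~\eqref{asymptotics}, which gives $J_m(R)^2\asymp(2\pi|m|)^{-1}(eR/2|m|)^{2|m|}$, yields $J_m^2(R)\sum_l|\lambda_{m,l}|\le C_m''\,R^{2|m|}/|m|^{2|m|}$ for a constant $C_m''$ depending only on $m$. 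Setting $C_{1,m}:=C_m$ and $C_{2,m}:=C_m'C_m''$ and combining the three bounds produces exactly the asserted inequality (which is vacuous when $-{\varepsilon^*}^{-1}\in\sigma(\tilde{K}^{(1)}_m)$, the right-hand side being $+\infty$).

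I do not expect a genuine obstacle here: the Corollary is essentially bookkeeping on top of Theorem~\ref{important}. The one point that needs a moment's care is the passage from the asymptotic~\eqref{convalpha} to the uniform bound $|\alpha_{m,l}|\le C_m'J_m^2(R)|\lambda_{m,l}|$ over \emph{all} $l>N_2(m)$: for the finitely many indices for which the limit has not yet taken effect, one enlarges $C_m'$ to the maximum of the finitely many finite ratios $|\alpha_{m,l}|/(J_m^2(R)|\lambda_{m,l}|)$, using that — by the construction preceding Theorem~\ref{important} — for $l>N_2(m)$ the $\lambda_{m,l}$ have algebraic multiplicity one and are nonzero (they accumulate at $0$ along the rate~\eqref{decayeigeigeig}), so no term of $S_{2,m}$ has a vanishing $\lambda_{m,l}$.
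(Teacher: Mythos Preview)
Your proposal is correct and follows essentially the same argument as the paper: split $W_{mm}$ via Theorem~\ref{important}, bound $S_{1,m}$ uniformly, pull the distance $d(-{\varepsilon^*}^{-1},\sigma(\tilde K^{(1)}_m))$ out of the tail $S_{2,m}$, use~\eqref{convalpha} to replace $|\alpha_{m,l}|$ by $C_m'J_m^2(R)|\lambda_{m,l}|$, apply the limit comparison test with~\eqref{decayeigeigeig} to sum $|\lambda_{m,l}|$, and finish with the Bessel asymptotic~\eqref{asymptotics}. If anything, you have been more explicit than the paper about the passage from the asymptotic~\eqref{convalpha} to a uniform bound over all $l>N_2(m)$.
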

This clearly improves the estimate \eqref{boundspecific}.

\subsubsection{Tail behavior of the series representation of $C(\varepsilon^*,n,m)$} 

We now focus on the behaviours of the coefficients $C(\varepsilon^*,n,m)$, which will
help us to understand the phenomenon of super-resolution.
We first focus on the case when $n,m \in \mathbb{N}$. 
We recall the expression of the coefficient $C(\varepsilon^*,n,m)$ in \eqref{coefficient}:
\beqn
C(\varepsilon^*,n,m) := {\varepsilon^*}^{-1}
\left[ \left( {\varepsilon^*}^{-1} + \tilde{K}^{(1)}_m\right)^{-1}  [J_m] \right]  (R)   \left[ \left({\varepsilon^*}^{-1} + \tilde{K}^{(1)}_n\right)^{-1}  [J_n] \right]  (R)\,.
\notag
\eqn
It remains to study the term $\left( {\varepsilon^*}^{-1} + \tilde{K}^{(1)}_m\right)^{-1} [J_m]  (R)$. From the previous subsection, the geometric multiplicities of all the eigenvalues of $\tilde{K}^{(1)}_m$ are $N_{\lambda}^m =1$, and the algebraic multiplicities of eigenvalues $\lambda_{m,l}$ of $\tilde{K}^{(1)}_m$ are also $1$ for $l> N_2(m)$ (see Theorem \ref{important}).
Together with the regularity of $J_m$, we readily obtain as in the previous subsection that
\beqn
C(\varepsilon^*,n,m) =  {\varepsilon^*}^{-1} ( s_{1,n}(\varepsilon^*) + s_{2,n} (\varepsilon^*)) ( s_{1,m}(\varepsilon^*) + s_{2,m} (\varepsilon^*)),
\label{seriesCnm}
\eqn
where the sums $s_{i,m}(\varepsilon^*)$, ($i = 1,2$) are defined by
\beqn
s_{1,m}(\varepsilon^*)
&:=&
\sum_{ l = 0}^{N_2(m)}  ( {\bf e}_{m,l}(R)  )^T  [J_{m,{\varepsilon^*}^{-1} + \lambda_{m,l}}]^{-1}
 (  \, J_m(r) \, )_{{\bf e}_{m,l}, L^2((0,R), r \, dr ) }  ,\\
s_{2,m}  (\varepsilon^*) &:= & \sum_{ l = N_2(m) +1 }^{\infty} \frac{ \beta_{m,l} }{ {\varepsilon^*}^{-1} + \lambda_{m,l} } \,
\label{defsm}
\eqn
with the coefficients $\beta_{m,l}$ being given for all $m,l$ by
\beqn
\beta_{m,l} :=  (  \, J_m(r) \, )_{{ e}_{m,l}, L^2((0,R), r \, dr )  }  J_m \left( \sqrt{1 - \frac{1}{ \lambda_{m,l} }} R  \right).
\eqn
Similarly to the previous subsection, for any $\varepsilon^* \geq - 2 \, \text{Re} \left( \lambda^{-1}_{m,N_2(m)} \right)$, we have $|s_{1,m}(\varepsilon^*)| < C_m $ for some constant $C_m$.
Therefore, we can study the behavior of \eqref{seriesCnm} for large $\varepsilon^*$ by investigating the limiting behavior of $\beta_{m,l}$ in the series $s_{2,m} (\varepsilon^*)$ .

Substituting \eqref{order}, \eqref{estdone}  and \eqref{decayeig} into \eqref{root}, we readily derive
\beqn
J_m \left( \sqrt{1 - \frac{1}{ \lambda_{m,l} }} R  \right) \bigg\slash
(-1)^l \f{i}{4} \lambda_{m,l} a_{m,l}^{1/2}  H^{(1)}_m (R) J_m(R)  \sqrt{\frac{2 R }{ \pi} }
\rightarrow 1 \quad \text{ as } l \rightarrow \infty \,.
\label{limit11}
\eqn
Together with \eqref{limit1} and \eqref{limit2}, we conclude that
\beqn
\beta_{m,l}
\bigg \slash
\f{i}{2} \sqrt{R} \lambda_{m,l} J_m^2(R) H^{(1)}_m (R)
\rightarrow - 1 \quad \text{ as } l \rightarrow \infty \,.
\eqn
Combining the above results with \eqref{symmetry} and \eqref{symmetryeq}, we obtain the following decomposition of $C(\varepsilon^*,n,m)$.
\begin{Theorem}
\label{impthm}
Let $D = B(0,R)$ be a circular domain.
For all $p \in \mathbb{Z} \backslash \{0\}$, there exist constants $N(p) \in \mathbb{N} $ and $C_p > 0$ such that, for
any $n, m \in \mathbb{Z} \backslash \{0\}$
and any contrast value $\varepsilon^* > - 2 \, \max \left\{ \text{Re}  \left( \lambda^{-1}_{n,N(n)} \right), \text{Re} \left(  \lambda^{-1}_{m,N(m)} \right) \right\}$, the coefficient $C(\varepsilon^*,n,m)$
\eqref{coefficient} admits the following decomposition:
\beqn
C(\varepsilon^*,n,m) = {\varepsilon^*}^{-1} ( s_{1,n}(\varepsilon^*) + s_{2,n} (\varepsilon^*)) ( s_{1,m}(\varepsilon^*) +s_{2,m} (\varepsilon^*))\,.
\eqn
For all $p \in \mathbb{Z} \backslash \{0\}$, $s_{1,p}(\varepsilon^*)$ satisfies the uniform bound
\beqn
|s_{1,p}(\varepsilon^*)| < C_p \,,
\eqn
whereas $s_{2,p} (\varepsilon^*)$ is given by
\beqn
s_{2,p}  (\varepsilon^*) = \sum_{ l = N_2(p) +1 }^{\infty} \frac{ \beta_{p,l} }{ {\varepsilon^*}^{-1} + \lambda_{p,l} }
\eqn
where the coefficients $\beta_{p,l}$ have the following limiting behavior
\beqn
\beta_{p,l}
\bigg \slash
\f{i}{2} \sqrt{R} \lambda_{p,l} J_p^2(R) H^{(1)}_p (R)
\rightarrow -1 \quad \text{ as } l \rightarrow \infty \,.
\label{convalpha2}
\eqn
\end{Theorem}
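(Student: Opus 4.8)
The plan is to reduce the whole statement to a single‑index computation. By the expression \eqref{coefficient}, $C(\varepsilon^*,n,m)$ is just ${\varepsilon^*}^{-1}$ times the product of the quantity $\left[\left({\varepsilon^*}^{-1}+\tilde{K}^{(1)}_m\right)^{-1}[J_m]\right](R)$ for the index $m$ and the same quantity for the index $n$. So it suffices to show that, for each fixed $p\in\mathbb{N}$, this factor decomposes as $s_{1,p}(\varepsilon^*)+s_{2,p}(\varepsilon^*)$ with the asserted properties and with a threshold $N_2(p)$; one then multiplies out, sets $N(p):=N_2(p)$, and observes that the hypothesis $\varepsilon^* > -2\max\{\text{Re}(\lambda^{-1}_{n,N(n)}),\text{Re}(\lambda^{-1}_{m,N(m)})\}$ forces $-{\varepsilon^*}^{-1}$ to stay, for both indices simultaneously, uniformly bounded away from the finitely many eigenvalues $\lambda_{\cdot,l}$ with $l\le N_2(\cdot)$ — exactly what controls $s_{1,n}$ and $s_{1,m}$. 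The passage from $p\in\mathbb{N}$ to $p\in\mathbb{Z}\setminus\{0\}$ is then immediate from \eqref{symmetry} and \eqref{symmetryeq} together with $a_{-p,l}=a_{p,l}$.

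For fixed $m\in\mathbb{N}$, I would first invoke the Riesz/Jordan decomposition of $\tilde{K}^{(1)}_m$ from Section \ref{sec4_1} in the resolvent form \eqref{sumlong}. Theorem \ref{important} (resting on the near‑orthogonality \eqref{almostorth} of the eigenfunctions $e_{m,l}$, itself a consequence of the eigenfunction estimate \eqref{esteigfun}) gives a threshold $N_2(m)$ beyond which every $\lambda_{m,l}$ is algebraically simple, so that $\left({\varepsilon^*}^{-1}+\tilde{K}^{(1)}_m\right)^{-1}$ splits into a contribution from the finite‑dimensional invariant subspace attached to $\lambda_{m,0},\dots,\lambda_{m,N_2(m)}$ and a diagonal tail $\sum_{l>N_2(m)}\frac{(\,\cdot\,)}{{\varepsilon^*}^{-1}+\lambda_{m,l}}$ along the one‑dimensional eigenspaces. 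Evaluating at $r=R$ and reading off coefficients yields precisely $s_{1,m}(\varepsilon^*)$ (the finite part) and $s_{2,m}(\varepsilon^*)=\sum_{l>N_2(m)}\frac{\beta_{m,l}}{{\varepsilon^*}^{-1}+\lambda_{m,l}}$ with $\beta_{m,l}=\left(J_m(r)\right)_{e_{m,l},L^2((0,R),r\,dr)}\,J_m\!\left(\sqrt{1-1/\lambda_{m,l}}\,R\right)$. For $\varepsilon^*$ above the stated threshold the distance $d\!\left(-{\varepsilon^*}^{-1},\{\lambda_{m,0},\dots,\lambda_{m,N_2(m)}\}\right)$ is bounded below by a positive constant depending only on $m$, hence every Jordan block $J_{m,{\varepsilon^*}^{-1}+\lambda_{m,l}}$ with $l\le N_2(m)$ is boundedly invertible and $|s_{1,m}(\varepsilon^*)|<C_m$ uniformly.

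The analytic core is the asymptotics \eqref{convalpha2} of $\beta_{m,l}$ as $l\to\infty$, which I would assemble from three ingredients: (i) the asymptotics \eqref{limit1} of $\langle J_m(r),e_{m,l}\rangle_{L^2((0,R),r\,dr)}$; (ii) the relation \eqref{limit2} that expresses the Jordan‑basis coefficient $\left(J_m(r)\right)_{e_{m,l}}$ as that inner product divided by the normalization $\tfrac{R^2}{2}J_{m-1}^2(a_{m,l})$; and (iii) the asymptotics of $J_m\!\left(\sqrt{1-1/\lambda_{m,l}}\,R\right)$, obtained by substituting the uniform Bessel expansion \eqref{order}, the localization estimate \eqref{estdone} of $\sqrt{1-1/\lambda_{m,l}}\,R$ near the zero $a_{m,l}$, and the decay rate \eqref{decayeig} into the defining transcendental identity \eqref{root}, which produces \eqref{limit11}. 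Multiplying (ii) and (iii) and simplifying the Bessel/Lommel factors via \eqref{estdone}, \eqref{decayeig} and the definitions \eqref{defdef} collapses $\beta_{m,l}$ to $\tfrac{i}{2}\sqrt{R}\,\lambda_{m,l}\,J_m^2(R)\,H^{(1)}_m(R)$ to leading order, which is \eqref{convalpha2}.

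The hard part will be step (iii). The difficulty is that the left‑hand side of \eqref{root} is itself the small quantity we are estimating — it is only $O(|\lambda_{m,l}|^{3/4})$ by \eqref{importantbound}, a full order below a generic value of $J_m$ — so one must expand the right‑hand side of \eqref{root} one order past the leading cancellation: keep the $O(a_{m,l}^{-1/2})$ correction in $\sqrt{1-1/\lambda_{m,l}}\,R-a_{m,l}$ from \eqref{estdone} and use $J_m'(a_{m,l})\sim(-1)^l\sqrt{2/(\pi a_{m,l})}$ rather than the crude bound \eqref{importantbound}. Everything else is routine bookkeeping with Lommel's integrals \eqref{lommel1}, \eqref{lommel2} and the asymptotic formulae already recorded in Sections \ref{sec4_2} and the present one, so I do not expect further obstacles.
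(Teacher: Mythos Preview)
Your proposal is correct and follows essentially the same route as the paper: reduce via \eqref{coefficient} to the single factor $\left[\left({\varepsilon^*}^{-1}+\tilde K^{(1)}_p\right)^{-1}J_p\right](R)$, split it using the Riesz/Jordan decomposition and the threshold $N_2(p)$ from Theorem~\ref{important}, bound the finite block $s_{1,p}$ from the distance condition on $\varepsilon^*$, and obtain the asymptotics of $\beta_{p,l}$ by combining \eqref{limit1}, \eqref{limit2} with \eqref{limit11} (the latter derived by feeding \eqref{order}, \eqref{estdone}, \eqref{decayeig} into \eqref{root}); the extension to negative $p$ via \eqref{symmetry}--\eqref{symmetryeq} is also what the paper does. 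Your identification of step~(iii) as the delicate point is accurate, though the paper does not single it out explicitly.
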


Similarly to the previous subsection, the aforementioned decomposition of $C(\varepsilon^*,n,m)$ clearly illustrates the behavior of $C(\varepsilon^*,n,m)$ as $\varepsilon^*$ grows and ${\varepsilon^*}^{-1}$ passes through the values $ - \text{Re} (\lambda_{p,l}) \sim (|p| + 2l)^{-2} $ with $p = n,m$.
If $\lambda_{p,l} \in \mathbb{R}$, ${\varepsilon^*}^{-1}$ directly hits the pole. Therefore $C(\varepsilon^*,n,m)$ first grows from a finite value rapidly to a directional complex infinity $\infty e^{i\theta} $ for some $\theta$, then back from $- \infty e^{i\theta}$ to a finite value after passing through it.
Otherwise if $\lambda_{p,l} \notin \mathbb{R} $ and when $l$ is large, ${\varepsilon^*}^{-1}$ does not pass through the pole, but comes very close to it. Hence, $C(\varepsilon^*,n,m)$ grows rapidly from a considerably small value to a complex value of very large modulus, then drops to a small value after passing through $- \text{Re} (\lambda_{p,l})$.
Moreover, for a fixed pair of $p,l$, we have
\beqn
\frac{ \beta_{p,l} }{ {\varepsilon^*}^{-1} + \lambda_{p,l} } \rightarrow - \f{i}{2} \sqrt{R} J_p^2(R) H^{(1)}_p (R)
\eqn
as $\varepsilon^* \rightarrow \infty$.  Therefore, we can see that $C(\varepsilon^*,n,m)$ has very oscillatory behavior as $\varepsilon^*$ grows.

\subsection{The super-resolution phenomenon}
Although $C(\varepsilon^*,n,m)$ is very oscillatory as $\varepsilon^*$ grows, the aforementioned behavior and series decomposition of $C(\varepsilon^*,n,m)$ gives a clear explanation of the super-resolution phenomenon for high-contrast inclusions.
It is because, what we have actually proved is that, in the shape derivative of the scattering coefficients of a circular domain,
there are simple poles corresponding to the complex resonant states, and therefore peaks at the real parts of these resonances.
Hence, as the material contrast $\varepsilon^*$ increases to infinity and is such that it hits the real part of a resonance,
the sensitivity in the scattering coefficients becomes very large and super-resolution for imaging occurs.

To put it more accurately, let us recall \eqref{divcircle2}. Suppose $D=B(0,R)$, then for any $\delta$-perturbation of $D$, $D^\delta$, along the variational direction $h \in \mathcal{C}^1(\p D)$, we have
\beqn
W_{nm}( D^\delta, \varepsilon^*) -W_{nm}( D, \varepsilon^*) = \delta \, C(\varepsilon^*,n,m) \mathfrak{F}_\theta \left[ h \right](n-m) + O(\delta^2) \,.
\notag
\eqn
As one might recall from \eqref{estimate}, $W_{nm}( D^\delta, \varepsilon^*)$ always decays exponentially as $|n|,|m|$ increase. Hence, it is always of exponential ill-posedness to recover the higher order Fourier modes of the perturbation $h$.
The inversion process to recover the $k$-th Fourier mode $\mathfrak{F}_\theta \left[ h \right](k)$ becomes less ill-posed if $C(\varepsilon^*,n,m)$ is large for some $n,m \in \mathbb{Z}$ such that $k = n-m$. This not only makes the respective scattering coefficients more apparent than the others, but also lowers the condition number of the inverse process to reconstruct the respective Fourier mode.
From the analysis in the previous subsection, this can only be made possible when $ {\varepsilon^*}^{-1} $ comes close to $ - \text{Re} (\lambda_{p,l} ) $ for some $p = n,m$ and for some $l \in \mathbb{N}$.

Now, suppose $\varepsilon^*$ is close to the following resonant value $\left( \frac{ K \pi  }{ 2 R } - \frac{\pi}{4 R}\right)^2$ where $K \in \mathbb{N}$ is large.  Then, from the fact that the eigenvalues $\lambda_{p,l}$ of the operators $\widetilde{K}^{(1)}_{p}$ follow the asymptotics:
\beqn
- \lambda_{p,l}^{-1} \sim \left( \frac{\pi (|p| + 2l) }{ 2 R } - \frac{\pi}{4 R}\right)^2 \,,
\eqn
we see that  ${\varepsilon^*}^{-1}$ is close to $-\text{Re}(\lambda_{p,l(p)}^{-1})$ for all $p \in \mathbb{Z}$ such that $|p| + 2 \,l(p) = K$ for some $l(p) \in \mathbb{N}$.
Therefore, ${\varepsilon^*}^{-1}$ comes close to $-\text{Re}(\lambda_{K,0}^{-1}), -\text{Re}(\lambda_{K-2,1}^{-1}), -\text{Re}(\lambda_{K-4,2}^{-1}), \ldots,  -\text{Re}(\lambda_{K-2[\frac{K}{2}],[\frac{K}{2}]}^{-1})$ simultaneously where $[\cdot ]$ is the floor function.
This in turn boosts up the magnitudes of all the terms $ \frac{ \beta_{p,l(p)} }{ {\varepsilon^*}^{-1} + \lambda_{p,l(p)} }$ whenever $p$ is of the form $p = - K +2 s$, $s = 0,2,\ldots,K$. These terms dominate the series $s_{2,p}(\varepsilon^*) $, hence we obtain the following approximations of $s_{2,p}(\varepsilon^*)$ for all $p = - K +2 s$, $s = 0,2,\ldots,K$:
$$s_{2,p}(\varepsilon^*) \approx  - \f{i}{2} \sqrt{R}  J_p^2(R) H^{(1)}_p (R) \frac{ (K-0.5)^{-2} }{ 4^{-1} \pi^2  R^{-2} {\varepsilon^*}^{-1} - (K-0.5)^{-2} } \,.$$
Now we see from Theorem \ref{impthm} that the coefficients $C(\varepsilon^*,n,m)$ have the following approximations for $n,m \in \mathbb{Z}$ when $\varepsilon^*$ is very close to the resonant values $\left( \frac{ K \pi  }{ 2 R } - \frac{ \pi}{4 R}\right)^2$ for large $K$:
\beqn
C(\varepsilon^*,n,m) &
\begin{cases}
\approx    & M_{n,m,R} \, (K-0.5)^{-6} \left( 4^{-1} \pi^2  R^{-2} {\varepsilon^*}^{-1} - (K-0.5)^{-2} \right)^{-2} \\
& \quad \text{ if both of $n,m$ have the form $- K +2 s$, $s = 0,2,\ldots,K $} \,; \\
\approx  & M_{n,m,R} \,  (K-0.5)^{-4} \left( 4^{-1} \pi^2  R^{-2} {\varepsilon^*}^{-1} - (K-0.5)^{-2} \right)^{-1} \\
& \quad \text{ if only one of $n,m$ has the form $- K +2 s$, $s = 0,2,\ldots,K $} \,; \\
& \text{is very small otherwise},
\end{cases}
\notag
\eqn
where $M_{n,m,R}$ are some constants depending only on $n,m$ and $R$.
Here, the term $\left( 4^{-1} \pi^2  R^{-2} {\varepsilon^*}^{-1} - (K-0.5)^{-2} \right)^{-1}$ is very large, and makes the Fourier coefficients $\mathfrak{F}_\theta \left[ h \right]( n-m)$ visible for $n,m \in \{ - K +2 s \, : \, s = 1,2,\ldots,K \} $ for accurate classification of the shapes.
The above mechanism is possible only when $\varepsilon^*$ increases up to one of the resonant values $\left( \frac{ K \pi  }{ 2 R } - \frac{\pi}{4 R}\right)^2$ when $K$ is large.
This explains the increasing likelihood of obtaining super-resolution as $\varepsilon^*$ increases.

Now, for a given $\varepsilon^*$, consider the following bounded linear map over the space $l_{\pm}^2(\mathbb{C})$ of two-sided sequences $(a_l)_{l=-\infty}^{\infty}$ such that $\sum_{l=-\infty}^{\infty} a_l^2 < \infty$,
\beqn
A(\varepsilon^*) : l_{\pm}^2(\mathbb{C}) &\rightarrow& l_{\pm}^2(\mathbb{C}) \otimes l_{\pm}^2 (\mathbb{C}) \notag \\
(a_l)_{l=-\infty}^{\infty} & \mapsto &
\left( C(\varepsilon^*,n,m) \, a_{n-m} \right)_{n,m=-\infty}^{\infty} \,.
\eqn
By Theorem \ref{vari}, we know the shape derivative of $ \left( W_{nm}( D, \varepsilon^* ) \right)_{n,m=-\infty}^{\infty}$ in the variational direction $h$ is given by
\beqn
\mathcal{D} \,W ( D, \varepsilon^*) [h] = A(\varepsilon^*) \mathfrak{F}_\theta \left[ h \right] \,.
\eqn
It is ready to conclude that the least-squared map
 \beqn
[A(\varepsilon^*)]^* [A(\varepsilon^*)] : l_{\pm}^2(\mathbb{C}) &\rightarrow& l_{\pm}^2(\mathbb{C}) \notag \\
(a_l)_{l=-\infty}^{\infty} & \mapsto & \left( \sum_{n-m = l}  |C(\varepsilon^*,n,m)|^2  \, a_l \right)_{l=-\infty}^{\infty}  \,
\eqn
is a diagonal operator, and the $l$-th singular value $s_l(A)$ is of the form
\beqn
 s_l(A) = \sqrt{ \sum_{n-m = l}  |C(\varepsilon^*,n,m)|^2} \,.
\eqn
Therefore, from the above analysis on $C(\varepsilon^*,n,m)$ when $\varepsilon^*$ is close to the resonant values $\left( \frac{ K \pi  }{ 2 R } - \frac{ \pi}{4 R}\right)^2$, we can observe that the singular values $ s_{l}$ become large and comparable to each other,
making the inversion of many Fourier modes 
well-conditioned.
This implies a much higher resolution of the modes of $h$, and also for reconstructing the geometry of $D^\delta$ in the linearized case.
This provides a good understanding towards the recently observed phenomenon of super-resolution in the physics and engineering communities.

\section{Numerical experiments} \label{numerical}

In this section, we present some numerical experiments on the behaviors of the scattering coefficients for some domains as the contrast $\varepsilon^*$ grows, and numerically illustrate the phenomenon of super-resolution.

In the following $2$ examples, we consider an infinite domain of homogeneous background medium with its material coefficient being $1$. An inclusion $D^\delta$ is then introduced as a perturbation of a circular domain $D = B(0,R)$ for some $R>0$ and $\delta>0$ lying inside the homogeneous background medium, with its contrast chosen to be
$\varepsilon^* = a_{m,l}^2/R^2 -1 $ running over all $m,l$ such that $a_{m,l} \leq 18.901$. The exact values of the zeros of Bessel functions are found in \cite{zerozero}.

In order to generate the far-field data for the forward problem and the observed scattering coefficients, we use the SIES-master package  developed by H.~Wang \cite{hansite}.

The forward problem is solved by computing the solutions $(\phi_m, \psi_m)$ of \eqref{defint} for $|m|\leq 25$ using rectangular quadrature rule with mesh-size $s/1024$ along the boundary of the target, where $s$ denotes the length of the inclusion boundary. The scattering coefficients of $D^{\delta}$ of orders $(n,m)$ for $|n|,|m| \leq 25 $ are then calculated as the Fourier transform of the far-field data.

In order to test the robustness of the super-resolution phenomenon, we introduce some multiplicative random noise in the scattering coefficients in the form:
\begin{equation} W_{nm}^\gamma ( D^{\delta}, \varepsilon^*) = W_{nm} ( D^{\delta}, \varepsilon^*) \left( 1 + \gamma (\eta_1 + i \eta_2) \right), \label{noise} \end{equation}
where $\eta_i, \, i=1,2,$ are uniformly distributed between $[-1,1]$ and $ \gamma$ refers to the relative noise level.
In both examples below, we always set the noise level to be $\gamma = 5 \%$.

Since the purpose of our numerical experiments is to illustrate the phenomenon of super-resolution as $\varepsilon^*$ increases, we assume that both $R$ and $\varepsilon^*$ are known and
 use the following regularized inversion method suggested from the linearized problem \eqref{divcircle2} to recover the $k$-th Fourier mode for $|k | \leq 50$ from the observed noisy scattering coefficients $W^{\gamma}_{nm} ( D^{\delta}, \varepsilon^*), \, |n|,|m| \leq 25 $:
\beqn
\delta \mathfrak{F}_\theta \left[ h \right]^{\text{recovered}}(k) = \sum_{n-m = k, \, |n|,|m| \leq 25} \frac{ W^{\gamma}_{nm} ( D^{\delta}, \varepsilon^*) - W_{nm} ( D , \varepsilon^*) } { C(\varepsilon^{*}, n,m) + \alpha},
\eqn
where 
$\alpha$ is a regularization parameter.
The coefficients $ W_{nm} ( D , \varepsilon^*) $ used in the inversion process are calculated using the same method as previously mentioned for the forward problem without adding noise, and the coefficients $ C( \varepsilon^{*}, n,m) $ are calculated by the following approximations
\beqn
C( \varepsilon^{*}, n,m) \approx \left( W_{nm} (D^{\delta_0}(n-m) , \varepsilon^*) - W_{nm} (D , \varepsilon^*) \right)/\delta_0
\eqn
for $|n|,|m| \leq 25$, where $D^{\delta_0}(k)$ are defined as domains with the following boundaries for $|k|\leq 50$,
\beqn
\p D^{\delta_0}(k) := \{ \tilde{x} = R (1 + \delta_0 e^{i k \theta } ) \, : \, \theta \in (0,2\pi] \} \,,
\eqn
with $\delta_0$ chosen to be $\delta_0 = 0.1$.

\textbf{Example 1}
As a toy example, we first consider a flori-form shape $D^\delta$ described by the following parametric form (with $\delta = 0.1$ ):
    \begin{equation}
        r = 0.3 ( 1 + \delta \cos (3 \theta) + 2 \delta \cos (6\theta) + 4 \delta \cos (9\theta)   )\, , \quad \theta \in (0,2\pi]\,,
        \label{circle_perturb_form}
    \end{equation}
which is a perturbation of the domain $D:=B(0, 0.3)$; see Figure \ref{Example 1_domain} (left) for the domain and Figure \ref{Example 1_domain} (right) for the comparison between the domains $D^\delta$ and $D$.


\begin{figurehere}
\center
       \includegraphics[clip,width=0.3\textwidth]{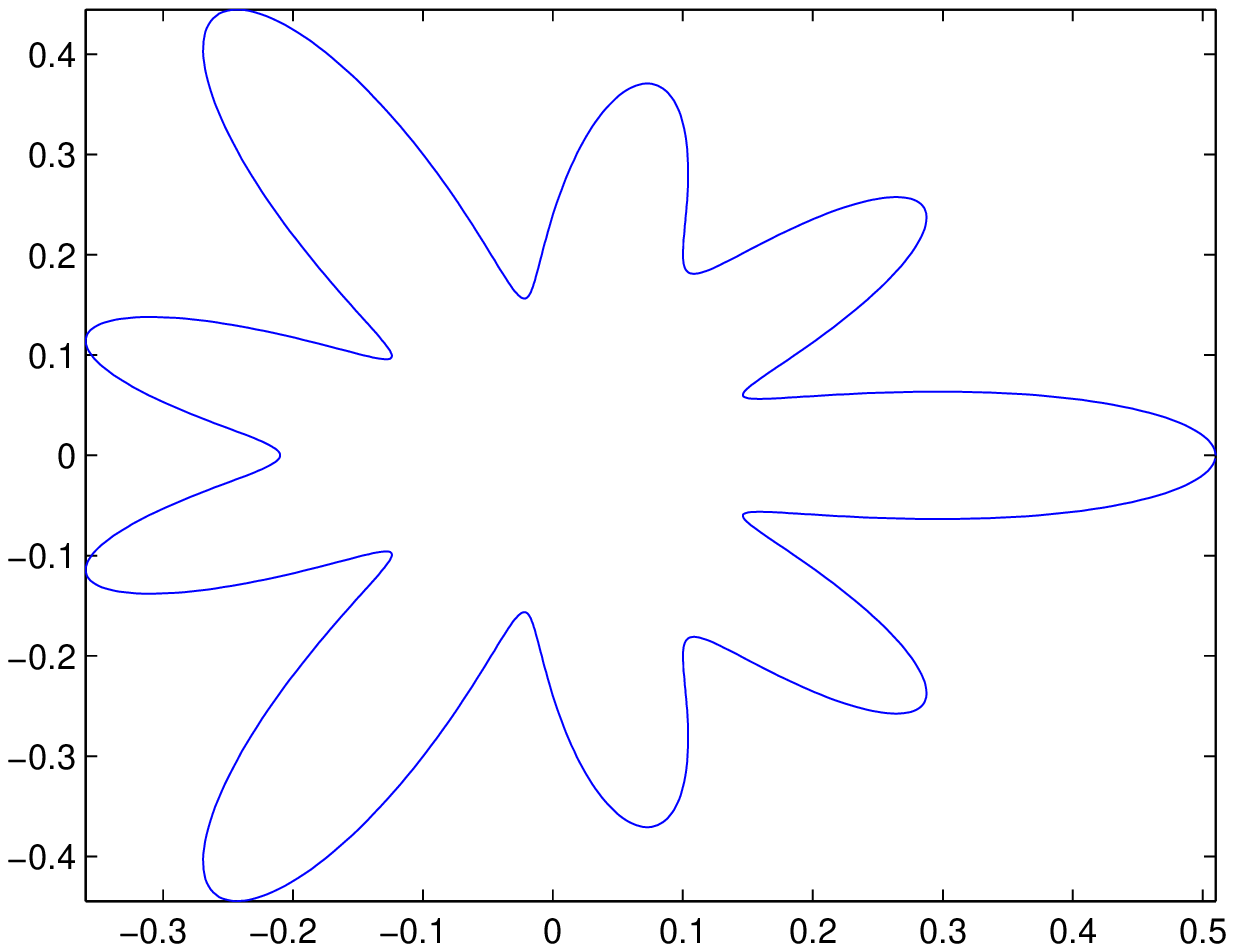}
        \includegraphics[clip,width=0.3\textwidth]{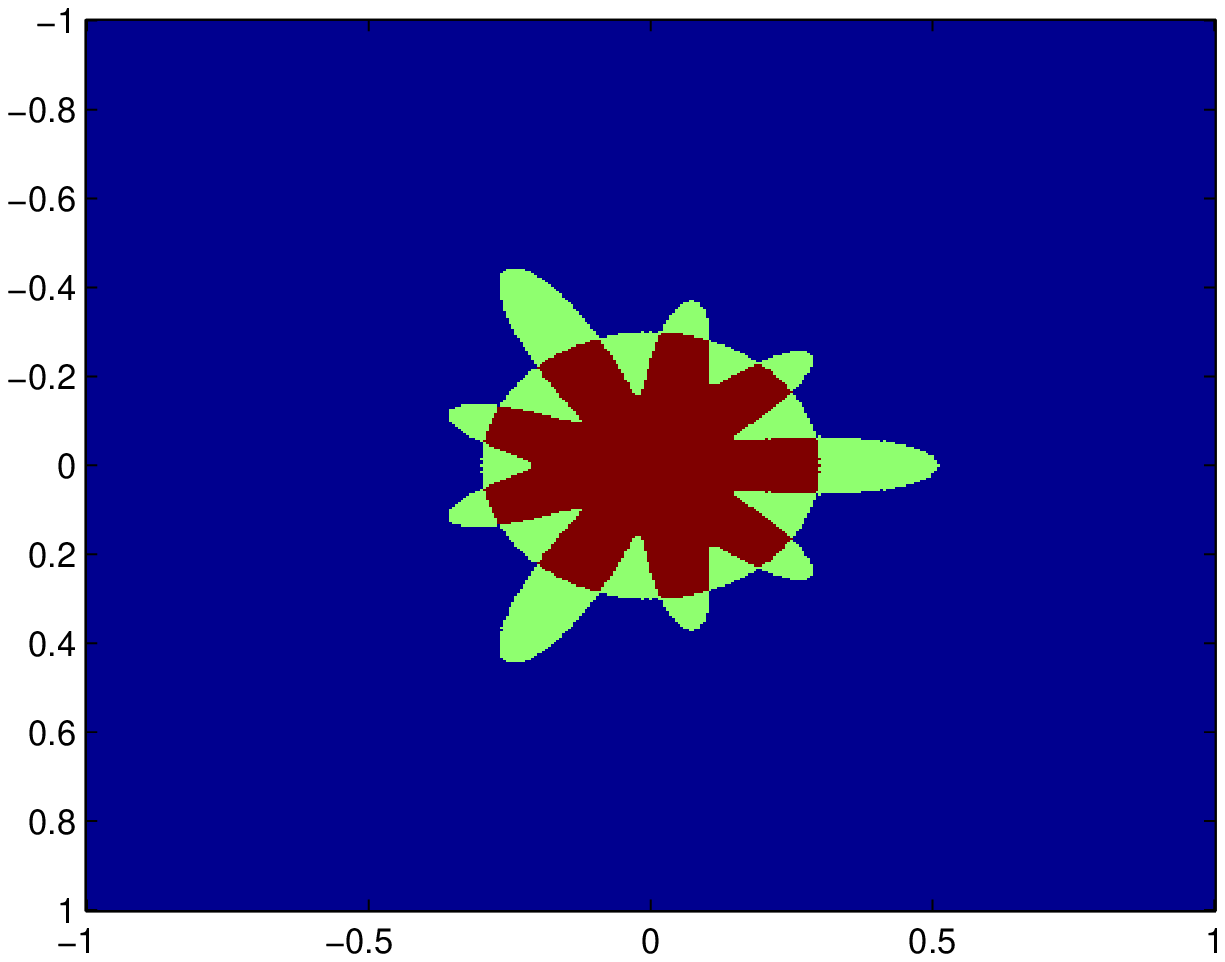}

\caption{
 Inclusion shape in Example 1.
}\label{Example 1_domain}
\end{figurehere}

The relative magnitudes of the scattering coefficients $\max_{|m-n| = k }  |W_{nm}(D^\delta,\epsilon^*)| / \max_{ m \neq n }  |W_{nm}(D^\delta,\epsilon^*)| $ are plotted for $k = 3,6,9$ in Figure \ref{Example 1_coeff}.

From Figure \ref{Example 1_coeff}, we can clearly observe that, as $\varepsilon^*$ grows, the relative magnitude of the scattering coefficient corresponding to the $\pm k$-th Fourier mode grows from a smaller magnitude to larger magnitude, and the peaks become apparent when $\varepsilon^*$ hits the respective zeros of the Bessel functions.

\begin{figurehere}
\center

        \hfill{}   \hfill{} \hfill{}   \hfill{}  \hfill{}   \hfill{}
        \hfill{}\includegraphics[clip,width=0.3\textwidth]{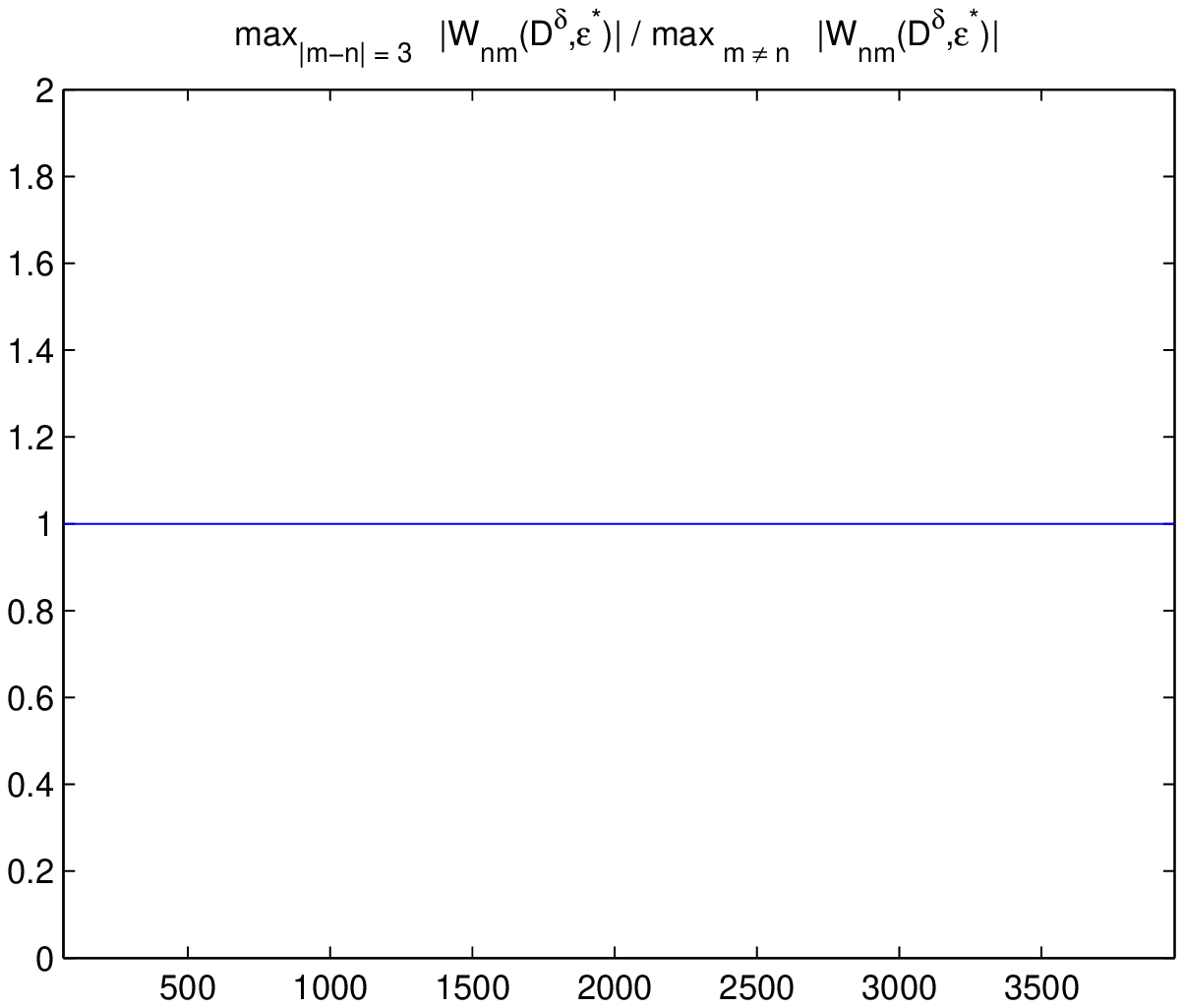}\hfill{}
        \hfill{}\includegraphics[clip,width=0.3\textwidth]{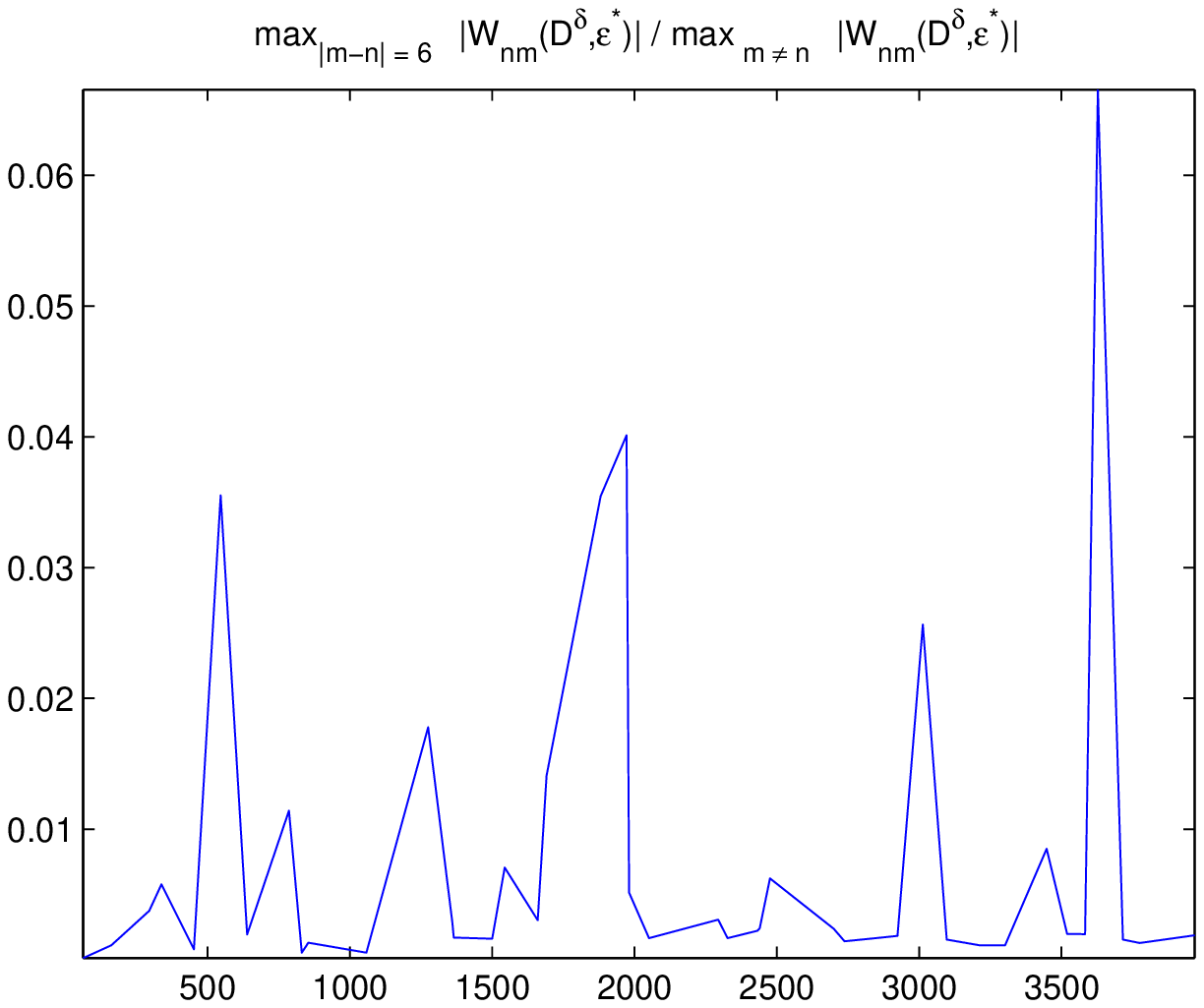}\hfill{}
        \hfill{}\includegraphics[clip,width=0.3\textwidth]{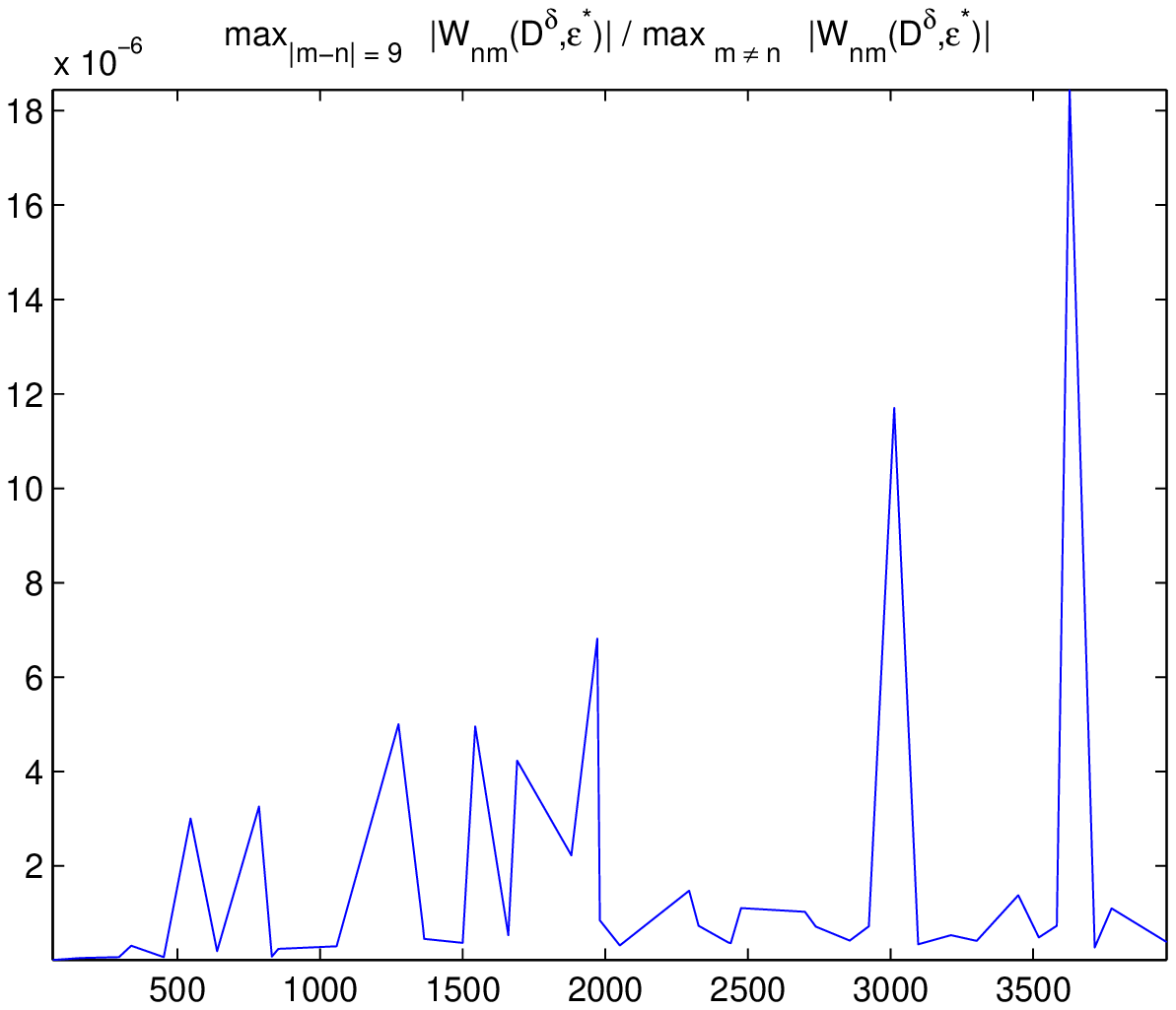}\hfill{}

\caption{
Relative magnitudes of the scattering coefficients in Example 1.
}\label{Example 1_coeff}
\end{figurehere}

From the relative magnitudes shown in the above figures, we observe that the scattering coefficients are best conditioned for inversion when $\varepsilon^* =  1971.2481, 3627.456$. The scattering coefficients of the respective contrasts are plotted in Figure \ref{Example 1_recovery} (left), together with $\varepsilon^* = 63.2669$ corresponding to the first zero of $J_0$ as a comparison. We notice from the figures that the scattering coefficients corresponding to higher Fourier modes become more apparent as $\varepsilon^*$ increases. We then apply the aforementioned inversion process, with the regularization parameter chosen as $\alpha = 1 \times 10^{-8}$. The magnitudes of the recovered Fourier modes and the reconstructed domains are shown in Figure \ref{Example 1_recovery} (middle) and (right) respectively. We can clearly see that the fine features are more and more apparent as $\varepsilon^*$ grows along the specific contrasts that we choose. Notice also that the fine features are of a magnitude smaller than $0.4$, which is much smaller than half of the operating wavelength, $\pi$.

\begin{figurehere}
\center

        \hfill{}   \hfill{} \hfill{}   \hfill{}  \hfill{}   \hfill{}

        \hfill{}\includegraphics[clip,width=0.3\textwidth]{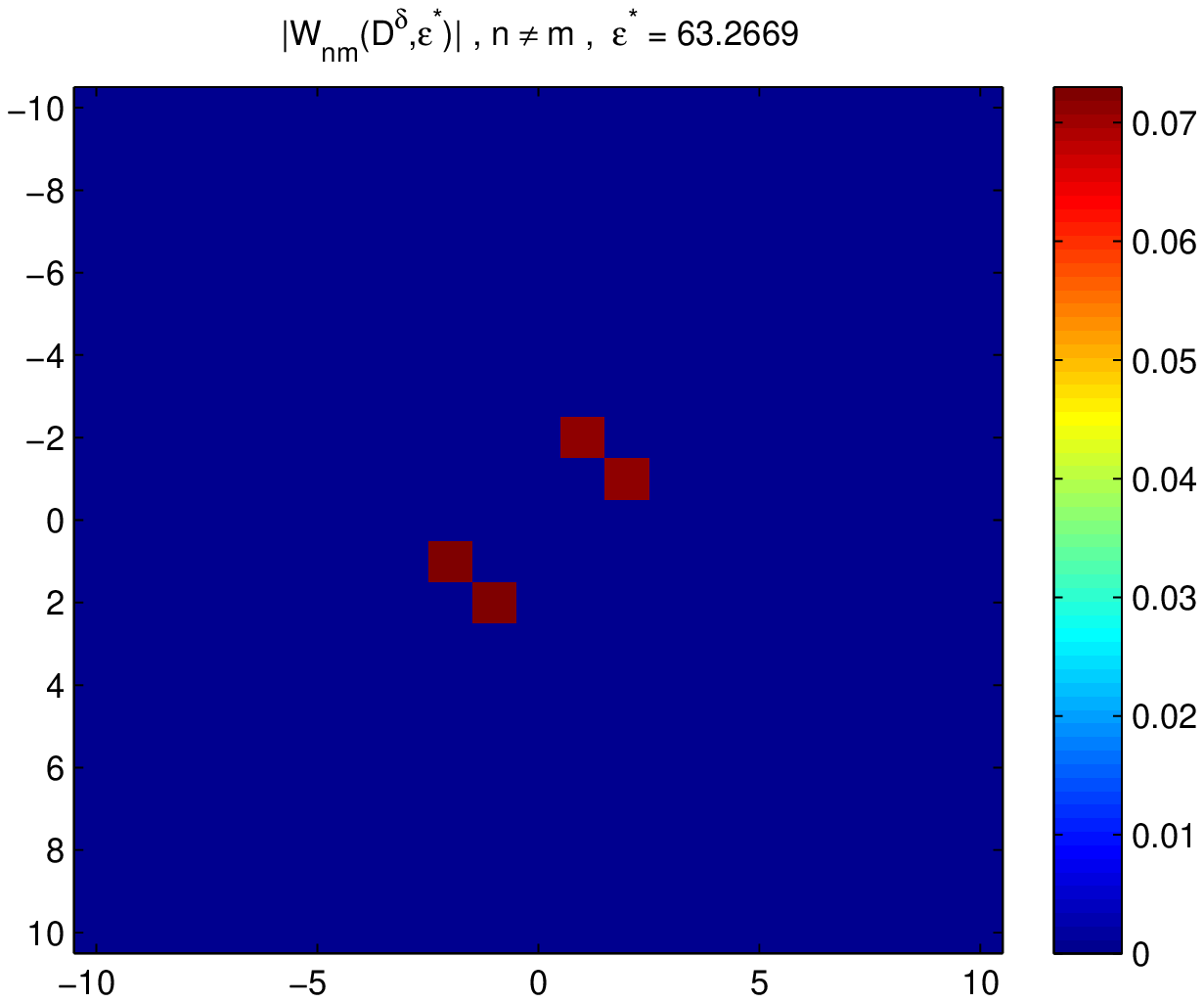}\hfill{}
        \hfill{}\includegraphics[clip,width=0.3\textwidth]{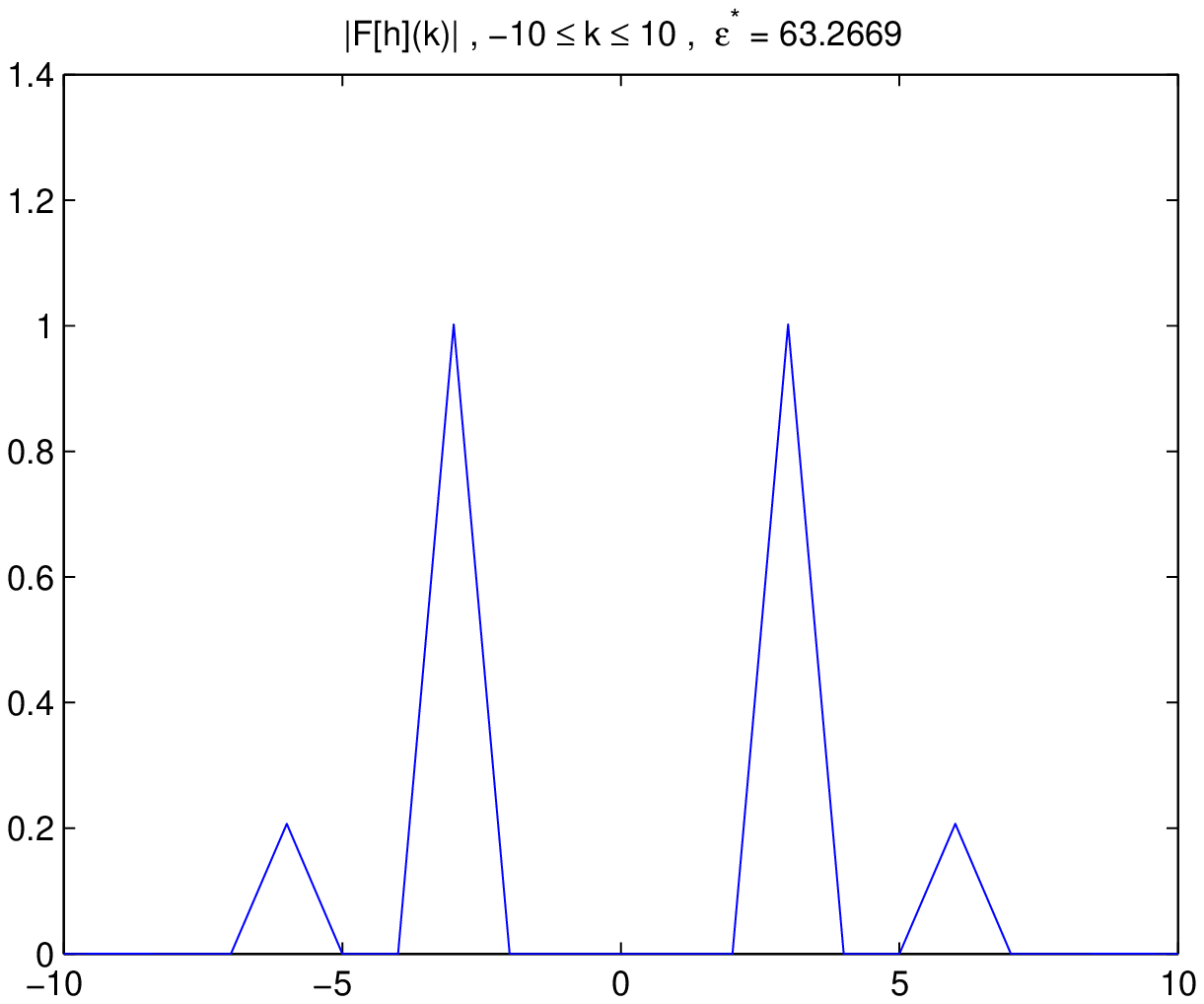}\hfill{}
        \hfill{}\includegraphics[clip,width=0.3\textwidth]{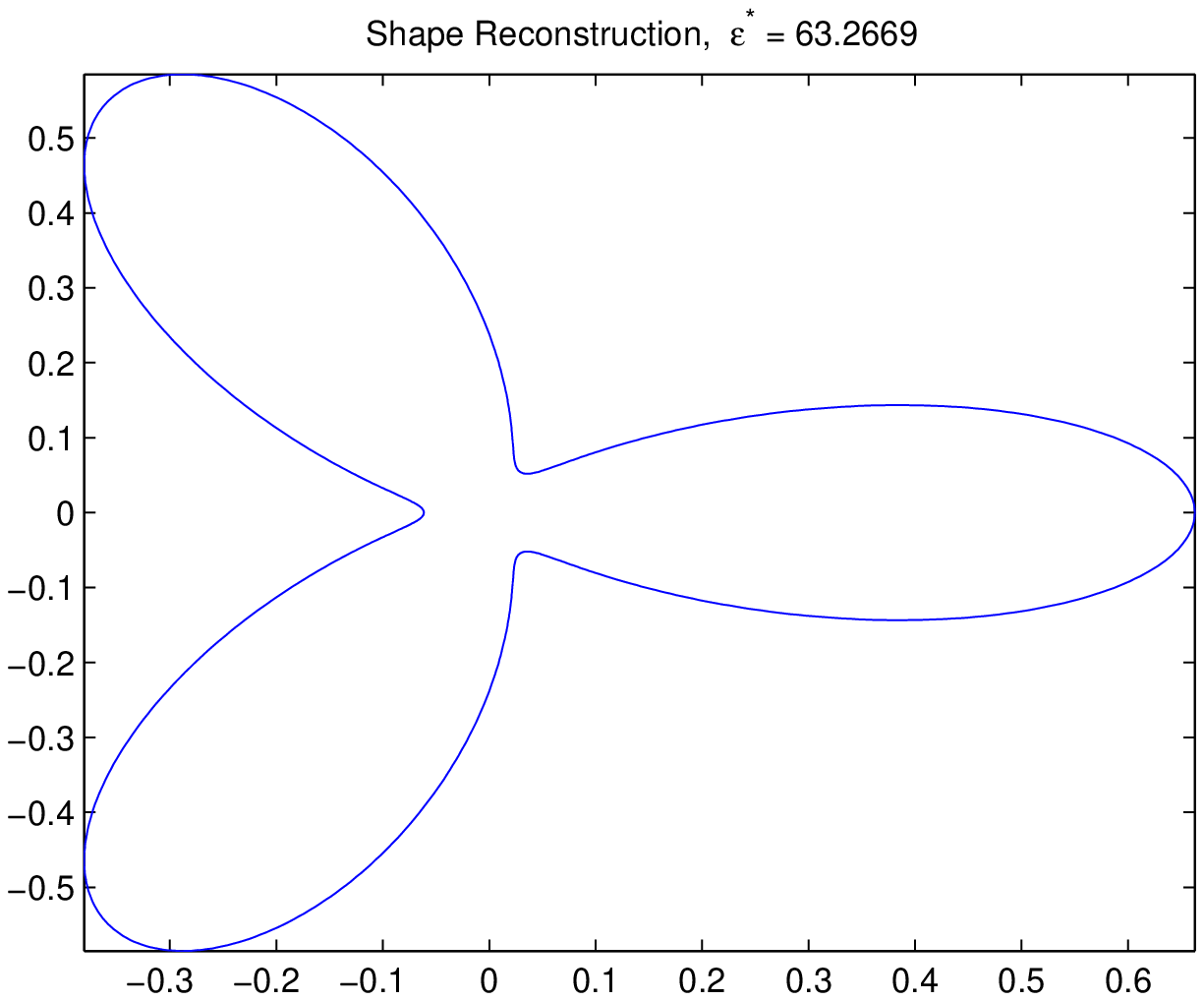}\hfill{}


         \hfill{}\includegraphics[clip,width=0.3\textwidth]{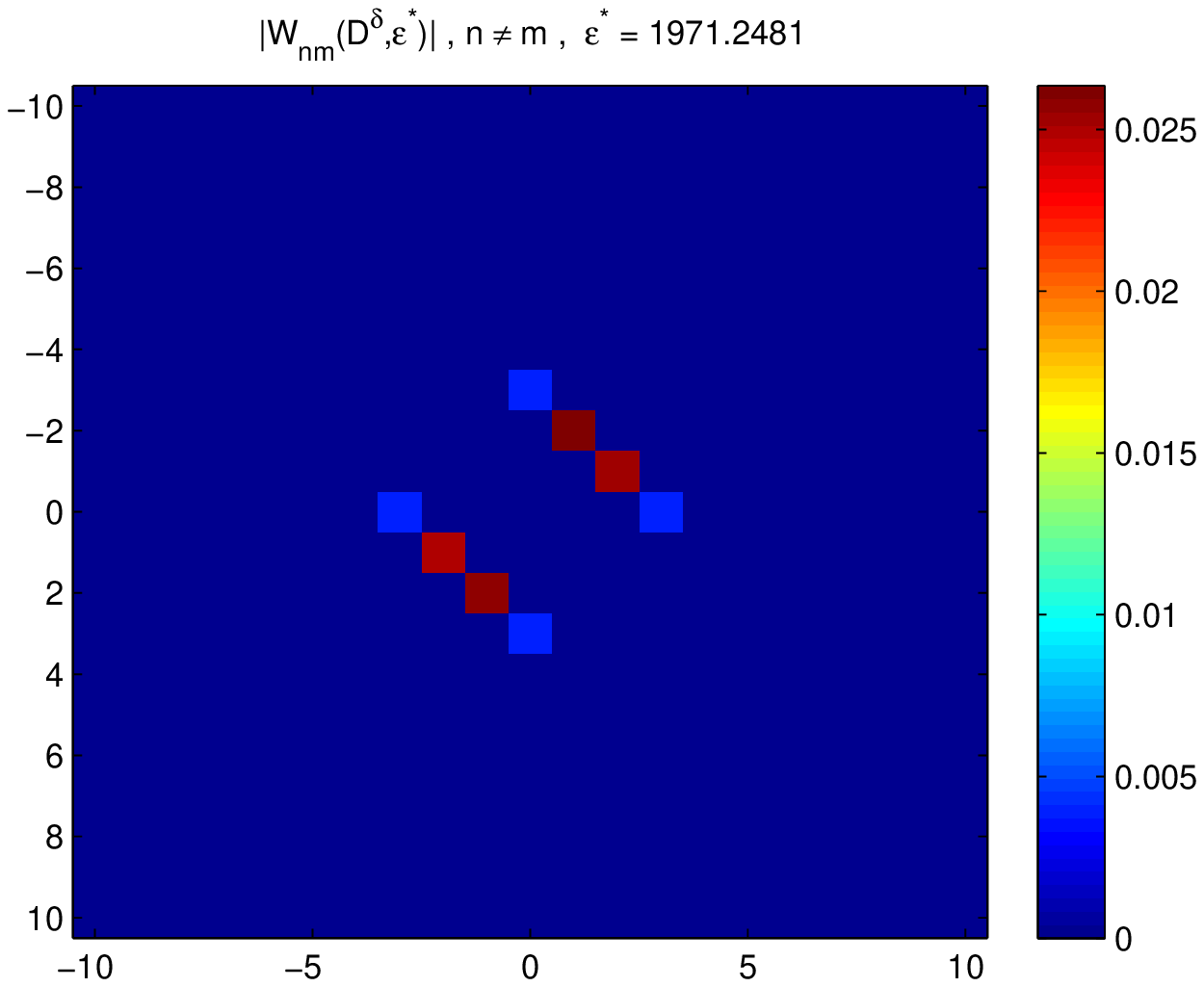}\hfill{}
        \hfill{}\includegraphics[clip,width=0.3\textwidth]{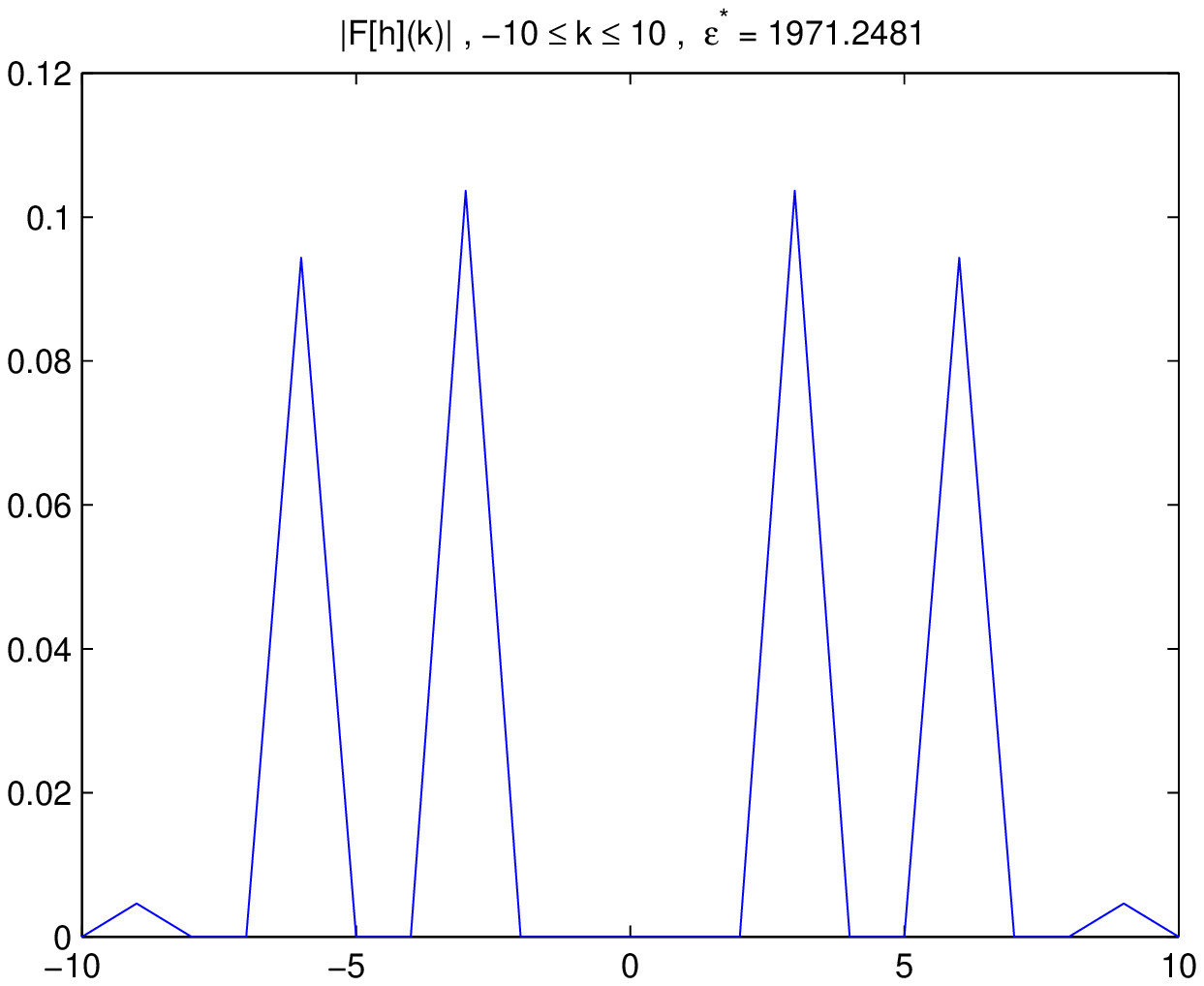}\hfill{}
        \hfill{}\includegraphics[clip,width=0.3\textwidth]{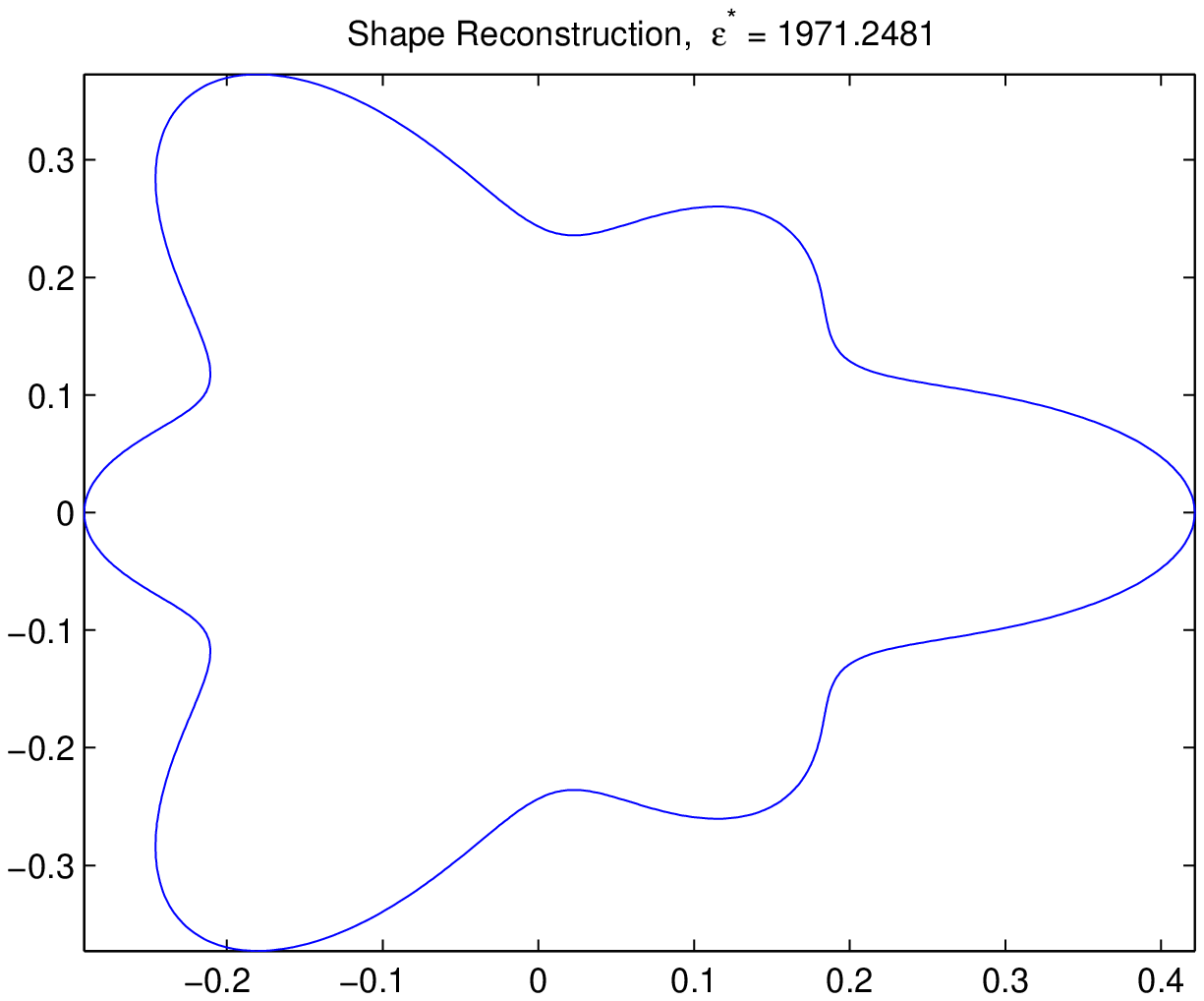}\hfill{}

%


        \hfill{}\includegraphics[clip,width=0.3\textwidth]{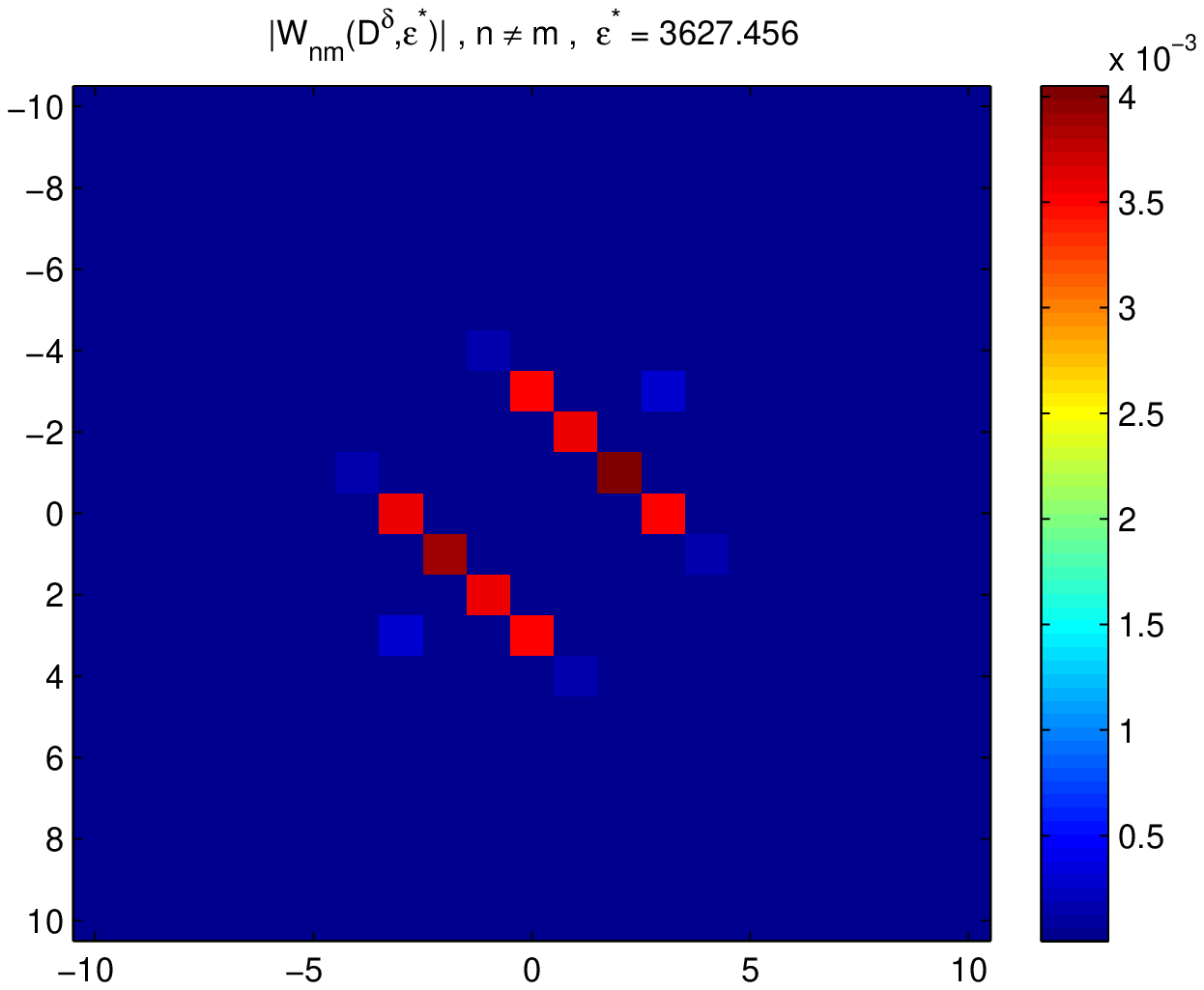}\hfill{}
        \hfill{}\includegraphics[clip,width=0.3\textwidth]{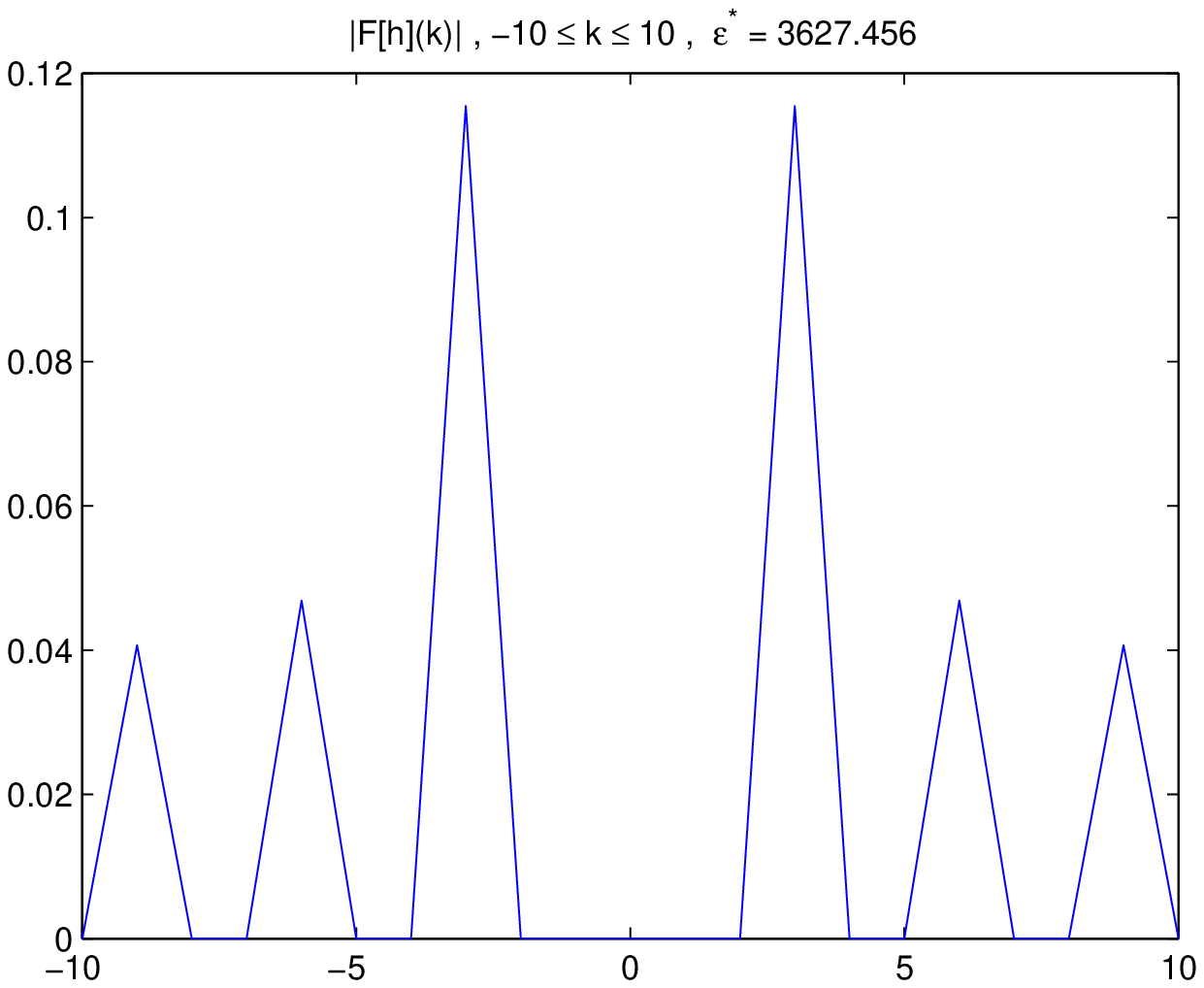}\hfill{}
        \hfill{}\includegraphics[clip,width=0.3\textwidth]{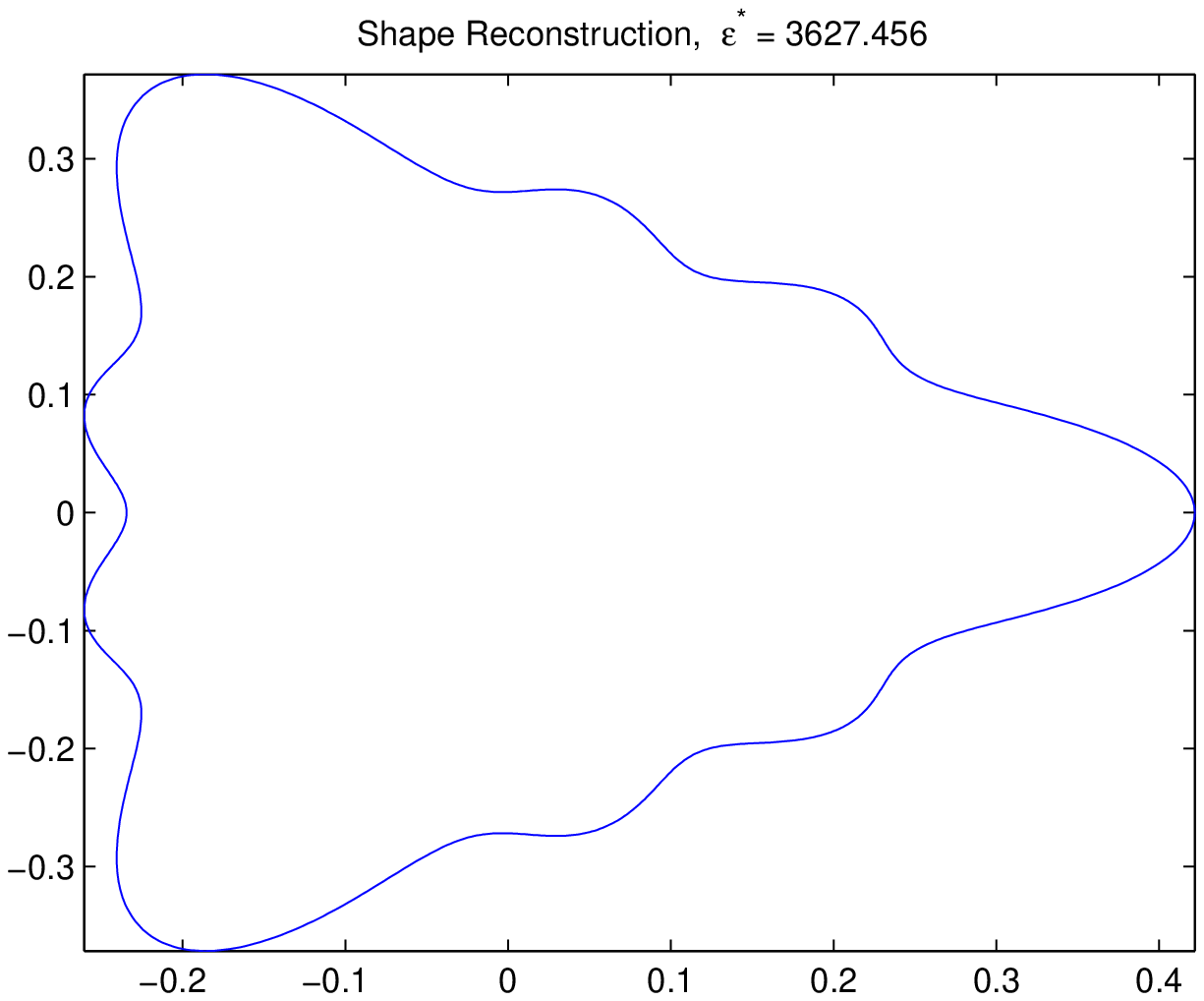}\hfill{}


\caption{
Illustration of super-resolution in Example 1. Left: magnitude of scattering coefficients; middle: magnitude of recovered Fourier coefficients; right: recovered domain.
}\label{Example 1_recovery}
\end{figurehere}

\textbf{Example 2}
We try the following right-angled isosceles triangle $D^\delta$,
which is a perturbation of the domain $D:=B(0,0.2)$; see Figure \ref{Example 2_domain} (left) for the domain and Figure \ref{Example 2_domain} (right) the comparison between the domains $D^\delta$ and $D$. This case is substantially harder, since the perturbation $h$ consists of many Fourier modes and is no longer smooth.


\begin{figurehere}
\center
\includegraphics[clip,width=0.3\textwidth]{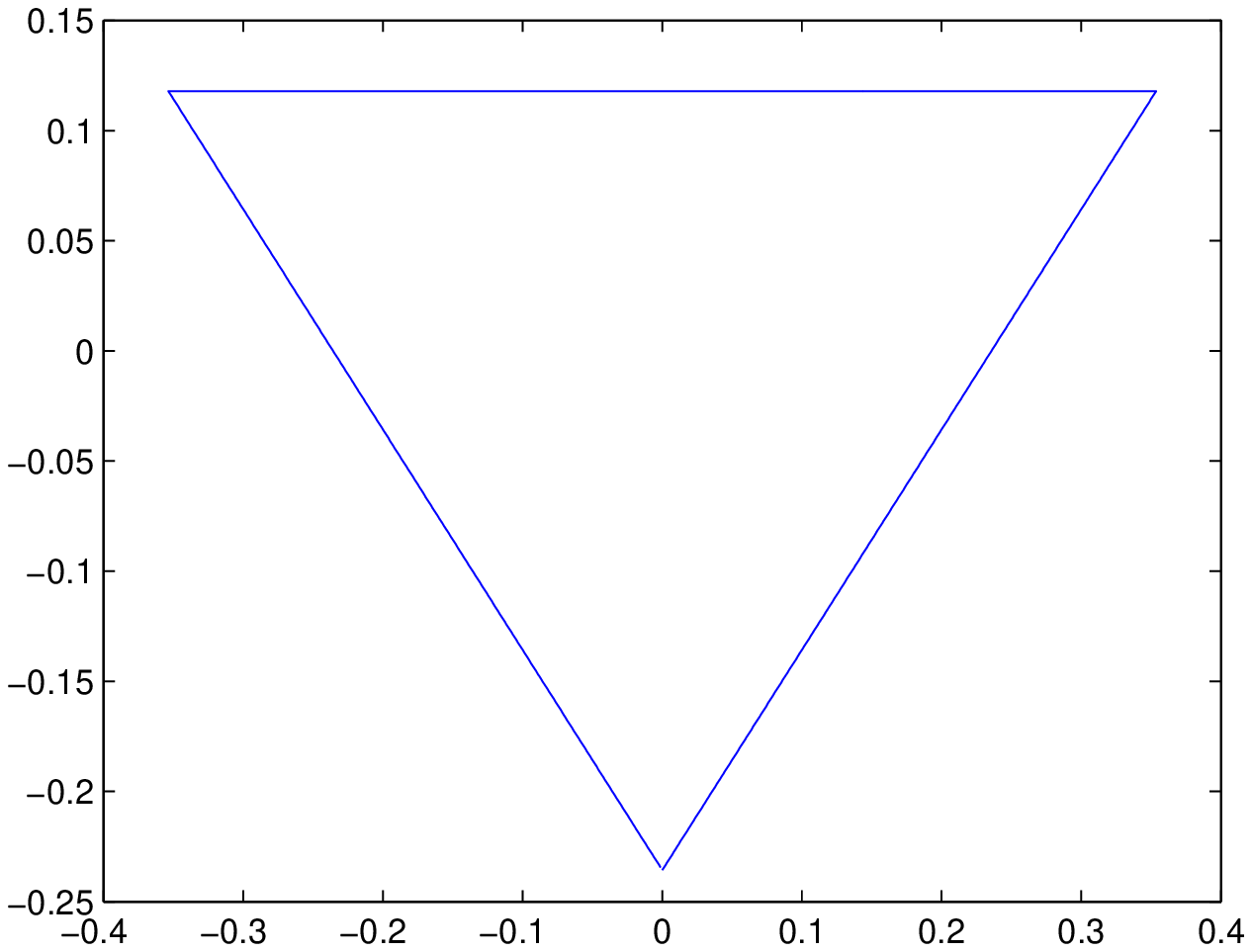}
 \includegraphics[clip,width=0.3\textwidth]{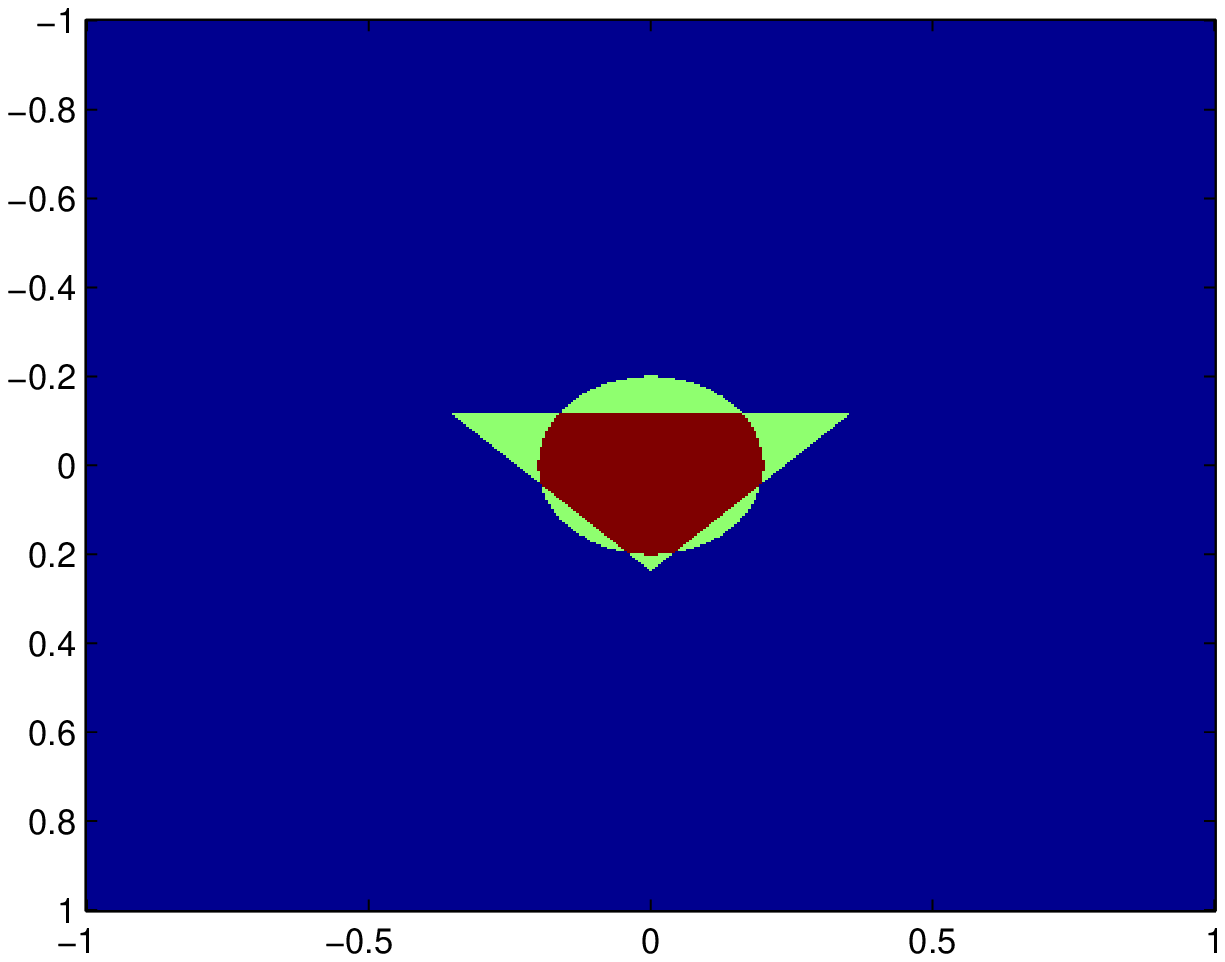}
\caption{
Inclusion shape in Example 2.
}\label{Example 2_domain}
\end{figurehere}

The relative magnitudes of the scattering coefficients $\max_{|m-n| = k }  |W_{nm}(D^{\delta},\epsilon^*)| / \max_{ m \neq n }  |W_{nm}(D^{\delta},\epsilon^*)| $ are plotted for $k = 1,2,\ldots,6,$ in Figure \ref{Example 2_coeff}.
From this figure, we can see that the relative magnitude of the scattering coefficient corresponding to the $\pm k$-th Fourier mode comes out more often when $\varepsilon^*$ becomes large.

\begin{figurehere}
\center

        \hfill{}\includegraphics[clip,width=0.3\textwidth]{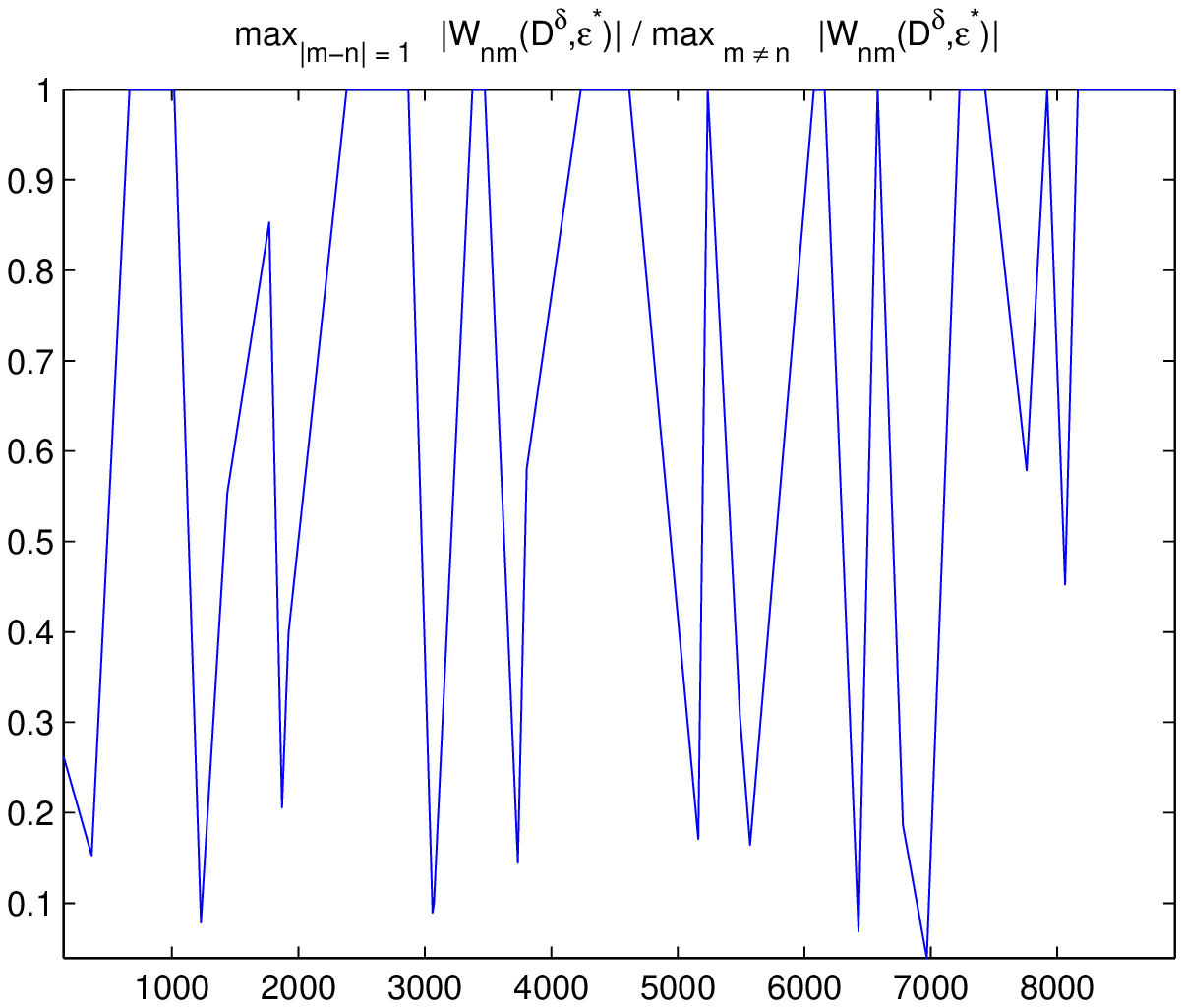}\hfill{}
        \hfill{}\includegraphics[clip,width=0.3\textwidth]{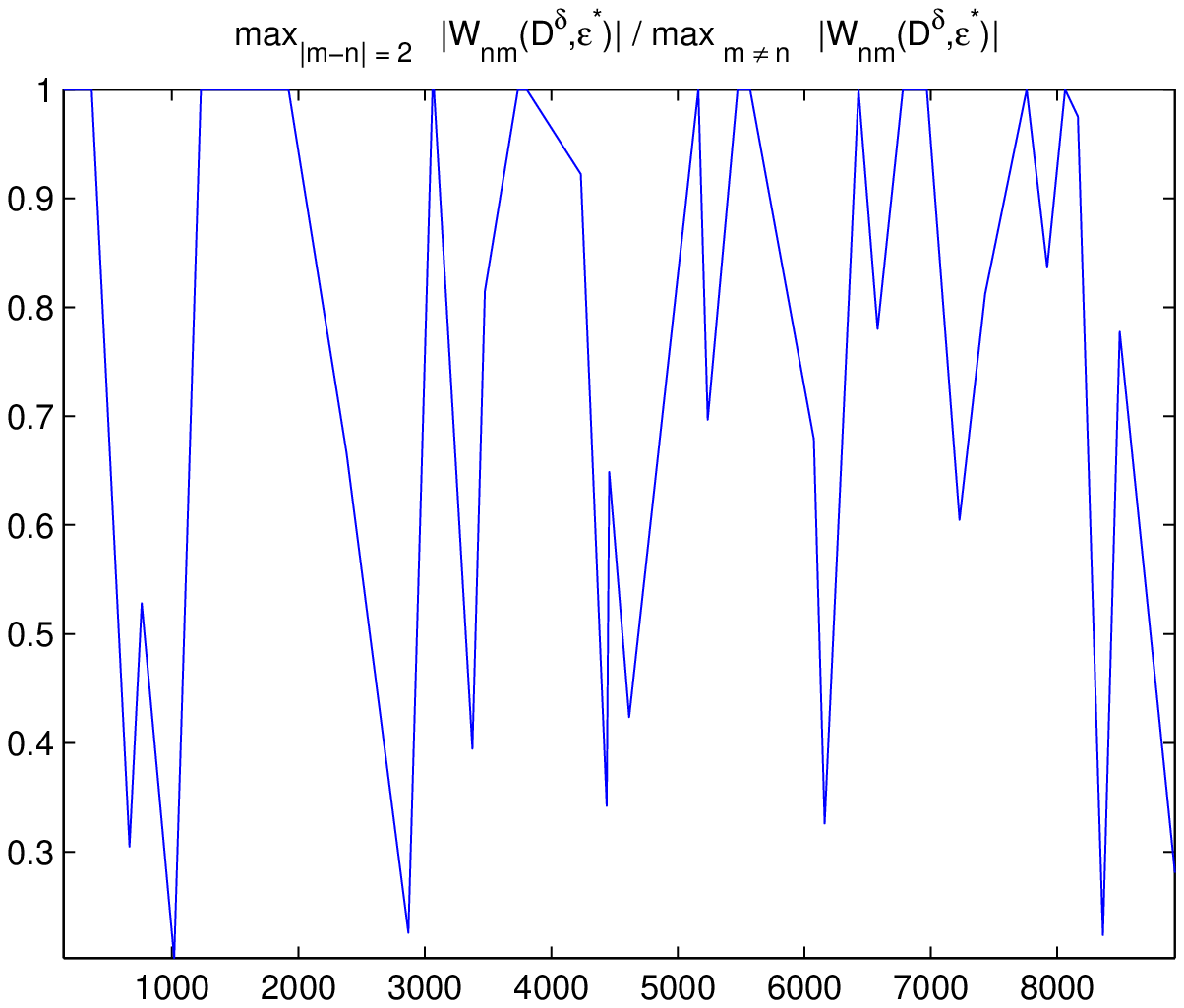}\hfill{}
        \hfill{}\includegraphics[clip,width=0.3\textwidth]{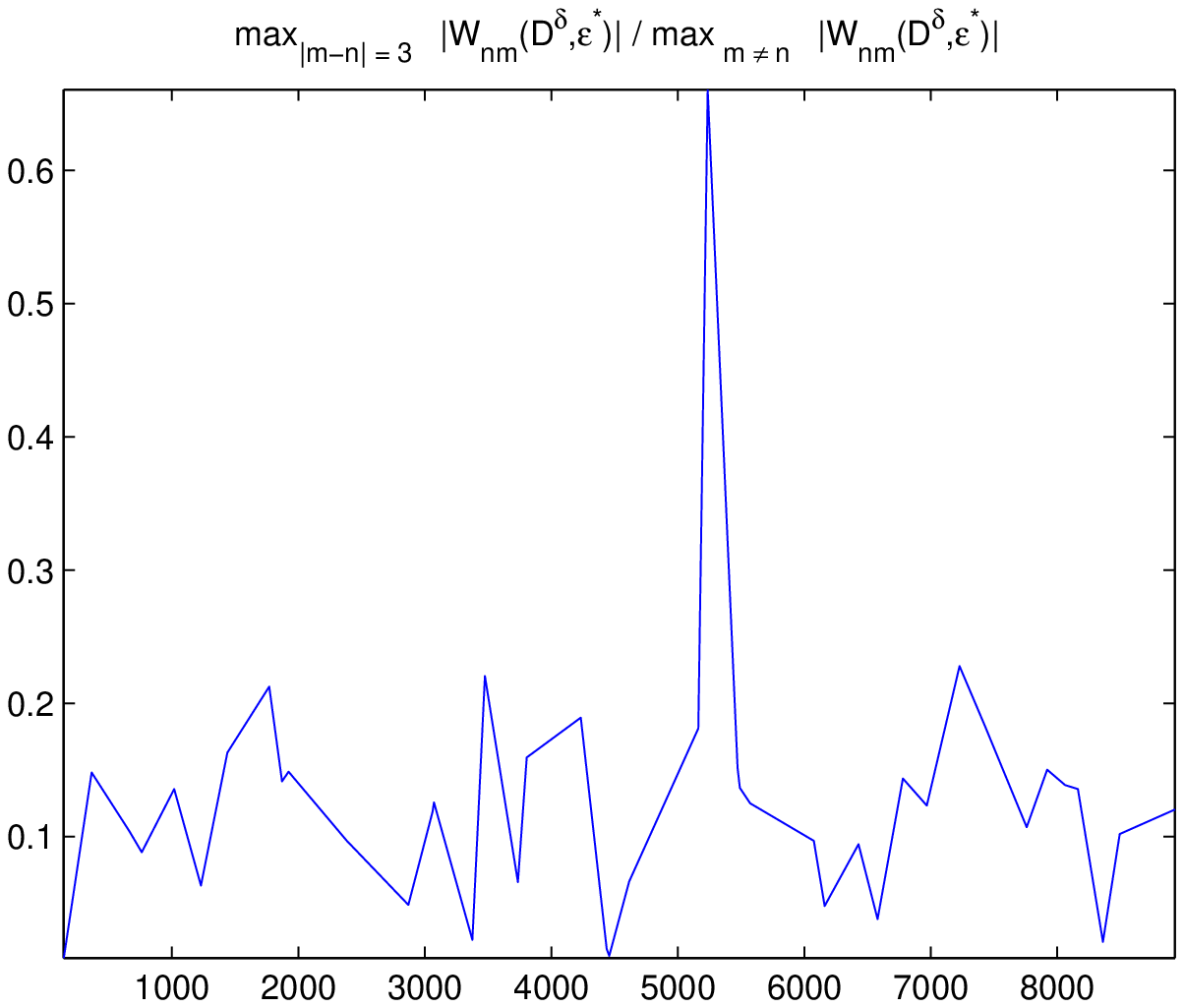}\hfill{}


       \hfill{}\includegraphics[clip,width=0.3\textwidth]{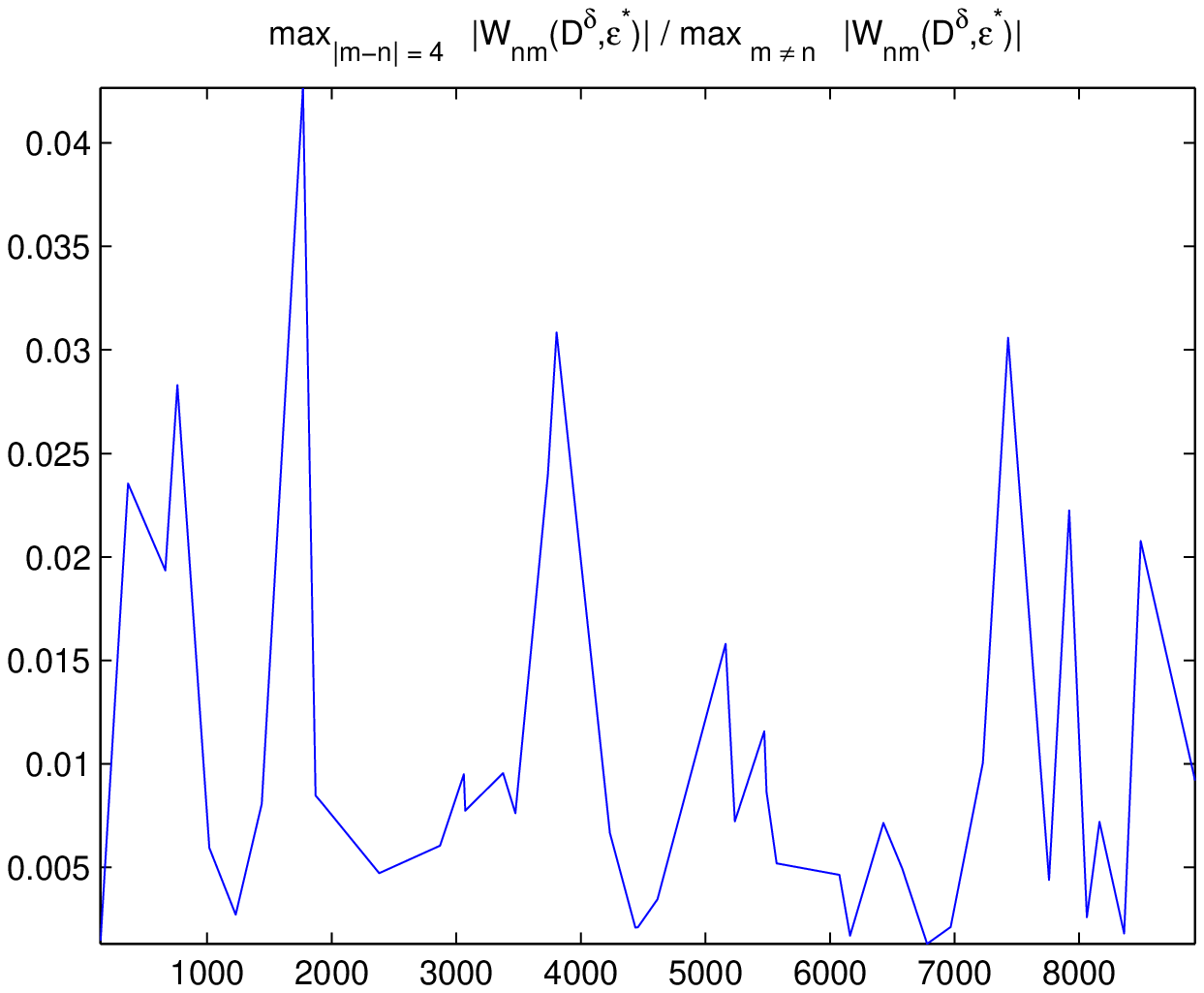}\hfill{}
        \hfill{}\includegraphics[clip,width=0.3\textwidth]{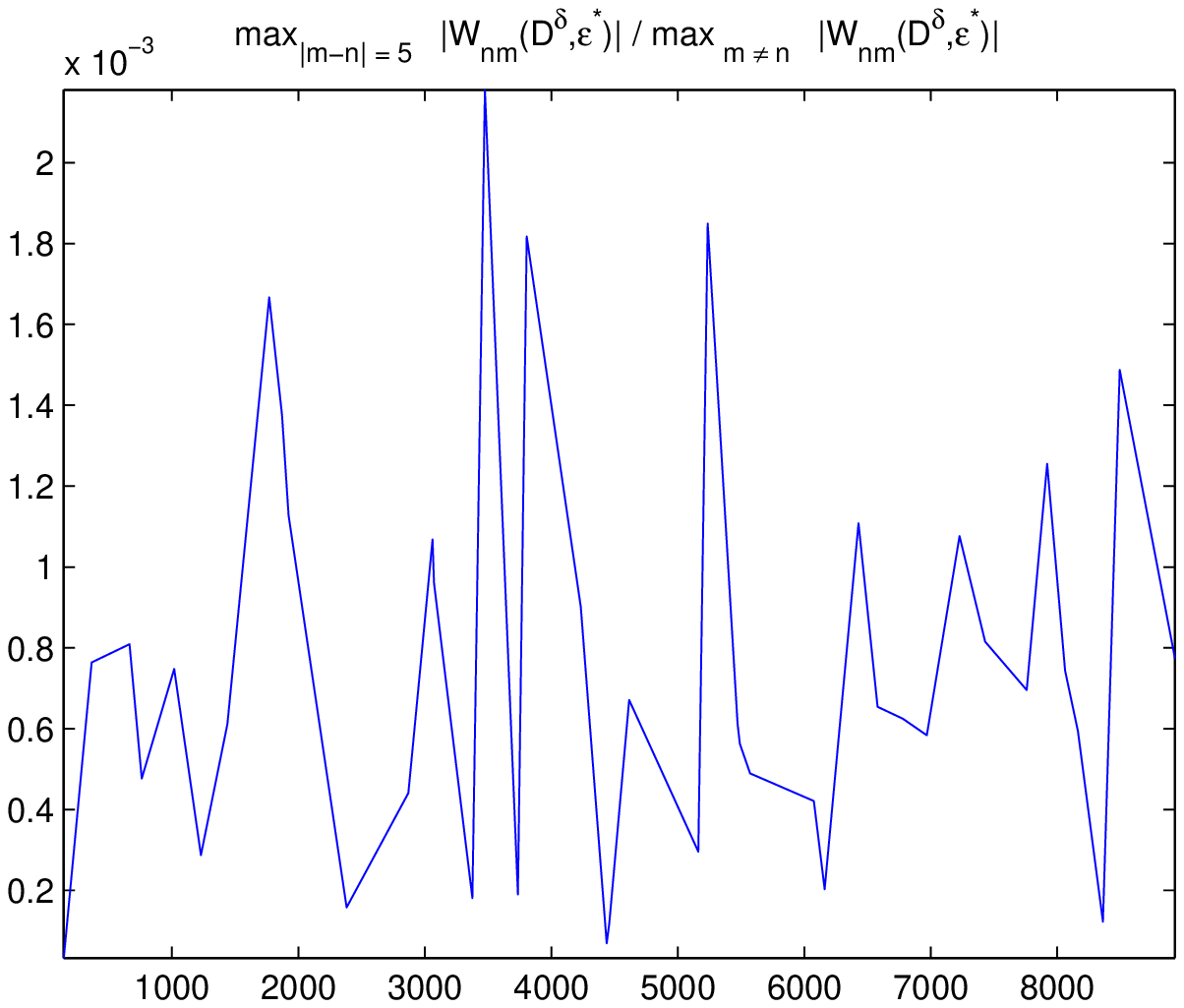}\hfill{}
        \hfill{}\includegraphics[clip,width=0.3\textwidth]{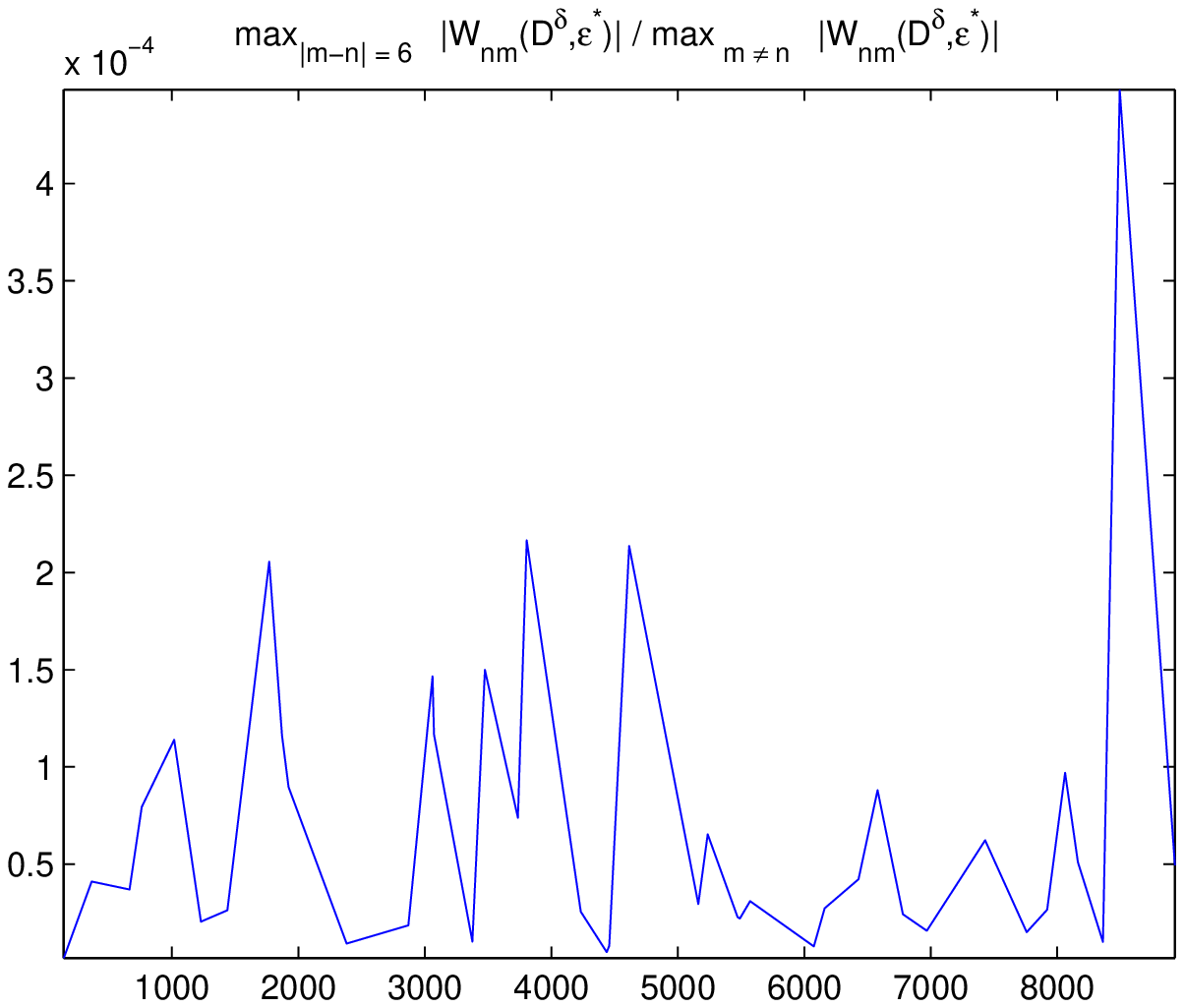}\hfill{}
\caption{
Relative magnitudes of the scattering coefficients in Example 2.
}\label{Example 2_coeff}
\end{figurehere}

We observe from the relative magnitudes shown in the above figures that the scattering coefficients are best-conditioned for inversion when $\varepsilon^* = 5237.1406$. The scattering coefficients of the respective contrast are then plotted in Figure \ref{Example 2_recovery} (left), together with $\varepsilon^* = 143.6006$ corresponding to the first zero of $J_0$ as a comparison. The aforementioned inversion process is then applied with regularization parameter chosen as $\alpha = 1 \times 10^{-6}$. Figure \ref{Example 2_recovery} (middle) and (right) respectively show the magnitude of the recovered Fourier modes and the reconstructed domains. We can see that the shape obtained from $\varepsilon^* = 5237.1406$ is of a more similar shape to the right-angled triangle. Moreover, the scattering coefficients of $\varepsilon^* = 5237.1406$ are large enough for accurate classification.

\begin{figurehere}
\center

        \hfill{}   \hfill{} \hfill{}   \hfill{}  \hfill{}   \hfill{}

         \hfill{}\includegraphics[clip,width=0.3\textwidth]{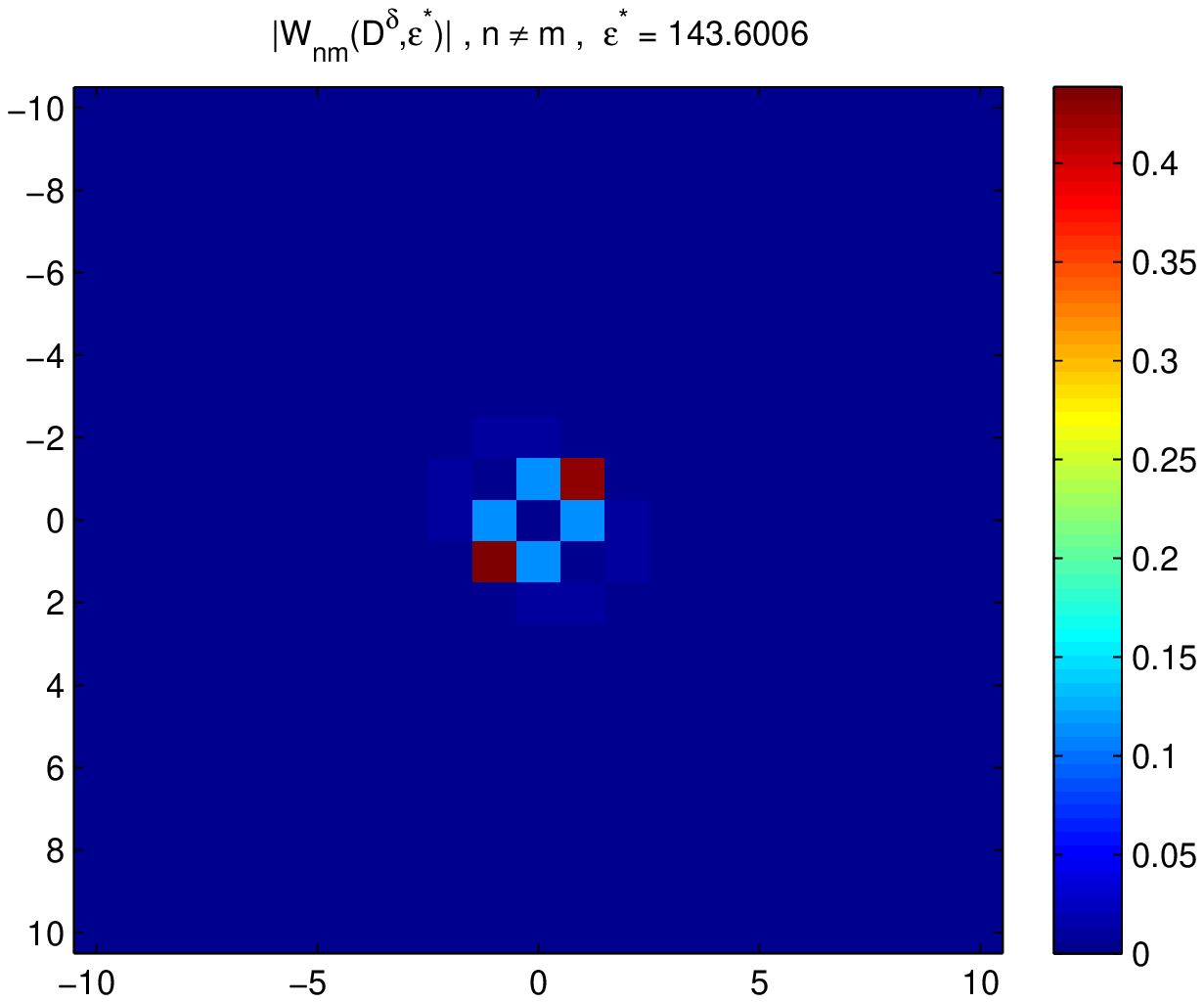}\hfill{}
        \hfill{}\includegraphics[clip,width=0.3\textwidth]{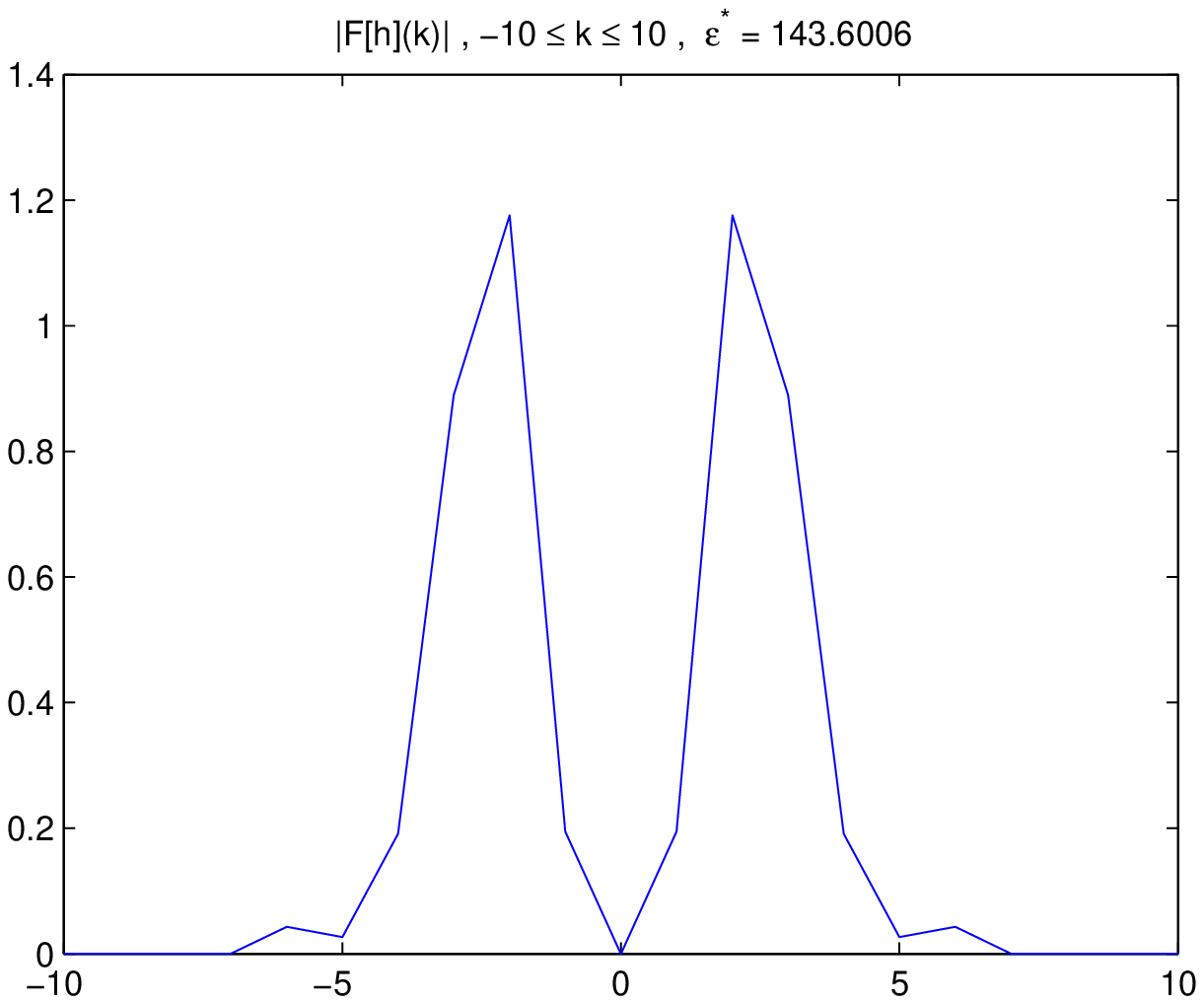}\hfill{}
        \hfill{}\includegraphics[clip,width=0.3\textwidth]{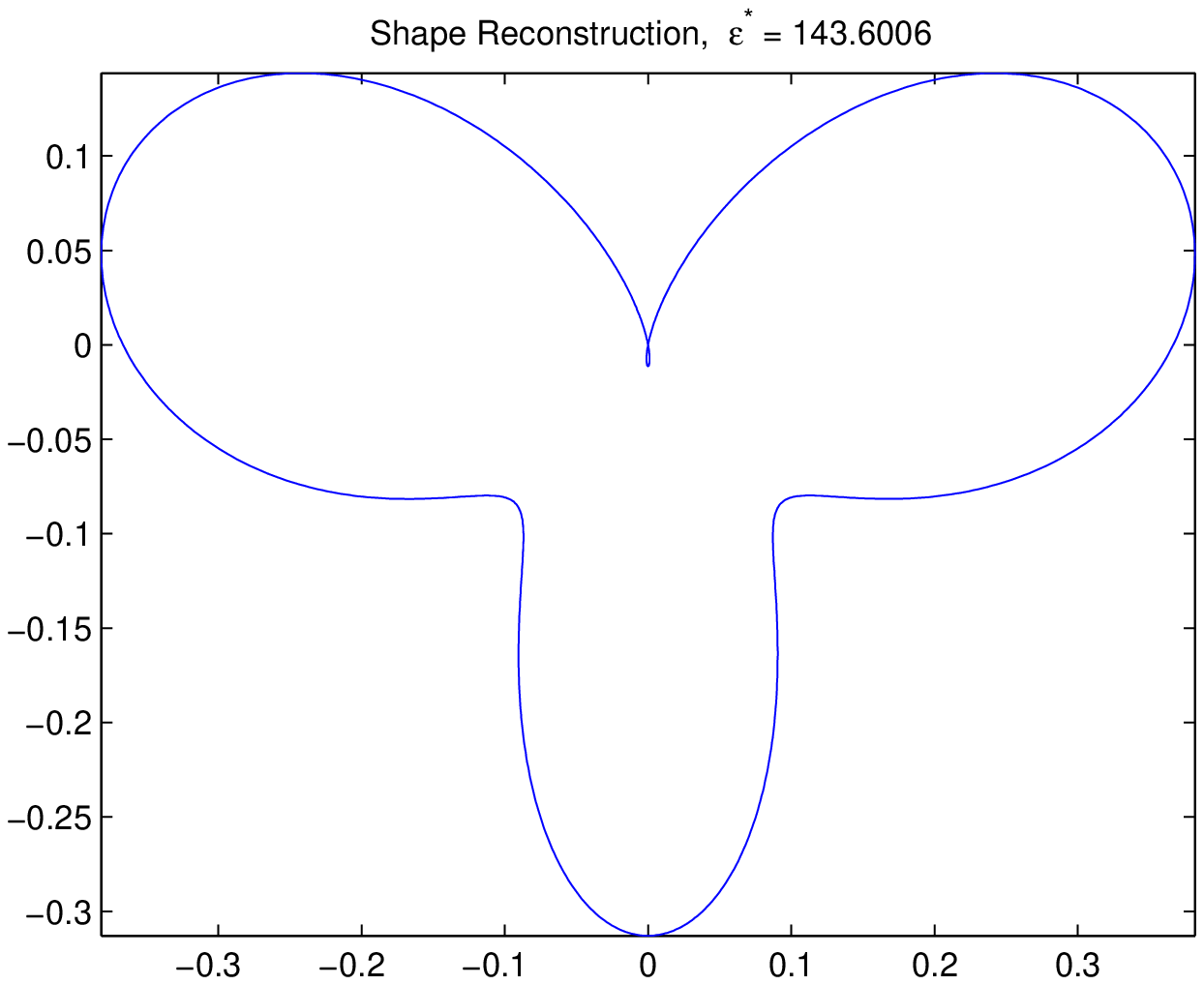}\hfill{}

%
         \hfill{}\includegraphics[clip,width=0.3\textwidth]{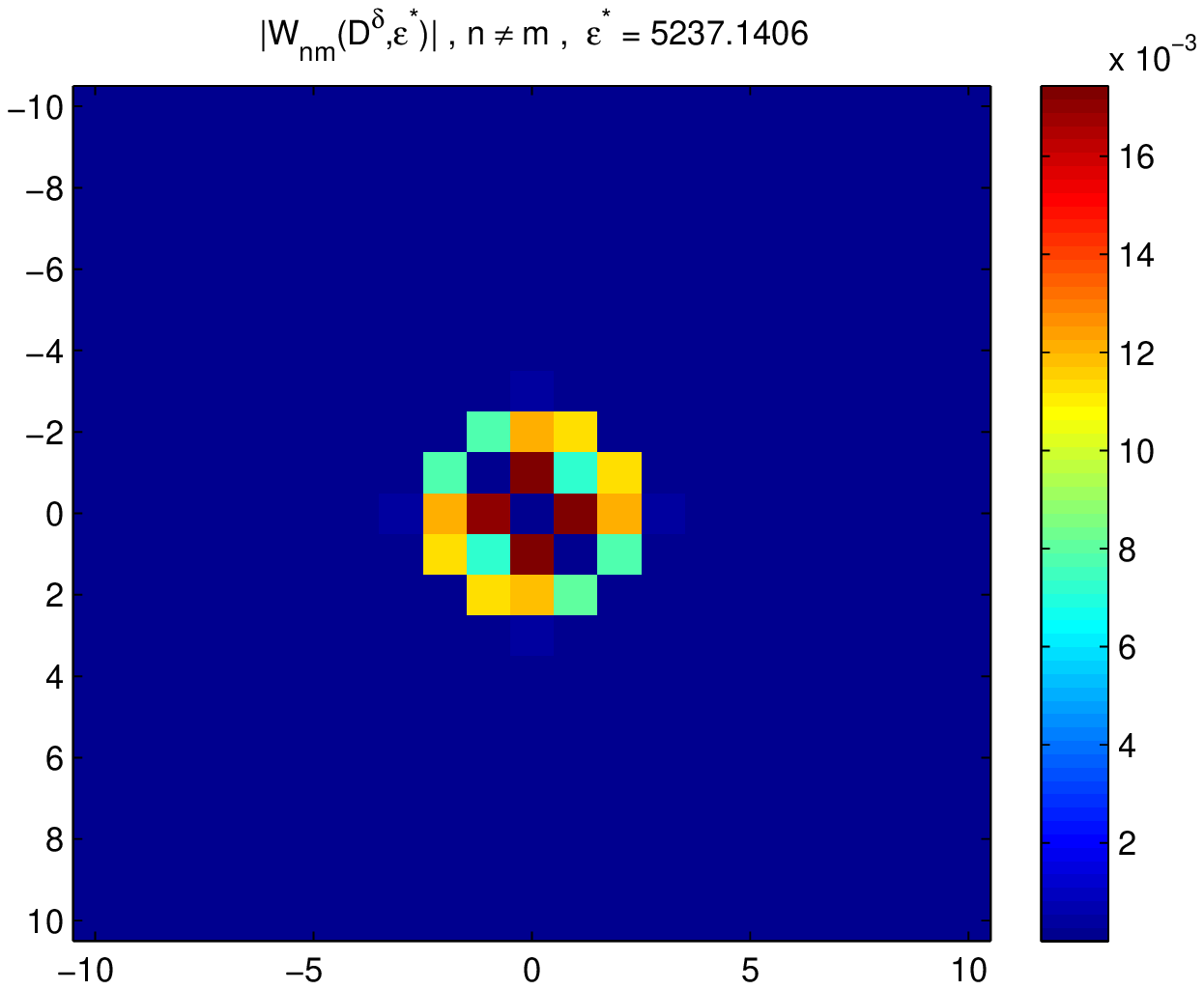}\hfill{}
        \hfill{}\includegraphics[clip,width=0.3\textwidth]{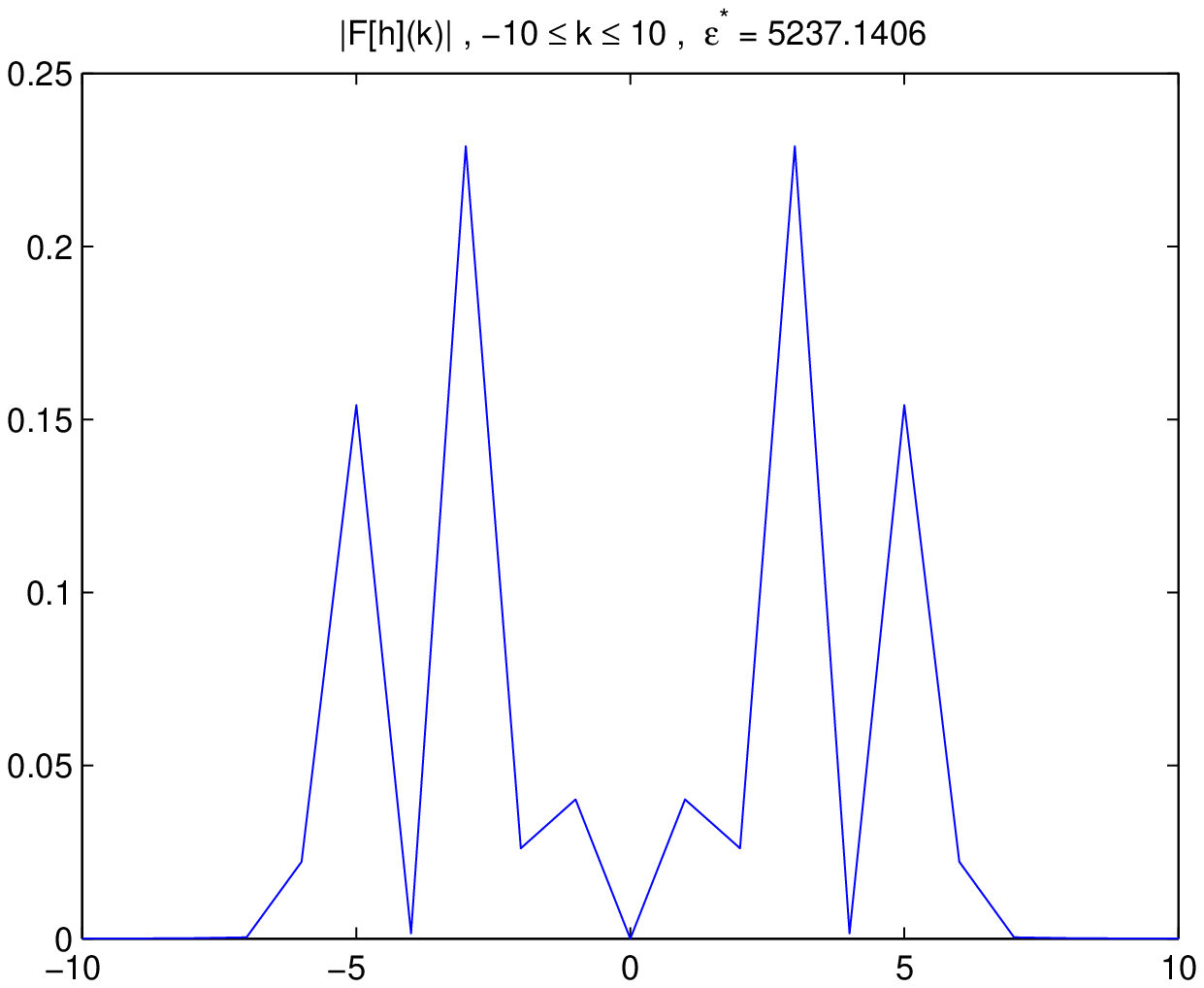}\hfill{}
        \hfill{}\includegraphics[clip,width=0.3\textwidth]{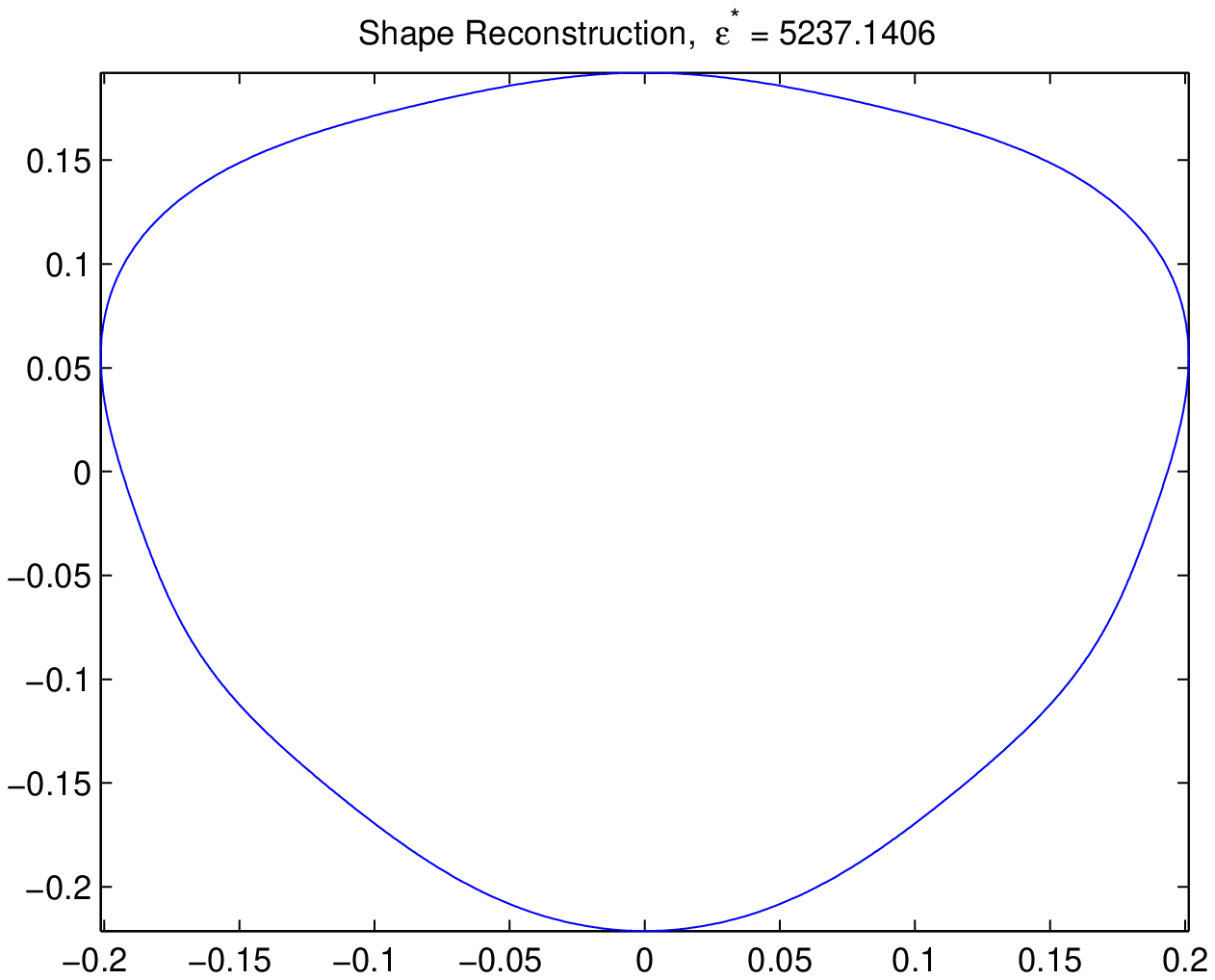}\hfill{}




\caption{
Illustration of super-resolution in Example 2. Left: magnitude of scattering coefficients; middle: magnitude of recovered Fourier coefficients; right: recovered domain.
}\label{Example 2_recovery}
\end{figurehere}

\section{Concluding remarks} \label{conclusion}

In this paper, we have for the first time established a mathematical theory of super-resolution
in the context of imaging high contrast inclusions.
We have found both analytically and numerically that
at some high resonant values of the contrast, super-resolution in reconstructing the shape of the inclusion can be achieved.

Our approach opens many new avenues for mathematical imaging and focusing in resonant media.
Many challenging problems are still to be solved.  It would be very interesting to generalize our approach in order to justify the fact that super-resolution can be achieved using structured light illuminations \cite{structured, structured2}. Another challenging problem is to generalize our approach to electromagnetic and elastic wave imaging problems of high contrast inclusions. This would be the subject of a forthcoming publication.

\end{document}